\newcommand{\matodo}[1]{\todo[color=red!20]{#1}}
\newtheorem{theorem}{Theorem}[section]
\newtheorem{lemma}[theorem]{Lemma}
\newtheorem{meta-theorem}[theorem]{Meta-Theorem}
\newtheorem{corollary}[theorem]{Corollary}
\newtheorem{observation}[theorem]{Observation}
\newtheorem{definition}[theorem]{Definition}
\definecolor{darkgreen}{rgb}{0,0.5,0}
\crefname{theorem}{Theorem}{Theorems}
\Crefname{lemma}{Lemma}{Lemmas}
\Crefname{observation}{Observation}{Observations}
\algnewcommand\algorithmicswitch{\textbf{switch}}
\algnewcommand\algorithmiccase{\textbf{case}}
\newcommand{\eps}{\varepsilon}
\newcommand{\local}{$\mathsf{LOCAL}$\xspace}
\newcommand{\poly}{\operatorname{\text{{\rm poly}}}}
\newcommand{\ceil}[1]{\lceil #1 \rceil}
\newcommand{\paren}[1]{\mathopen{}\left(#1\right)\mathclose{}}
\renewcommand{\paragraph}[1]{\vspace{0.15cm}\noindent {\bf #1}:}
\newcommand{\FullOrShort}{full}
  \newcommand{\fullOnly}[1]{#1}
  \newcommand{\shortOnly}[1]{}
    \newcommand{\fullOnly}[1]{}
    \newcommand{\IncludePictures}[1]{}
\begin{document}

\date{}

\title{Sublogarithmic Distributed Algorithms for Lov{\'a}sz Local Lemma, 
\\and the Complexity Hierarchy}

\author{
	 Manuela Fischer\\
  \small ETH Zurich \\
  \small manuela.fischer@inf.ethz.ch
		\and
 Mohsen Ghaffari\\
  \small ETH Zurich \\
  \small ghaffari@inf.ethz.ch
 }

\maketitle

\setcounter{page}{0}
\thispagestyle{empty}

\begin{abstract}
Locally Checkable Labeling (LCL) problems include essentially all the classic problems of $\mathsf{LOCAL}$ distributed algorithms. In a recent enlightening revelation, Chang and Pettie [arXiv 1704.06297] showed that \emph{any} LCL (on bounded degree graphs) that has an $o(\log n)$-round randomized algorithm can be solved in $T_{LLL}(n)$ rounds, which is the randomized complexity of solving (a relaxed variant of) the Lov\'{a}sz Local Lemma (LLL) on bounded degree $n$-node graphs. Currently, the best known upper bound on $T_{LLL}(n)$ is $O(\log n)$, by Chung, Pettie, and Su [PODC'14], while the best known lower bound is $\Omega(\log\log n)$, by Brandt et al. [STOC'16]. Chang and Pettie conjectured that there should be an $O(\log\log n)$-round algorithm.

\medskip

Making the first step of progress towards this conjecture, and providing a significant improvement on the algorithm of Chung et al. [PODC'14], we prove that $T_{LLL}(n)= 2^{O(\sqrt{\log\log n})}$. Thus, any $o(\log n)$-round randomized distributed algorithm for any LCL problem on bounded degree graphs can be automatically sped up to run in $2^{O(\sqrt{\log\log n})}$ rounds. 

\medskip 
Using this improvement and a number of other ideas, we also improve the complexity of a number of graph coloring problems (in arbitrary degree graphs) from the $O(\log n)$-round results of Chung, Pettie and Su [PODC'14] to $2^{O(\sqrt{\log\log n})}$. These problems include defective coloring, frugal coloring, and list vertex-coloring.
\medskip

\end{abstract}

\newpage

\section{Introduction and Related Work}
The Lov\'{a}sz Local Lemma (LLL), introduced by Erd\H{o}s and Lov\'{a}sz in 1975\cite{LLL}, is a beautiful result which shows that, for a set of ``bad events'' in a probability space that have certain sparse dependencies, there is a non-zero probability that none of them happens. This result has become a central tool in the \emph{probabilistic method}\cite{Alon-Spencer}, when proving that certain combinatorial objects exist. Although the LLL itself does not provide an efficient way for finding these objects, and that remained open for about 15 years, a number of efficient centralized algorithms have been developed for it, starting with Beck's breakthrough in 1991\cite{beck1991LLL}, through \cite{alon1991LLL, molloy1998LLL, czumaj2000new, srinivasan2008improved, moser2009constructive}, and leading to the elegant algorithm of Moser and Tardos in 2010\cite{moser2010constructive}. See also \cite{chandrasekaran2010deterministic, kolipaka2011moser, Harris:2014:CAL:2634074.2634142, harris2013moser, Harris:2016, harris2016algorithmic} for some of the related work on that track.

In contrast, distributed algorithms for LLL and the related complexity are less well-understood. This question has gained an extraordinary significance recently, due to revelations that show that LLL is a ``complete'' problem for sublogarithmic-time problems. Next, we first overview the concrete statement of the LLL, and then discuss what is known about its distributed complexity, and what its special significance is for distributed algorithms. Then we proceed to presenting our contributions. \medskip

\subsection{The LLL and its Special Role in Distributed Algorithms}
\vspace{-5pt}
\paragraph{The Lov\'{a}sz Local Lemma} Consider a set $\mathcal{V}$ of independent random variables, and a family $\mathcal{X}$ of $n$ (bad) events on these variables. Each event $A \in \mathcal{X}$ depends on some subset $\text{vbl}(A)\subseteq \mathcal{V}$ of variables. Define the dependency graph $G_{\mathcal{X}}= (\mathcal{X}, \{ (A, B)\;|\; vbl(A)\cap vbl(B)\neq \emptyset\})$ that connects any two events which share at least one variable. Let $d$ be the maximum degree in this graph, i.e., each event $A \in \mathcal{X}$ shares variables with at most $d$ other events $B\in \mathcal{X}$. Finally, define $p=\max_{A\in \mathcal{X}} \Pr[A]$. The Lov\'{a}sz Local Lemma\cite{LLL} shows that $\Pr[\cap_{A\in \mathcal{X}} \bar{A}] >0$, under the \emph{LLL criterion} that $epd \leq 1$. Intuitively, if a local union bound is satisfied around each node in $G_{\mathcal{X}}$, with some slack, then there is a positive probability to avoid all bad events.



\medskip
\noindent\textbf{What's Known about Distributed LLL?} In the standard distributed formulation of LLL, we consider $\mathsf{LOCAL}$-model\cite{linial1987LOCAL, peleg:2000} algorithms that work on the $n$-node dependency graph $G_{\mathcal{X}}$, where per round each node can send a message to each of its neighbors\footnote{One can imagine a few alternative graph formulations, all of which turn out to be essentially equivalent in the $\mathsf{LOCAL}$ model, up to an $O(1)$ overhead in complexity.}.

Moser and Tardos\cite{moser2010constructive} provide an $O(\log^2 n)$-round randomized distributed algorithm. Chung, Pettie, and Su\cite{chung2014LLL} presented an $O(\log n \cdot \log^2 d)$-round algorithm, which was later improved slightly to $O(\log n \cdot \log d)$ \cite{Ghaffari-MIS}. Perhaps more importantly, under a modestly stronger criterion that $epd^2 < 1$, which is satisfied in most of the standard applications, they gave an $O(\log n)$-round algorithm\cite{chung2014LLL}. This remains the best known distributed algorithm. On the other hand, Brandt et al.\cite{brandt2016lower} showed a lower bound of $\Omega(\log \log n)$ rounds, which holds even if a much stronger LLL criterion of $p2^d<1$ is satisfied. Even under this exponentially stronger criterion, the best known upper bound changed only slightly to $O(\log n/ \log\log n)$\cite{chung2014LLL}. 

\medskip
\paragraph{Completeness of LLL for Sublogarithmic Distributed Algorithms} Very recently, Chang and Pettie\cite{chang2017time} showed that any $o(\log n)$-round randomized algorithm $\mathcal{A}$ for any Locally Checkable Labeling (LCL) problem $\mathcal{P}$---a problem whose solution can be checked in $O(1)$ rounds\cite{naor1995localicty}, which includes all the classic local problems---on bounded degree graphs can be transformed to an algorithm with complexity $O(T_{LLL}(n))$. Here, $T_{LLL}(n)$ denotes the randomized complexity for solving LLL on $n$-node bounded-degree graphs, with high probability. 

In a nutshell, their idea is to ``lie'' to the algorithm $\mathcal{A}$ and say that the network size is some much smaller value $n^* \ll n$. This deceived algorithm $\mathcal{A}$ may have a substantial probability to fail, creating an output that violates the requirements of the LCL problem $\mathcal{P}$ somewhere. However, the probability of failure in each local neighborhood is at most $1/n^{*}$. Choosing $n^{*}$ a large enough constant, depending on the complexity of $\mathcal{A}$, the algorithm $\mathcal{A}$ provides an LLL system---where we have one bad event for violation of each local requirement of $\mathcal{P}$--- that satisfies the criterion $pd^c<1$ for some (desirably large) constant $c\geq 1$. Hence, we can solve this LLL system and thus obtain a solution for the original LCL problem $\mathcal{P}$ in $O(T_{LLL}(n))$ time. 

This result implies that LLL is important not only for a few special problems, but in fact for essentially \emph{all} sublogarithmic-time distributed problems. Due to this remarkable role, Chang and Pettie state that ``\emph{understanding the distributed complexity of the LLL is a significant open problem.}'' Furthermore, although a wide gap between the best upper bound $O(\log n)$\cite{chung2014LLL} and lower bound $\Omega(\log\log n)$\cite{brandt2016lower} persists, they conjecture the latter to be tight:

\medskip
\noindent\textbf{Conjecture [Chang, Pettie \cite{chang2017time}]} There exists a sufficiently large constant $c$ such that the distributed LLL problem can be solved in $O(\log\log n)$ time on bounded degree graphs, under the symmetric LLL criterion $pd^c<1$.\footnote{This statement, as is, has a small imprecision: one should assume either that $d\geq 2$, in which case $pd^c<1$ can be replaced with $p(ed)^{c'}<1$ for some other constant $c'$, or that $pd^c<1/2$. Otherwise, two events of head or tail for a fair coin have $p=1/2$ and $d=1$, thus $pd^c< 1$, but one cannot avoid both.}


\subsection{Our Contributions}
\vspace{-7pt}
\paragraph{Faster Distributed LLL}
We make a significant step of progress towards this conjecture:  

\begin{restatable}{theorem}{RandLLLL}
\label{RANDLLL}
There is a $2^{O(\sqrt{\log\log n})}$-round randomized distributed algorithm that, with high probability\footnote{As standard, the phrase \emph{with high probability} (w.h.p.) indicates that an event has probability at least $1-n^{-c}$, for a sufficiently large constant $c$.}, solves the LLL problem with degree at most $d=O(\log^{1/5}\log n)$, under a symmetric polynomial LLL criterion $p(ed)^{32}<1$.\footnote{We remark that we did not try to optimize the constants.}
\end{restatable}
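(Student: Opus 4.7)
Following the standard shattering paradigm for sublogarithmic distributed algorithms, I would split the algorithm into a short randomized shattering phase followed by a deterministic post-shattering phase on small components. The final cost will be dominated by the deterministic part, so the shattering only needs to be fast enough that it fits inside the $2^{O(\sqrt{\log\log n})}$ budget.

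\emph{Phase I: randomized shattering.} Run a parallel Moser--Tardos style algorithm for a carefully chosen number $T = \operatorname{poly}(\log d)$ of super-rounds: in each super-round, tentatively resample the variables touching every currently-violated event, using an MIS-type tie-breaker when two violated events share a variable. Using a witness-tree analysis in the spirit of Moser--Tardos, I would argue that the slack in the criterion $p(ed)^{32}<1$ reduces the probability that any fixed event $A$ is still violated at the end of Phase I to $q \le 1/(ed)^{c}$ for a large constant $c$. Because $d = O(\log^{1/5}\log n)$, the number of rounds used is $\operatorname{poly}(\log\log\log n) = 2^{o(\sqrt{\log\log n})}$, so this phase fits comfortably in the budget.

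\emph{Phase II: shattering claim and deterministic post-processing.} Show next that w.h.p.\ every connected component of still-violated events in $G_{\mathcal{X}}$ has size at most $N = O(\log n)$. The argument is a $4$-tree analysis: any connected set of $s$ surviving events in $G_{\mathcal{X}}$ contains a subset of size $\Omega(s/d^{O(1)})$ at pairwise distance $\ge 4$ in $G_{\mathcal{X}}$, whose survival events depend on disjoint variable sets and are therefore mutually independent. The number of such rooted trees of size $s$ is $\le n \cdot d^{O(s)}$, while the joint survival probability is $\le q^{\Omega(s/d^{O(1)})}$. Choosing $c$ large enough---exactly what the exponent $32$ and the restriction on $d$ buy us---the union bound makes any component of size $\Omega(\log n)$ disappear with probability $1 - 1/\operatorname{poly}(n)$. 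Now invoke the Panconesi--Srinivasan deterministic network decomposition on each residual component of size $N$; this produces, in $2^{O(\sqrt{\log N})} = 2^{O(\sqrt{\log\log n})}$ rounds, an $(O(\log N),\, 2^{O(\sqrt{\log N})})$-decomposition. Sweeping through the $O(\log N)$ color classes, one gathers each cluster to one node and resolves its residual LLL instance by brute force (the induced instance still satisfies an LLL criterion locally), paying $2^{O(\sqrt{\log\log n})}$ rounds per color class and $O(\log\log n)$ color classes, well within budget.

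\textbf{Main obstacle.} The hardest step is the shattering analysis in Phase II. Parallel super-rounds of resampling create non-trivial correlations among the survival events of different bad events that share variables, so the witness-tree bound cannot be applied naively on $G_{\mathcal{X}}$ itself; one must pass to $G_{\mathcal{X}}^{4}$ (or a similar power) to recover independence of the survival indicators, paying a factor $d^{O(1)}$ in both the tree size and the enumeration count. Balancing the three quantities---survival probability $(ed)^{-c}$, blow-up $d^{O(1)}$ from moving to $G_{\mathcal{X}}^{4}$, and tree-enumeration count $d^{O(s)}$---simultaneously is what forces the specific trade-off between the criterion exponent $32$ and the tolerated degree $d = O(\log^{1/5}\log n)$, and is where I expect the bulk of the technical work to concentrate.
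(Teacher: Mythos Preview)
Your proposal has a real gap in Phase~II, and it stems from a key idea that is entirely absent: \emph{bootstrapping}. To ``sweep through the color classes and brute-force each cluster,'' you must argue that after fixing the variables in one cluster, the boundary events---now partially fixed---can still be completed later. This is exactly the invariant the paper's deterministic LLL (\Cref{DETLLL}) maintains, and it requires the criterion $p(ed)^{\lambda}<1$ with $\lambda$ equal to the number of color classes. Panconesi--Srinivasan on a component of size $N=O(\log n)$ yields $2^{O(\sqrt{\log\log n})}$ color classes (not $O(\log\log n)$ as you wrote), so you would need $p(ed)^{2^{O(\sqrt{\log\log n})}}<1$, far beyond the exponent $32$ you are given. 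Conversely, if you match the exponent and take only $O(1)$ color classes via the ball-carving decomposition of \Cref{NWD}, each cluster has diameter $(\log n)^{\Theta(1)}$ and the whole post-shattering phase costs $(\log n)^{\Theta(1)}$ rounds. This is precisely the barrier at which the paper's \emph{base} algorithm (\Cref{thm:baseLLL}) stops; shattering plus deterministic clean-up alone cannot reach $2^{O(\sqrt{\log\log n})}$.

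The paper escapes this barrier by a second stage your plan does not have. It runs the base algorithm while \emph{lying} that the network has only $n^{*}=\log n$ nodes; each local failure of the deceived algorithm then has probability at most $1/n^{*}$, and two failures at distance exceeding $2T_{n^{*},d}+1$ are independent. This manufactures a \emph{new} LLL with $p'\le 1/\log n$ and dependency degree $d'=d^{O(d^{2}+\log^{1/4}\log n)}$, hence criterion exponent $\lambda'=\Theta(\sqrt{\log\log n})$; one more invocation of the base algorithm on this new system now legitimately fits in $2^{O(\sqrt{\log\log n})}$ rounds. The hypothesis $d=O(\log^{1/5}\log n)$ in the theorem is exactly what keeps $d'$ small enough for this step; it has nothing to do with balancing a $4$-tree enumeration as you conjecture. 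A secondary issue: after parallel Moser--Tardos all variables are set, so there is no well-defined ``residual LLL instance'' on which to brute-force---resampling a cluster may break adjacent currently-satisfied events whose conditional probabilities are uncontrolled. The paper's shattering phase (\Cref{RandomPartialSetting}) avoids this by using a Molloy--Reed-style partial sampling that freezes variables and keeps every event's conditional probability below $\sqrt{p}$, yielding a genuine residual LLL on the unset variables.
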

This improves over the $O(\log n)$-round algorithm of Chung et al.\cite{chung2014LLL}. We note that even under a significantly stronger exponential LLL criterion --- formally requiring $4ep 2^d d^4<1$ --- the best known round complexity was $O(\log n/\log\log n)$ \cite{chung2014LLL}. 
Furthermore, we note that a key ingredient in developing \Cref{RANDLLL} is a deterministic distributed algorithm for LLL, which we present in \Cref{DETLLL}. To the best of our knowledge, this is the first (non-trivial) deterministic distributed LLL algorithm. In fact, we believe that any conceivable future improvements on \Cref{RANDLLL} may have to improve on this deterministic part. 


Moreover, our method provides some further supporting evidence for the conjecture of Chang and Pettie. In particular, if one finds a $\poly \log n$-round 
deterministic algorithm for $\big(O(\log n), O(\log n)\big)$ network decomposition\cite{panconesi1992improved} --- a central problem that has remained open for a quarter century, but is often perceived as likely to be true --- then, combining that with our method would prove $T_{LLL}(n)=\poly(\log \log n)$. 
\medskip

\paragraph{A Gap in the Randomized Distributed Complexity Hierarchy}
Putting \Cref{RANDLLL} with \cite[Theorem 6]{chang2017time}, we get the following automatic speedup result:

\begin{restatable}{corollary}{Speedupp}
Let $\mathcal{A}$ be a randomized $\mathsf{LOCAL}$ algorithm that solves some LCL problem $\mathcal{P}$ on bounded degree graphs, w.h.p., in $o(\log n)$ rounds. Then, it is possible to transform $\mathcal{A}$ into a new randomized $\mathsf{LOCAL}$ algorithm $\mathcal{A}'$ that solves $\mathcal{P}$, w.h.p., in $2^{O(\sqrt{\log\log n})}$ rounds. 
 \end{restatable}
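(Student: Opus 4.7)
The plan is to compose \Cref{RANDLLL} with the black-box reduction of Chang and Pettie \cite[Theorem 6]{chang2017time}, which is exactly the automatic-speedup mechanism sketched in the excerpt. Given a randomized algorithm $\mathcal{A}$ that solves an LCL problem $\mathcal{P}$ on bounded-degree graphs in $T(n) = o(\log n)$ rounds with high probability, I would first ``lie'' to $\mathcal{A}$ about the network size, running it as if the graph had $n^{*}$ nodes for some sufficiently large constant $n^{*}$ depending only on $\mathcal{P}$ and the constants hidden in the LLL criterion of \Cref{RANDLLL}. Because $T(n^{*}) = O(1)$ for any constant $n^{*}$, every node's tentative output depends on a ball of only $O(1)$ radius in the graph and in the shared random bits.

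Next, I would set up the natural LLL instance on the original $n$-node graph: place one bad event $A_v$ per local constraint of $\mathcal{P}$ (say, centered at each vertex $v$), whose satisfaction can be checked in $O(1)$ rounds by definition of an LCL. Since $\mathcal{A}$ is guaranteed correct with probability at least $1 - 1/n^{*}$ at any individual neighborhood, we obtain $p \le 1/n^{*}$. Because both $\mathcal{A}$'s output map and the LCL check touch only an $O(1)$-radius ball in a bounded-degree graph, two bad events share random variables only when their centers lie within a constant distance, so the dependency degree is some absolute constant $d = O(1)$. Choosing $n^{*}$ large enough forces $p\,(ed)^{32} < 1$, which is precisely the polynomial LLL criterion required by \Cref{RANDLLL}; the degree condition $d \le O(\log^{1/5}\log n)$ is automatic because $d$ is constant.

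Finally, I would invoke \Cref{RANDLLL} to find, in $2^{O(\sqrt{\log\log n})}$ rounds with high probability, an assignment of the underlying random bits that avoids every bad event; plugging this assignment back into the (now deterministic) execution of $\mathcal{A}$ produces a valid solution to $\mathcal{P}$ on all of $G$. The total complexity of the resulting $\mathcal{A}'$ is then $T(n^{*}) + 2^{O(\sqrt{\log\log n})} = 2^{O(\sqrt{\log\log n})}$, as claimed.

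The only real obstacle is parameter matching: we must verify that the LLL instance arising from the Chang--Pettie reduction actually satisfies the polynomial criterion $p(ed)^{32} < 1$ and not just a weaker one, and that $d$ falls in the regime where \Cref{RANDLLL} applies. Both points are taken care of by their reduction together with bounded degree: the exponent on $d$ in the produced criterion can be pushed to any desired constant $c$ by enlarging $n^{*}$, while $d$ remains $O(1)$. With that observation, the corollary is a one-line composition of \cite[Theorem~6]{chang2017time} and \Cref{RANDLLL}.
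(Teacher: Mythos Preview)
Your proposal is correct and follows exactly the approach the paper takes: the corollary is stated as an immediate consequence of composing \Cref{RANDLLL} with \cite[Theorem~6]{chang2017time}, and your expansion of that composition---running $\mathcal{A}$ with a lied network size $n^*$, obtaining an LLL instance with constant dependency degree $d$ and $p\le 1/n^*$, choosing $n^*$ large enough to meet the criterion $p(ed)^{32}<1$, and then invoking \Cref{RANDLLL}---is precisely the argument sketched in the paper's introduction. There is nothing to add.
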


Using a similar method, and our deterministic LLL algorithm (\Cref{DETLLL}), we obtain the following corollary, the proof of which appears in \Cref{app:GeneralLLLAlgo}. This corollary shows that any $o(\log\log n)$-round randomized algorithm for an LCL problem on bounded degree graphs can be improved to a deterministic $O(\log^* n)$-round algorithm. This result seems to be implicit in the recent work of Chang, Kopelowitz, and Pettie \cite{chang2016exponential}, though with a quite different proof, and it can be derived from \cite[Corollary 3]{chang2016exponential} and \cite[Theorem 3]{chang2016exponential}.
\begin{corollary}
\label[corollary]{crl:loglogTologStar}
Let $\mathcal{A}$ be a randomized $\mathsf{LOCAL}$ algorithm that solves some LCL problem $\mathcal{P}$ on bounded degree graphs, w.h.p., in $o(\log\log n)$ rounds. Then, it is possible to transform $\mathcal{A}$ into a new deterministic $\mathsf{LOCAL}$ algorithm $\mathcal{A}'$ that solves $\mathcal{P}$ in $O(\log^* n)$ rounds. 
\end{corollary}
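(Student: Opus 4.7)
The plan is to combine the Chang--Pettie ``deceive-the-algorithm'' trick with our deterministic LLL algorithm (\Cref{DETLLL}), choosing the fake network size $n^*$ to be a sufficiently large \emph{constant} so that every round count in sight collapses to either $O(1)$ or $O(\log^* n)$.

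First, I would run $\mathcal{A}$ on the actual graph $G$ but feed it $n^*$ as the input-size parameter, where $n^*$ is a large constant depending only on $\Delta$ and on the constant in the LLL criterion required by \Cref{DETLLL}. The execution takes $T_{\mathcal{A}}(n^*) = O(1)$ rounds, and the high-probability guarantee of $\mathcal{A}$ ensures that the probability that its output violates any particular local $\mathcal{P}$-constraint is at most $(n^*)^{-\Omega(1)}$, which can be driven below any preset constant threshold by taking $n^*$ large. Crucially, because $T(n)=o(\log\log n)$, such a constant $n^*$ is guaranteed to lie in the high-probability regime of $\mathcal{A}$.

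Next, I would view every potential violation as a bad event of an LLL system on $G$: for each locally checkable $\mathcal{P}$-constraint centered at a vertex $v$, create a bad event $A_v$ equal to ``the output of the deceived $\mathcal{A}$ violates the constraint at $v$''. Since $\mathcal{A}$ examined only a constant-radius neighborhood of random bits around $v$, two events $A_u$ and $A_v$ can share input randomness only when $u$ and $v$ lie within constant distance in the bounded-degree graph; therefore the dependency degree is $d=O(1)$. With $p\leq(n^*)^{-\Omega(1)}$ and $d=O(1)$, picking $n^*$ large enough yields a polynomial LLL criterion such as $p(ed)^{32}<1$ with arbitrarily large slack, so the system is within the scope of \Cref{DETLLL}.

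Finally, I would apply \Cref{DETLLL} to this LLL instance. Because the dependency graph has \emph{constant} degree and the criterion holds with large slack, resolving it essentially reduces to computing a distance-$O(1)$ coloring of the dependency graph, which costs $O(\log^* n)$ rounds on a bounded-degree graph via Linial's coloring, followed by $O(1)$ rounds of greedy local correction per color class. Composing the $O(1)$-round deceived execution of $\mathcal{A}$ with this $O(\log^* n)$-round deterministic LLL postprocessing yields the desired deterministic $O(\log^* n)$-round algorithm for $\mathcal{P}$. The main subtlety is confirming that \Cref{DETLLL}, when specialized to the constant-$d$ strong-criterion regime, really completes in $O(\log^* n)$ rounds rather than some larger generic bound; if the stated bound of \Cref{DETLLL} is weaker, one must open its argument (or substitute the direct coloring-plus-greedy procedure sketched above) to exhibit the $O(\log^* n)$ dependence when $d=O(1)$.
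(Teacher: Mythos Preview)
Your approach matches the paper's: deceive $\mathcal{A}$ with a constant $n^*$, cast the local failures as an LLL with constant dependency degree $d'$, compute a proper coloring of the square of the dependency graph in $O(\log^* n)$ rounds, and run \Cref{DETLLL} on top of that coloring viewed as a $((d')^2+1,0)$ network decomposition. Two points need tightening.

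The exponent you must actually verify is not $32$ but $\lambda=(d')^2+1$: the coloring-plus-greedy route (which is exactly the second clause of \Cref{DETLLL}) processes $(d')^2+1$ color classes, and each class consumes one factor of $ed'$ in the criterion. With $p=1/n^*$ and $d'=d^{2T(n^*)+1}$ for a fixed constant $d$, the required inequality is $\frac{1}{n^*}\,(ed')^{(d')^2+1}<1$. This is precisely where the hypothesis $T(n)=o(\log\log n)$ enters: it forces $d'=d^{o(\log\log n^*)}=(\log n^*)^{o(1)}$, so that $(ed')^{(d')^2+1}\le(\sqrt{\log n^*})^{\sqrt{\log n^*}}<n^*$ for large enough $n^*$. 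Your sentence placing the role of $o(\log\log n)$ in the ``high-probability regime'' mislocates it; the w.h.p.\ guarantee already gives $p\le 1/n^*$ for \emph{every} $n^*$, and it is the running-time bound that keeps $d'$ small enough for this exponential-in-$d'$ criterion to hold. A mere $p(ed')^{32}<1$ would not suffice, since a $(32,O(1))$ network decomposition of the dependency graph is not obtainable in $O(\log^* n)$ deterministic rounds.

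As you correctly anticipate at the end, the generic bound in \Cref{DETLLL} is far too weak here; one must invoke its second clause with the coloring supplying $\gamma=0$, which is exactly what the paper does.
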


\smallskip
\paragraph{Faster Distributed Algorithms for Graph Colorings via LLL}
For some distributed graph problems on bounded degree graphs, we can immediately get faster algorithms by applying \Cref{RANDLLL}. 
However, there are two quantifiers which appear to limit the applicability of \Cref{RANDLLL}: (L1) it requires a stronger form of the LLL criterion, concretely needing $p(ed)^{32}<1$ instead of $epd\leq 1$; (L2) it applies mainly to bounded degree graphs.

We explain how to overcome these two limitations in most of the LLL-based problems studied by Chung, Pettie, and Su\cite{chung2014LLL}. Regarding limitation (L1), we show that even though in many coloring problems the direct LLL formulation 
would not satisfy the polynomial criterion $p(ed)^{32}<1$, we can still solve the problem, through a number of iterations of partial colorings, each of which satisfies this stronger LLL criterion. Regarding limitation (L2), we explain that in most of these coloring problems, the first step of our LLL algorithm, which is its only part that relies on bounded degrees, can be replaced by a faster randomized step, suited for that coloring. 

The end results of our method include algorithms with round complexity $2^{O(\sqrt{\log \log n})}$ for a number of coloring problems, improving on the corresponding $O(\log n)$-round algorithms of Chung, Pettie, and Su\cite{chung2014LLL}: defective coloring, frugal coloring, and list vertex-coloring, which are presented, respectively, in \Cref{sec:DefectiveColoring}, \Cref{sec:FrugalColoring}, and \Cref{app:listColoring}.
%




\vspace{-5pt}
\section{Preliminaries}\label{sec:prelim}
\vspace{-5pt}
\subsection{Network Decompositions}\label{subsec:NetDecomp}
Roughly speaking, a \emph{network decomposition}\cite{awerbuch1989network, panconesi1992improved} partitions the nodes into a few blocks, each of which is made of a number of low-diameter connected components. More formally, the definition is as follows:

\begin{definition}[Network Decomposition]
Given a graph $G=(V, E)$, a partition of the nodes $V$ into $C$ vertex-disjoint blocks $V_1$, $V_2$, \dots, $V_C$ is a $(C,D)$ network decomposition if in each block's induced subgraph $G[V_i]$ each connected component has diameter at most $D$. 
\end{definition}

\begin{lemma}[The Network Decomposition Algorithm]\label[lemma]{NWD} Given an $n$-node network $G=(V, E)$, there is a deterministic distributed algorithm that computes a $(\lambda, n^{1/\lambda}\cdot\log n)$ network decomposition of $G$ in $\lambda \cdot n^{1/\lambda}\cdot 2^{O(\sqrt{\log n})}$ rounds.
\end{lemma}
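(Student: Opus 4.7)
The plan is to combine the classical Linial-Saks style ball-growing technique with the Panconesi-Srinivasan deterministic network decomposition, which yields a $(2^{O(\sqrt{\log n})}, 2^{O(\sqrt{\log n})})$-decomposition in $2^{O(\sqrt{\log n})}$ rounds. Panconesi-Srinivasan will serve as a fast deterministic subroutine for breaking symmetry on power graphs.

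I would run $\lambda$ phases producing blocks $V_1, \dots, V_\lambda$ in order. Writing $U_i$ for the set of still-unclustered nodes at the start of phase $i$, phase $i$ proceeds as follows. First, it selects a maximal set $S_i \subseteq U_i$ of cluster centers whose pairwise $G$-distance exceeds $2r$, where $r = \Theta(n^{1/\lambda} \log n)$; this is an MIS problem on $G^{2r}[U_i]$, which I would solve deterministically by invoking Panconesi-Srinivasan on $G^{2r}$. Second, around each center $v \in S_i$ I would grow a ball inside $U_i$ using a Linial-Saks stopping rule: stop at the smallest radius $r_v \leq r$ for which the next shell satisfies $|B_{r_v+1}(v) \cap U_i| \leq (1 + n^{-1/\lambda})\,|B_{r_v}(v) \cap U_i|$. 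These balls are placed into $V_i$, and $U_{i+1}$ is set to $U_i \setminus V_i$.

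The diameter claim is immediate. The Linial-Saks stopping radius obeys $r_v \leq n^{1/\lambda} \log n$ because otherwise the volume inequality $(1 + n^{-1/\lambda})^{r_v} \leq n$ would be violated; and since centers in $S_i$ are at pairwise $G$-distance greater than $2r \geq 2r_v$, distinct selected balls lie in distinct connected components of $G[V_i]$, each of diameter $O(n^{1/\lambda} \log n)$. Termination within $\lambda$ phases would follow from a potential argument: the per-ball boundary is at most an $n^{-1/\lambda}$-fraction of the ball interior, and aggregating over the centers of $S_i$ yields $|U_{i+1}| \leq n^{-1/\lambda}\, |U_i|$, so that $|U_{\lambda+1}| < 1$. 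For the round complexity, each phase is dominated by the MIS step on $G^{2r}$: Panconesi-Srinivasan costs $2^{O(\sqrt{\log n})}$ rounds of $G^{2r}$, i.e., $r \cdot 2^{O(\sqrt{\log n})} = n^{1/\lambda} \cdot 2^{O(\sqrt{\log n})}$ rounds of $G$ after absorbing the $\log n$ factor into the $2^{O(\sqrt{\log n})}$ term. Summed over $\lambda$ phases, this gives the claimed bound $\lambda \cdot n^{1/\lambda} \cdot 2^{O(\sqrt{\log n})}$.

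The main obstacle is simultaneously enforcing all three goals --- per-ball diameter $O(n^{1/\lambda} \log n)$, termination in exactly $\lambda$ phases, and the claimed round complexity --- through a tight choice of the Linial-Saks parameter and the MIS spacing. In particular, maximality of $S_i$ on $G^{2r}[U_i]$ must be leveraged carefully to ensure that essentially all of $U_i$, and not merely the nodes inside the immediate balls $B_{r_v}(v)$, ends up inside a chosen cluster, so that the shrinkage bound $|U_{i+1}| \leq n^{-1/\lambda}\, |U_i|$ actually holds rather than degrading because of nodes stranded in the annular regions just outside the selected balls. I expect the delicate part of a full proof to be this potential argument tying the Linial-Saks shrinkage to the MIS maximality guarantee.
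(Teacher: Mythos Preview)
Your proposal has a genuine gap in the shrinkage argument, and you correctly flag it at the end but do not resolve it. The claimed inequality $|U_{i+1}| \leq n^{-1/\lambda}\,|U_i|$ does not follow from your construction. The Linial--Saks stopping rule only bounds the immediate boundary shell $(B_{r_v+1}(v)\setminus B_{r_v}(v))\cap U_i$ relative to $B_{r_v}(v)\cap U_i$; but $U_{i+1}$ contains every node of $U_i$ at distance greater than $r_v$ from its nearest center, and MIS maximality on $G^{2r}$ only guarantees distance at most $2r$ to a center, not $r_v$. On a path, for instance, an MIS on $G^{2r}$ places centers roughly $2r\approx 2n^{1/\lambda}\log n$ apart, while each ball stops at radius $r_v\approx n^{1/\lambda}$ (since $|B_{k+1}|/|B_k|\approx 1+1/k$). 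Hence only an $O(1/\log n)$ fraction of $U_i$ is clustered, giving $|U_{i+1}|\approx (1-1/\log n)\,|U_i|$ instead of $n^{-1/\lambda}|U_i|$, and termination in $\lambda$ phases fails badly.

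The paper avoids this by \emph{not} fixing all centers in one shot. It first computes a Panconesi--Srinivasan $(2^{O(\sqrt{\log n})},2^{O(\sqrt{\log n})})$ decomposition of $G^d$ with $d=2n^{1/\lambda}\log n+1$, and then, in each of the $\lambda$ epochs, iterates over the $2^{O(\sqrt{\log n})}$ color classes of that auxiliary decomposition. For each class, the leader of every component gathers its $(n^{1/\lambda}\log n)$-neighborhood and locally simulates \emph{sequential} ball carving: pick any still-unprocessed node of the component, carve its ball, remove it, and repeat until the component is exhausted. Because carving is iterative rather than from a one-shot center set, every node of every component is processed---either placed into the current block $A_i$ or discarded as boundary---so the sequential shrinkage bound $|G_{i+1}|\leq n^{-1/\lambda}|G_i|$ carries over verbatim. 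The $2^{O(\sqrt{\log n})}$ factor in the round count arises from the number of color classes traversed per epoch, not from a single MIS call.
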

The proof of \Cref{NWD} is deferred to \Cref{App:NetDecomp}; it works mainly by putting together some ideas of Awerbuch and Peleg~\cite{awerbuch1990sparse}, Panconesi and Srinivasan\cite{panconesi1992improved}, and Awerbuch et al.\cite{awerbuch1996fast}. However, we are not aware of this result appearing in prior work.

\subsection{Shattering}
%
In a number of our algorithms, we make use of the following lemma which, roughly speaking, shows that if each node of the graph remains with some small probability and we have certain independence between these events, the remaining connected components are ``small''. We remark that this lemma or its variants are key ingredients in Beck's LLL method\cite{beck1991LLL}, sometimes referred to as the \emph{shattering lemma}, and analogues of it appear in \cite{alon1991LLL, michael2002graph, molloy1998LLL, rubinfeld2011fast, alon2012LCA, barenboim2012locality, Ghaffari-MIS, harris2016distributed, GS17}.
The proof of this lemma is postponed to \Cref{app:shattering}.

\begin{lemma}[The Shattering Lemma]\label[lemma]{Shattering}
Let $G=(V, E)$ be a graph with maximum degree $\Delta$. Consider a process which generates a random subset $B \subseteq V$ where $P(v \in B)\leq \Delta^{-c_1}$, for some constant $c_1 \geq 1$, and that the random variables $1(v\in B)$ depend only on the randomness of nodes within at most $c_2$ hops from $v$, for all $v\in V$, for some constant $c_2\geq 1$.  
Moreover, let $H=G^{[2c_2+1,4c_2+2]}$ be the graph which contains an edge between $u$ and $v$ iff their distance in $G$ is between $2c_2+1$ and $4c_2+2$. Then with probability at least $1- n^{-c_3}$, for any constant $c_3 < c_1 - 4c_2-2$, we have the following three properties:
\begin{enumerate}[leftmargin=1.2cm]
\item[(P1)] $H[B]$ has no connected component $U$ with $|U|\geq \log_{\Delta}n$.
\item[(P2)]  $G[B]$ has size at most $O\paren{ \log_{\Delta} n \cdot \Delta^{2c_2}}$.
\item[(P3)] Each connected component of $G[B]$ admits a $(\lambda, O(\log^{1/\lambda} n\cdot \log^2\log n))$ network decomposition, for any integer $\lambda\geq 1$, which can be computed in $\lambda \cdot \log^{1/\lambda} n \cdot 2^{O(\sqrt{\log\log n})}$ rounds, deterministically.
\end{enumerate}
\end{lemma}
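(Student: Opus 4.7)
The plan is to establish the three properties in sequence: (P1) by a direct first-moment calculation on connected subgraphs of $H$, (P2) via a ball-covering argument that reduces to (P1), and (P3) by invoking \Cref{NWD} inside each remaining component.

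For (P1), I would use the standard subtree-counting bound: the number of connected subgraphs of $H$ of size $k$ containing a fixed vertex is at most $(e\Delta_H)^k$, where $\Delta_H \leq \Delta^{4c_2+2}$ bounds the maximum degree of $H$. For any candidate $U$, I extract a subset $U' \subseteq U$ of vertices at pairwise $G$-distance at least $2c_2+1$ by the greedy rule: repeatedly pick a surviving vertex of $U$ and delete all its $G$-neighbors within distance $2c_2$. Since any vertex has at most $\Delta^{2c_2}$ vertices within $G$-distance $2c_2$, this yields $|U'| \geq |U|/\Delta^{2c_2}$. By hypothesis, each indicator $\mathbbm{1}(v \in B)$ depends only on randomness within $c_2$ hops of $v$, so the events $\{v \in B\}_{v \in U'}$ use disjoint randomness and are therefore mutually independent. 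This gives $\Pr[U \subseteq B] \leq \Delta^{-c_1 |U'|}$. Taking $k = \lceil \log_\Delta n \rceil$ and union-bounding over at most $n \cdot (e\Delta^{4c_2+2})^k$ candidates yields failure probability at most $n^{-c_3}$ for any $c_3 < c_1 - 4c_2 - 2$, where the $\Delta^{O(c_2)}$ factors are absorbed into the constant $c_1$.

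For (P2), I fix a connected component $C$ of $G[B]$ and let $S \subseteq C$ be an inclusion-maximal subset of vertices at pairwise $G$-distance at least $2c_2+1$. By maximality, every vertex of $C$ lies within $G$-distance $2c_2$ of some element of $S$, so $|C| \leq |S| \cdot \Delta^{2c_2}$. I then claim $S$ is connected in $H$. Indeed, for any $G$-adjacent pair $u, v \in C$, their nearest representatives $s_u, s_v \in S$ satisfy $d_G(s_u, s_v) \leq 2c_2 + 1 + 2c_2 = 4c_2+1$; since distinct elements of $S$ are at $G$-distance at least $2c_2+1$, either $s_u = s_v$ or $\{s_u, s_v\}$ is an edge of $H$. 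Walking along any $G$-path in $C$ and mapping each intermediate vertex to its $S$-representative therefore traces out a walk in $H[B]$ connecting any two elements of $S$. By (P1), $|S| < \log_\Delta n$, whence $|C| \leq \log_\Delta n \cdot \Delta^{2c_2}$, which is the claimed bound.

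For (P3), I apply \Cref{NWD} to each connected component of $G[B]$ in parallel. By (P2), each component has $N = O(\log_\Delta n \cdot \Delta^{2c_2})$ vertices, so for constant $c_2$ and bounded $\Delta$ we have $\log N = O(\log\log n)$ and $N^{1/\lambda} = O(\log^{1/\lambda} n)$. The algorithm produces a $(\lambda, N^{1/\lambda} \log N)$ decomposition in $\lambda \cdot N^{1/\lambda} \cdot 2^{O(\sqrt{\log N})}$ rounds, matching the stated $\lambda \cdot \log^{1/\lambda} n \cdot 2^{O(\sqrt{\log\log n})}$-round bound, with the $O(\log^{1/\lambda} n \cdot \log^2\log n)$ diameter bound absorbing any slack from handling multiple small components.

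The main technical obstacle is the probability estimate in (P1): the ratio $|U'|/|U|$ produced by the independence extraction must be large enough that the product of the subtree-enumeration count and the independence-based probability bound becomes polynomially small. The carefully chosen edge range $[2c_2+1, 4c_2+2]$ for $H$ is precisely what makes this work: it guarantees that $H$-adjacent vertices already have disjoint $c_2$-neighborhoods in $G$ (so the greedy extraction loses only a $\Delta^{O(c_2)}$ factor), while simultaneously keeping the maximum degree of $H$ bounded by $\Delta^{4c_2+2}$ so that the enumeration does not dominate.
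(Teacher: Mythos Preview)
Your argument for (P2) is correct and in fact more carefully spelled out than the paper's sketch: you verify $H$-connectivity of the extracted set explicitly via the nearest-representative walk, which the paper glosses over. There are, however, two genuine gaps elsewhere.

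\textbf{(P1): the extraction step and the final calculation are inconsistent.} You are right to worry that two non-adjacent vertices of an $H$-tree may lie within $G$-distance $2c_2$ of each other, and the greedy extraction of $U'$ is meant to repair this. But the repair is too lossy: it yields only $|U'|\ge k/\Delta^{2c_2}$, hence $\Pr[U\subseteq B]\le \Delta^{-c_1 k/\Delta^{2c_2}}$, and against $n\cdot(e\Delta^{4c_2+2})^{k}$ candidates the exponent of $n$ becomes roughly $1+(4c_2{+}2)-c_1/\Delta^{2c_2}$, which is positive whenever $\Delta$ is not a constant. Your stated conclusion $c_3<c_1-4c_2-2$ silently uses $|U'|=k$; the ``$\Delta^{O(c_2)}$ factor'' you mention is a loss in the \emph{exponent}, not in the final probability, and cannot be absorbed. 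The paper's sketch simply asserts that all vertices of an $H$-tree are at mutual $G$-distance $\ge 2c_2+1$; what is actually needed (and all that (P2) and (P3) use) is the weaker statement that no $H$-connected, pairwise $(2c_2{+}1)$-separated subset of $B$ reaches size $\log_\Delta n$. For that family all $k$ indicators are genuinely independent, the bound $\Delta^{-c_1 k}$ holds, the enumeration is still at most $n\cdot 4^k\cdot \Delta^{(4c_2+2)k}$, and the arithmetic yields the stated $c_3$. Your own (P2) construction produces exactly such a witness.

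\textbf{(P3): you assume bounded $\Delta$.} You write ``for constant $c_2$ and bounded $\Delta$ we have $\log N=O(\log\log n)$'', but the lemma carries no such hypothesis, and the paper invokes it with $\Delta$ as large as $\poly\log n$ (e.g., in the bucketing and frugal-coloring proofs). For unbounded $\Delta$ the component size $N=O(\log_\Delta n\cdot \Delta^{2c_2})$ need not be $\poly\log n$, so applying \Cref{NWD} directly to each component does not give the stated parameters. The paper instead first computes a $(4c_2{+}3,\Theta(\log\log n))$-ruling set of each $G[B]$-component and contracts the resulting clusters; the contracted graph has only $O(\log_\Delta n)$ vertices (any $(2c_2{+}1)$-separated subset of a component extends to a maximal one, which is $H$-connected by your own (P2) argument and hence small by (P1)), and \Cref{NWD} is applied to that small graph, with the $O(\log\log n)$ cluster radius supplying the extra $\log\log n$ factor in the diameter bound.
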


\section{Our General Algorithm for Lov\'{a}sz Local Lemma}\label[section]{sec:GeneralLLLAlgo}
In this section, we explain our sublogarithmic-time LLL algorithm of \Cref{RANDLLL}, which solves LLL in $2^{O(\sqrt{\log\log n})}$ rounds on bounded degree graphs, given the condition that $p(ed)^{32}<1$. This algorithm is developed in two stages, as we overview next. 

In the first stage, presented in \Cref{subsec:baseLLL}, we explain a randomized algorithm with complexity $\lambda\cdot \log^{1/\lambda} n \cdot 2^{O(\sqrt{\log\log n})}$, given that an LLL criterion $p(ed)^{4\lambda}<1$ is satisfied. In the main regime of interest, the best LLL criterion exponent that we will assume is $\lambda=O(1)$, and thus this $(\lambda \cdot \log^{1/\lambda} n \cdot 2^{O(\sqrt{\log\log n})})$-round algorithm, on its own, would not get us to our target complexity of $2^{O(\sqrt{\log \log n})}$, although still being an improvement on the $O(\log n)$-round algorithm of \cite{chung2014LLL}. 

In the second stage, presented in \Cref{sec:bootstrapping}, we improve this complexity to $2^{O(\sqrt{\log\log n})}$. That improvement works mainly by viewing the sublogarithmic-time local algorithm of \Cref{subsec:baseLLL} as setting up a new LLL, with a much larger exponent $\lambda$ in its LLL criterion, hence allowing us get to a much smaller complexity by (recursively) applying the same scheme. This speed up is inspired by the ideas of Chung and Pettie\cite{chang2017time} which showed that LLL can be used to speed up sublogarithmic-time local algorithms.\footnote{Though, we find this recursive application of the idea to speed up the complexity of LLL itself, through increasing the exponent of the corresponding LLL criterion, somewhat amusing.}


\vspace{-2pt}
\subsection{The Base LLL Algorithm}\label{subsec:baseLLL}
\vspace{-2pt}
\begin{theorem}\label[theorem]{thm:baseLLL}
For any integer $\lambda\geq 8$, there is a randomized distributed algorithm solving the LLL problem under the symmetric criterion $p(ed)^{4\lambda}<1$, in $O(d^2) + \lambda \cdot \log^{1/\lambda} n\cdot 2^{O(\sqrt{\log \log n})}$ rounds, w.h.p. 
\end{theorem}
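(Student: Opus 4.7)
The plan is to follow a two-phase shatter-and-solve structure: a short randomized sampling phase that resolves most events, followed by a deterministic post-shattering phase on the residual components built from network decomposition.

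First, in the randomized sampling phase I would run a local Moser--Tardos-style (or Chung--Pettie--Su-style) resampling procedure for $O(d^2)$ rounds, with the goal that every event $A$ is still ``live'' (i.e., its variables have not been permanently fixed to a satisfying assignment) with probability at most $d^{-c_1}$ for a large constant $c_1=c_1(\lambda)$. The strong polynomial criterion $p(ed)^{4\lambda}<1$ with $\lambda\ge 8$ is precisely what buys this polynomial decay: a witness-tree argument bounds the survival probability of $A$ by a sum over rooted trees of weighted products of $p\cdot d^{O(1)}$ factors, and the slack $(ed)^{4\lambda}$ turns this into an inverse-polynomial-in-$d$ tail bound. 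One also has to verify that the indicator ``$A$ is live'' depends only on the random choices within $O(1)$ hops of $A$ in the dependency graph, which holds if the procedure is organized into a constant number of parallel phases of constant radius each.

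Second, with these two properties in hand, I invoke the Shattering Lemma (\Cref{Shattering}) on the set $B$ of surviving events, choosing $c_1$ large enough in terms of $\lambda$ so that its constant $c_3$ exceeds the desired high-probability threshold. Property (P2) then bounds each connected component of $G[B]$ by a polylogarithmic number of nodes, and property (P3) yields a $(\lambda, O(\log^{1/\lambda} n\cdot\log^2\log n))$ network decomposition of every residual component, computable deterministically in $\lambda\cdot\log^{1/\lambda}n\cdot 2^{O(\sqrt{\log\log n})}$ rounds. Having this decomposition, I process the $\lambda$ color classes sequentially and, inside each class, solve the restricted LLL within each low-diameter component using the deterministic LLL algorithm (\Cref{DETLLL})---or, more brutely, by collecting the entire component at one vertex, solving LLL locally (which still admits a satisfying assignment since the original LLL criterion is preserved when we condition on the partial assignment already made), and broadcasting the solution back. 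Either way the post-shattering cost is $\lambda\cdot\log^{1/\lambda}n\cdot 2^{O(\sqrt{\log\log n})}$, which matches the target complexity.

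The hard step, and the one that dictates the exact form of the LLL criterion, is the shattering analysis in the first phase: producing a randomized routine whose output simultaneously (i) constitutes a globally consistent partial assignment that does not already violate any fixed event, (ii) leaves each event live with probability at most $d^{-c_1(\lambda)}$, and (iii) has each liveness indicator depend on an $O(1)$-radius neighborhood in the dependency graph. The exponent $4\lambda$ in the criterion is what gives enough room for all three requirements: one constant factor pays for geometric decay per resampling step, a second absorbs the enumeration over witness trees in the concentration bound, and the remainder is what ends up as the polynomial-decay exponent $c_1$ needed by the Shattering Lemma.
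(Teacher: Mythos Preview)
Your high-level plan (randomized shattering, then deterministic completion via network decomposition and \Cref{DETLLL}) matches the paper's structure, but the randomized phase as you describe it has a genuine gap.

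Running Moser--Tardos (or Chung--Pettie--Su) resampling for $O(d^2)$ rounds does not, by itself, produce a \emph{partial assignment} in the sense the deterministic phase needs. In MT there is no notion of variables being ``permanently fixed to a satisfying assignment'': an event can be satisfied and later re-violated when a neighbor resamples a shared variable. So after stopping, the set of ``live'' events is not an LLL instance on unset variables with controlled conditional probabilities; it is just the set of currently violated events, and your later assertion that ``the original LLL criterion is preserved when we condition on the partial assignment already made'' is exactly the property that is not delivered. Relatedly, your witness-tree argument bounds the expected number of resamplings of an event, not the probability that an event is still ``unfinished'' in any sense that interfaces with a residual LLL. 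There is also an internal tension: you run the procedure for $O(d^2)$ rounds yet claim the liveness indicator depends on an $O(1)$-radius neighborhood; a resampling process that runs for $O(d^2)$ rounds generally has $O(d^2)$-hop dependence.

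The paper's randomized phase uses a different mechanism, due to Molloy and Reed: first compute a $(d^2{+}1)$-coloring of $G_{\mathcal{X}}^2$ (this is where the $O(d^2)$ rounds come from), then process color classes one by one. Within a class, each event samples its unset variables one at a time; after each sample, it checks every event $B$ containing that variable, and if $\Pr[B\mid \text{fixed so far}]$ exceeds the threshold $\sqrt{p}$, it \emph{undoes} the sample and \emph{freezes} all remaining variables of $B$. This directly engineers two things: (i) at termination, every event has conditional probability at most $\sqrt{p}$ given the partial assignment, so the residual on the unset variables is a genuine LLL instance satisfying $\sqrt{p}\,(ed)^{\lambda}<1$ (half of the exponent $4\lambda$ is spent here); and (ii) an event has any unset variable only if it or a neighbor became dangerous, which happens with probability at most $(d{+}1)\sqrt{p}\le d^{-15}$ and depends only on randomness within one hop in $G_{\mathcal{X}}$. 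Points (i) and (ii) are exactly your requirements, but they are obtained by freezing-on-threshold, not by resampling; this is the missing idea. With (i) in hand, \Cref{DETLLL} applies as you describe (the ``brute-force'' alternative you offer would not respect the $\log^{1/\lambda} n$ diameter budget, since components of $G^2_{\mathcal{X}}[\mathcal{X}']$ can have polylogarithmic size; the block-by-block argument of \Cref{DETLLL} is needed).
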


This algorithm consists of two parts: (1) a randomized algorithm, explained in \Cref{subsubsec:randomSamplingLLL}, which performs some partial sampling in the LLL space, thus setting some of the variables, in a manner that \emph{shatters} the graph, hence leaving small connected components among the unset variables; (2) a deterministic LLL algorithm, explained in \Cref{subsubsec:DETSamplingLLL}, which we use to solve the remaining small connected components. To the best of our knowledge, this is the first non-trival deterministic distributed LLL algorithm. In \Cref{wrapupBase}, we combine these two parts, concluding the proof of \Cref{thm:baseLLL}.

We note that this general \emph{shattering} style for randomized algorithms---which first performs some randomized steps to break the graph into small remaining connected components, and then uses some deterministic algorithm to solve these remaining components---is rooted in the breakthrough LLL algorithm of Beck\cite{beck1991LLL}, and has been used before also by \cite{alon1991LLL, michael2002graph, molloy1998LLL, barenboim2012locality, Ghaffari-MIS, harris2016distributed, GS17}. 


\subsubsection{The Randomized Part}\label{subsubsec:randomSamplingLLL}

We now explain the randomized component of our LLL algorithm for bounded degree graphs, which performs a partial sampling in the LLL space, thus setting some of the variables, in a manner that guarantees the following two properties: (1) the conditional probabilities of the bad events, conditioned on the already set variables, satisfy a polynomial LLL criterion, (2) the connected components of the events on variables that remain unset are ``small'' (e.g., for bounded degree graphs, they have size at most $O(\log n))$, with high probability.  

These two properties together will allow us to invoke the deterministic LLL algorithm that we present later in \Cref{subsubsec:DETSamplingLLL} on the remaining components of variables that remain unset. 
In particular, (1) means that the bad events $\mathcal{X}$ form another LLL problem on the variables that remain unset, where each new bad event has probability at most $\sqrt{p}$. Furthermore, (2) ensures that the components are small enough to make the deterministic algorithm efficient. 
 
Our partial sampling is inspired by a sequential LLL algorithm of Molloy and Reed\cite{molloy1998LLL}. 

\begin{lemma}[Random Partial Setting for the LLL Variables]\label[lemma]{RandomPartialSetting}
There is a randomized distributed algorithm that computes, w.h.p., in $O(d^2 + \log^* n)$ rounds, a partial assignment of values to variables --- setting the values of the variables in a set $\mathcal{V}^*\subseteq \mathcal{V}$, hence leaving the variables in $\mathcal{V}':=\mathcal{V} \setminus \mathcal{V}^*$ unset --- of an LLL satisfying $p(ed)^{4\lambda} < 1$, for any integer $\lambda\geq 8$, such that
\begin{enumerate}[(i)]
\item $Pr[A\mid \mathcal{V}^*]\leq \sqrt{p}$ for all $A \in \mathcal{X}$, and
\item w.h.p. each connected component of $G^2_{\mathcal{X}}[\mathcal{V'}]$ admits a $(\lambda,O(\log^{1/\lambda}n \cdot \log^2 \log n))$ network decomposition, which can be computed in $\lambda \cdot \log^{1/\lambda} n \cdot 2^{O(\sqrt{\log \log n})}$ rounds, deterministically.
\end{enumerate} 
\end{lemma}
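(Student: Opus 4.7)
The plan is to implement a distributed version of the Molloy--Reed partial sampling scheme. Concretely: (a)~independently mark each variable $v \in \mathcal{V}$ as \emph{sampled} with probability $1/2$, forming a set $S$, and draw $x_v$ from its original marginal; (b)~declare an event $A$ \emph{dangerous} if there exists some subset $F' \subseteq S \cap \text{vbl}(A)$ for which the conditional probability $\Pr[A \mid X_{F'} = x_{F'}]$ (under the original product measure) exceeds $\sqrt{p}$; (c)~set $\mathcal{V}^* := S \setminus \bigcup_{A \text{ dangerous}} \text{vbl}(A)$, output $x_v$ on $v \in \mathcal{V}^*$, and leave $\mathcal{V}' := \mathcal{V} \setminus \mathcal{V}^*$ unset.

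Property~(i) would then hold pointwise. If $\text{vbl}(A) \cap \mathcal{V}^* = \emptyset$, the conditioning is vacuous and $\Pr[A \mid \mathcal{V}^*] = \Pr[A] \leq p \leq \sqrt{p}$. Otherwise, pick any $v \in \text{vbl}(A) \cap \mathcal{V}^*$. By construction $v$ is not in the variable set of any dangerous event, so in particular $A$ itself is not dangerous, meaning $\Pr[A \mid X_{F'}] \leq \sqrt{p}$ for \emph{every} subset $F' \subseteq S \cap \text{vbl}(A)$; plugging in $F' = \mathcal{V}^* \cap \text{vbl}(A)$ yields exactly $\Pr[A \mid \mathcal{V}^*] \leq \sqrt{p}$. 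The subset-based definition is essential here: the more common rule ``$A$ dangerous iff $\Pr[A \mid X_{S \cap \text{vbl}(A)}] > \sqrt{p}$'' only controls the case $F' = S \cap \text{vbl}(A)$, while the unexposing step can legitimately drop the conditioning set to a strict subset, which may strictly increase the conditional probability.

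For property~(ii) I would invoke the Shattering Lemma. A union bound over subsets $F' \subseteq \text{vbl}(A)$ together with Markov's inequality (exploiting $\mathbb{E}[\Pr[A \mid X_{F'}]] = \Pr[A] \leq p$) gives $\Pr[A \text{ dangerous}] \leq 2^{|\text{vbl}(A)|}\,\sqrt{p}$, which is at most $O((ed)^{-2\lambda})$ in the bounded-arity LLL regime with $\lambda \geq 8$, hence polynomially small in $d$. The indicator ``$A$ dangerous'' depends only on the randomness attached to variables in $\text{vbl}(A)$, hence on nodes within a constant number of hops of $A$ in $G_\mathcal{X}$. The Shattering Lemma with $c_1$ of order $2\lambda$ and $c_2 = O(1)$ then guarantees, w.h.p., that the dangerous events decompose into connected components of size $O(\log n)$ in an appropriate power of $G_\mathcal{X}$, which in turn forces the nontrivial components of $\mathcal{V}'$ in $G^2_\mathcal{X}$ to be polylogarithmic. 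Property~(P3) of the Shattering Lemma then supplies on each such component the desired $(\lambda, O(\log^{1/\lambda} n \cdot \log^2 \log n))$ network decomposition in $\lambda \cdot \log^{1/\lambda} n \cdot 2^{O(\sqrt{\log\log n})}$ deterministic rounds.

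The randomized steps---sampling, evaluating dangerousness by enumerating the $\leq 2^{|\text{vbl}(A)|} = O(1)$ subsets, and unexposing---are all local and fit in $O(d^2 + \log^* n)$ rounds, where the $d^2$ covers gathering $2$-hop information in $G_\mathcal{X}$ and $\log^* n$ covers a distance-$O(1)$ colouring used to coordinate local decisions. The main obstacle of this plan is obtaining property~(i) \emph{pointwise} rather than merely in expectation. A direct Markov argument on the ``bad in the final assignment'' event bounds its probability by only $\sqrt{p}$, which is nowhere near small enough to union-bound over $n$ events, and a single uncaught violation would already break the downstream deterministic LLL that is supposed to finish the job. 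The subset-based definition of dangerousness is specifically engineered to circumvent this by absorbing, within the per-event failure probability, \emph{all} possible reductions of the conditioning set from $S \cap \text{vbl}(A)$ down to $\mathcal{V}^* \cap \text{vbl}(A)$; the resulting extra $2^{|\text{vbl}(A)|}$ factor is comfortably absorbed by the polynomial LLL criterion $p(ed)^{4\lambda} < 1$ in the bounded-degree regime.
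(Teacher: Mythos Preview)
Your approach has two significant gaps.

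\textbf{The $1/2$-sampling breaks shattering.} Since $\mathcal{V}^* \subseteq S$ and each variable enters $S$ only with probability $1/2$, every variable outside $S$ lands in $\mathcal{V}'$. Hence each event $A$ has at least one unset variable with probability at least $1 - 2^{-|\text{vbl}(A)|} \geq 1/2$, completely independently of dangerousness. The set $\mathcal{X}'$ of events with an unset variable is therefore essentially all of $\mathcal{X}$, and no shattering of $G^2_{\mathcal{X}}[\mathcal{X}']$ is possible. Your shattering argument is applied only to the \emph{dangerous} events, but property~(ii) (and the downstream deterministic LLL) concerns events with \emph{any} unset variable, which is a much larger set. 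Taking $S = \mathcal{V}$ would fix this, but then the next issue bites.

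\textbf{Bounded arity is not assumed.} Your bound $\Pr[A \text{ dangerous}] \leq 2^{|\text{vbl}(A)|}\sqrt{p}$ is only useful when $|\text{vbl}(A)| = O(1)$, and you explicitly invoke a ``bounded-arity regime''. The lemma does not assume this: $d$ bounds the number of \emph{events} sharing a variable with $A$, not the number of \emph{variables} of $A$, and these can differ arbitrarily (two events on the same $k$ variables give $d=1$ but arity $k$). The paper's proof avoids the $2^{|\text{vbl}(A)|}$ factor by making the sampling \emph{sequential}: it $(d^2{+}1)$-colours $G^2_{\mathcal{X}}$, processes colour classes one by one, and within a class each event samples its unfrozen variables one at a time, freezing a neighbour $B$ (and undoing the last draw) the moment $\Pr[B \mid \text{revealed so far}] \geq \sqrt{p}$. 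Because the revelations form a filtration, a one-line stopping-time argument gives $\Pr[B \text{ ever dangerous}] \leq \Pr[B]/\sqrt{p} \leq \sqrt{p}$, with no dependence on $|\text{vbl}(B)|$: at the first moment the conditional probability crosses $\sqrt{p}$ it is at least $\sqrt{p}$, so that moment is reached with probability at most $p/\sqrt{p}$. This sequential structure is exactly what your all-subsets union bound gives up, and it is also the reason for the $O(d^2)$ term in the round count.
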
 
\begin{proof}[Proof]
We first compute a $(d^2+1)$-coloring of the square graph $G_{\mathcal{X}}^2$ on the events, which can be done even deterministically in $\tilde{O}(d) + O(\log^* n)$ rounds \cite{fraigniaud2016local}. Suppose $\mathcal{X}_i$ is the set of events colored with color $i$, for $i\in \{1, \dots, d^2+1\}$. We process the color classes one by one. 

For each color $i\in \{1, \dots, d^2+1\}$, and for each node $A\in \mathcal{X}_i$ in parallel, we make node $A$ sample values for its (non-frozen) variables locally, one by one. Notice that since we are using a coloring of $G^2_{\mathcal{X}}$, for each color $i$, each event $B\in \mathcal{X}$ shares variables with at most one event $A\in \mathcal{X}_i$. Hence, during this iteration, at most one node $A$ is sampling variables of event $B$. Each time, when node $A$ is choosing a value for a variable $v\in vbl(A)$, it checks whether this setting makes one of the events $B\in \mathcal{X}$ involving variable $v$ dangerous. We call an event $B$ \emph{dangerous} if $\Pr[B|\mathcal{V}^*_B ]\geq \sqrt{p}$, where $\mathcal{V}^*_B$ denotes the already set variables of $B$ up to this point in the sampling process. If the recently set variable $v$ leads to a dangerous event $B$, then we undo this variable assignment to $v$, and \emph{freeze} variable $v$ as well as all the remaining variables of event $B$. We will not assign any value to these frozen variables in the remainder of the randomized sampling process. We have two key observations regarding this process: 

\begin{observation}\label[observation]{obs1} At the end of each iteration, for each event $A \in \mathcal{X}$, the conditional probability of event $A$, conditioned on the already made assignments $\mathcal{V}^*_{A}$, is at most $\sqrt{p}$.
\end{observation}

\begin{observation}\label[observation]{obs2} For each event $A\in \mathcal{X}$, the probability of $A$ having at least one unset variable is at most $(d+1)\sqrt{p}$. Furthermore, this is independent of events that are further than $2$ hops from $A$. 
\end{observation}
\begin{proof}[Proof Sketch]
For each $B \in \mathcal{X}$, the probability that $B$ ever becomes dangerous is at most $\sqrt{p}$. This is because otherwise the total probability of $B$ happening would exceed $\sqrt{p}$. Now, an event $A\in \mathcal{X}$ can have frozen variables only if at least one of its neighboring events $B$, or event $A$ itself, becomes dangerous at some point during the process. Since $A$ has at most $d$ neighboring events, by a union bound, the latter has probability at most $(d+1)\sqrt{p}$.
\end{proof}

\Cref{obs1} directly implies property (i) of \Cref{RandomPartialSetting}. We use \Cref{obs2} to conclude that the events with at least one unset variable comprise ``small'' connected components. 
In particular, we apply \Cref{Shattering} to $G^2_{\mathcal{X}}$ with the random partial setting process generating a set $B\subseteq \mathcal{X}$ of the events that have at least one variable unset. By \Cref{obs2}, each event remains with probability at most $(d+1) \sqrt{p} \leq (d+1) \cdot e^{-2\lambda}\cdot  d^{-2\lambda}\leq d^{-15}$, hence we can set $c_1\gets 15$, and these events depend only on events within at most $1$ hop in $G_{\mathcal{X}}$, and hence $c_2\gets 2$ hops in $G^2_{\mathcal{X}}$. \Cref{Shattering} (P3) thus shows that with probability at least $1-n^{-5}$ property (ii) holds. 
%
\end{proof}

\subsubsection{The Deterministic Part}\label{subsubsec:DETSamplingLLL}

\begin{theorem}\label[theorem]{DETLLL}
For any integer $\lambda\geq 1$, any $n$-node distributed LLL problem can be solved deterministically in $\lambda \cdot n^{1/\lambda }\cdot 2^{O(\sqrt{\log n})}$ rounds, under the symmetric LLL criterion $p(ed)^\lambda < 1$. If the algorithm is provided a $(\lambda, \gamma)$ network decomposition of the square graph $G_{\mathcal{X}}^2$, then the LLL algorithm runs in just $O(\lambda \cdot (\gamma+1))$ rounds.
\end{theorem}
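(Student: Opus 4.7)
My plan is to reduce the general case to the case of a given network decomposition via \Cref{NWD}, and then design a deterministic algorithm that, given a $(\lambda,\gamma)$ network decomposition of the square dependency graph $G_{\mathcal{X}}^2$, processes its blocks sequentially in $O(\gamma)$ rounds each. For the general case, I would first invoke \Cref{NWD} on $G_{\mathcal{X}}^2$ (simulating each of its rounds in $O(1)$ rounds of $G_{\mathcal{X}}$) to obtain a $(\lambda, O(n^{1/\lambda}\log n))$ decomposition in $\lambda\cdot n^{1/\lambda}\cdot 2^{O(\sqrt{\log n})}$ rounds; since $\log n \leq 2^{O(\sqrt{\log n})}$, the $\log n$ factor in the component diameter is absorbed into the final bound.

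\textbf{Using the decomposition.} Given a $(\lambda,\gamma)$ decomposition $V_1,\dots,V_\lambda$ of $G_{\mathcal{X}}^2$, I would process the blocks in $\lambda$ sequential phases. In phase $i$, all connected components of $G_{\mathcal{X}}^2[V_i]$ are handled in parallel. Crucially, because we use the \emph{square} dependency graph, events belonging to two distinct components of $V_i$ share no common variable, so their handling does not interfere. For each component $C$, a leader (say, the node with smallest ID) gathers in $O(\gamma)$ rounds the full local picture of $C$: its events and their probability tables, their variables, the values committed in previous phases to variables shared with $C$, and the identities of ``frontier'' events of later blocks that share variables with $C$. The leader then locally computes (which is free in \local) an assignment of the still-unset variables of events in $C$, and disseminates it back in $O(\gamma)$ rounds. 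Summed over the $\lambda$ phases this yields the desired $O(\lambda(\gamma+1))$ round complexity.

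\textbf{Correctness and the main obstacle.} Correctness reduces to showing the leader can always find such an assignment. I plan to maintain the inductive invariant that, after phase $i-1$, the conditional probability of every not-yet-processed event is at most $p\cdot(ed)^{i-1}$. Together with the hypothesis $p(ed)^\lambda<1$, this gives $p_{\text{eff}}\cdot ed<1$ at phase $i$, which is the classical LLL criterion within each component. To extend the invariant through phase $i$, I would augment the leader's local LLL with auxiliary bad events declaring that a frontier event's conditional probability exceeds $p\cdot(ed)^i$ after the chosen extension, and apply the classical LLL existence guarantee to this enlarged system; the leader then realizes a valid assignment by centralized computation (e.g., exhaustive search over the $d^{O(\gamma)}$ relevant variables, or derandomized Moser--Tardos). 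The main obstacle I anticipate is the careful bookkeeping of this augmented LLL: one must (a) bound the probability of each auxiliary ``blow-up'' event via a Markov-type argument exploiting that the average conditional probability of a frontier event is $p\cdot(ed)^{i-1}$, and (b) verify that the dependency degree of the augmented system is still $O(d)$, so that the polynomial slack in $p(ed)^\lambda<1$ absorbs it. Combining the two clauses gives $O(\lambda(\gamma+1))$ rounds with a given decomposition and $\lambda\cdot n^{1/\lambda}\cdot 2^{O(\sqrt{\log n})}$ rounds in general.
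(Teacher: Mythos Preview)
Your proposal is correct and follows essentially the same route as the paper's proof: compute a $(\lambda, n^{1/\lambda}\log n)$ decomposition of $G_{\mathcal{X}}^2$ via \Cref{NWD}, process the $\lambda$ blocks sequentially, and in each phase have component leaders gather the local picture and brute-force an assignment maintaining the invariant that every event's conditional probability stays at most $p(ed)^i$, where existence is certified by an auxiliary LLL (your ``blow-up'' events $B_{A,i}$) with per-event probability $\le 1/(ed)$ via Markov and dependency degree exactly $d$, hence meeting the tight criterion $epd\le 1$. The only minor sharpening relative to your sketch is that the paper keeps the invariant for \emph{all} events (not just ``frontier'' or ``not-yet-processed'' ones) and the auxiliary LLL needs no extra slack---it hits $epd\le1$ on the nose, so your worry about absorbing an $O(d)$ constant does not arise.
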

This algorithm makes use of \emph{network decompositions}. In particular, we will make a black-box invocation to the distributed algorithm stated in \Cref{NWD} for computing a $(\lambda, n^{1/\lambda}\cdot \log n)$ network decomposition, and then solve the LLL problem on top of this decomposition, by going through its blocks one by one. 

\paragraph{A Side Remark} The running time of our deterministic LLL algorithm hence directly depends on the network decomposition it works with. In particular, if there is a $\poly\log n$-round deterministic distributed algorithm that computes a $(\poly \log n, \poly \log n)$ network decomposition, then using the methods of \Cref{NWD}, we can obtain a $(\lambda, n^{1/\lambda}\log n)$-network decomposition in $n^{1/\lambda}\poly(\log n)$ rounds. Putting that together with the general methodology of our LLL algorithm\footnote{This connection is somewhat involved and has a number of parts. Thus we do not describe the details. One can verify it by tracing the steps of the development of \Cref{RANDLLL} and seeing how this improved network decomposition would improve the bounds, first in the base LLL algorithm, and then consequently in the bootstrapped LLL algorithm.} would prove that $T_{LLL}(n)=\poly(\log \log n)$, thus almost confirming the conjecture of Chang and Pettie \cite{chang2017time}. 

In fact, we believe that a conceivable future improvement of our LLL algorithm may need to improve this deterministic LLL algorithm, ideally to complexity $O(\log n)$, for proving the $T_{LLL}(n)=O(\log\log n)$ conjecture of Chang and Pettie \cite{chang2017time}. 
%
%
%
%


\begin{proof}[Proof of \Cref{DETLLL}]
We first compute a $(\lambda, n^{1/\lambda}\cdot \log n)$ network decomposition of $G_{\mathcal{X}}^2$, which decomposes its nodes into $\lambda$ disjoint blocks $\mathcal{X}_1, \dots, \mathcal{X}_{\lambda}$, such that each connected component of $G_{\mathcal{X}}^2[\mathcal{X}_i]$ has diameter at most $n^{1/\lambda} \cdot \log n$. This decomposition can be computed in $\lambda \cdot n^{1/\lambda}\cdot  2^{O(\sqrt{\log n})}$ rounds, using \Cref{NWD}. The rest of the proof is described assuming this $(\lambda, n^{1/\lambda}\cdot \log n)$ network decomposition and works in $O(\lambda \cdot n^{1/\lambda}\cdot \log n)$ rounds; one can easily see that given a $(\lambda, \gamma)$ network decomposition $G_{\mathcal{X}}^2$, the algorithm would work instead in $O(\lambda \cdot (\gamma+1))$ rounds.

Iteratively for $i=1, \dotsc, \lambda$, we assign values to all variables of events in $\mathcal{X}_i$ that have remained unset. The values are chosen is such a way that, after $i$ steps, the conditional probability of \emph{any} event in $\mathcal{X}$, conditioned on all the assignments in variables of events in $\bigcup_{j=1}^i \mathcal{X}_j$, is at most $p(ed)^i <1$. Once $i=\lambda$, since the conditional failure probability is $p(ed)^\lambda <1$ but all the variables are already assigned, we know that none of the events occurs. 

The base case $i=0$ is trivial. In the following, we explain how to set the values for variables involved in events of $\mathcal{X}_i$ in $n^{1/\lambda} \cdot \log n$ rounds. Let $\mathcal{V}_i$ be the set of variables in events of $\mathcal{X}_i$ that remain with no assigned value. We form a new LLL problem, as follows: For each bad event $A\in \mathcal{X}$, we introduce an event $B_{A,i}$ on the space of values of $\mathcal{V}_i$. This is the event that the values of $\mathcal{V}_i$ get chosen set such that the conditional probability of the event $A$, conditioned on the variables in $\bigcup_{j=1}^{i} \mathcal{V}_j$, is larger than $p (ed)^i$. Notice that $Pr[B_{A,i} \mid \bigcup_{j=1}^{i-1} \mathcal{V}_j] \leq \frac{p(ed)^{i-1}}{p (ed)^i} = \frac{1}{ed}$. Moreover, each event $B_{A, i}$ depends on at most $d$ other events $B_{A',i}$. Hence, the family of events $B_{A, i}$ on the variable set $\mathcal{V}_i$ satisfies the conditions of the tight (symmetric) LLL. Therefore, by the Lov\'{a}sz local lemma, we know that there exists an assignment to variables of $\mathcal{V}_i$ which makes no event $B_{A,i}$ happen. That is, an assignment such that the conditional probability of each event $A$, conditioned on the assignments in $\bigcup_{j=1}^i \mathcal{V}_j$, is bounded by at most $p (ed)^i$. 

Given the existence, we find such an assignment in $n^{1/\lambda}\cdot \log n$ rounds, as follows: each component of $G_{\mathcal{X}}^2[\mathcal{X}_i]$ first gathers the whole topology of this component (as well as its incident events and the current assignments to any of their variables), in $n^{1/\lambda}\log n$ rounds. Then, it decides about an assignment for its own variables in $\mathcal{V}_i$, by locally brute-forcing all possibilities. Different components can decide independently as there is no event that shares variables with two of them, since they are non-adjacent in $G_{\mathcal{X}}^2$.
%
%
\end{proof}

\subsubsection{Wrap-Up: Base LLL Algorithm}\label{wrapupBase}
\begin{proof}[Proof of \Cref{thm:baseLLL}]
We run the randomized algorithm of \Cref{RandomPartialSetting} for computing a partial setting of the variables, in $O(d^2+ \log^* n)$ rounds. Then, by \Cref{RandomPartialSetting} (i), the remaining events $\mathcal{X}'$ (those which have at least one unset variable) form a new LLL system on the unset variables, where each bad event has probability at most $\sqrt{p}$. 

Moreover, by \Cref{RandomPartialSetting} (ii), 
each connected component of the square graph $G^2_{\mathcal{X}}[\mathcal{X}']$ of these remaining events $\mathcal{X}'$ has a $(\lambda, O(\log^{1/\lambda} n \cdot \log^2\log n))$ network decomposition, which we can compute in $\lambda \cdot  \log^{1/\lambda} n \cdot 2^{O(\sqrt{\log \log n})}$ rounds, deterministically. From now on, we handle the remaining events in different connected components of $G^2_{\mathcal{X}}[\mathcal{X}']$  independently. 

Since $\sqrt{p} (ed)^{\lambda}<1$, we can now invoke the deterministic LLL algorithm of \Cref{DETLLL} on top of the network decomposition of each component. Our deterministic LLL then runs in $\lambda\cdot \log^{1/\lambda} n \cdot \log^2 \log n$ additional rounds, and finds assignments for these remaining variables, without any of the events occurring, hence solving the overall LLL problem. The overall round complexity is $O(d^2) + \lambda \cdot \log^{1/\lambda} n \cdot 2^{O(\sqrt{\log \log n})}$. 
\end{proof}

\vspace{-5pt}
\subsection{Improving the Base LLL Algorithm via Bootstrapping}\label{sec:bootstrapping}
\label{subsec:bootstrappedLLL}
\vspace{-3pt}

\begin{proof}[Proof of \Cref{RANDLLL}]
In \Cref{thm:baseLLL}, we saw an algorithm $\mathcal{A}$ that solves any $n$-event LLL under the criterion $p(ed)^{32} < 1$ in $T_{n, d} = O(d^2+ \log^{1/4} n)$ rounds. We now explain how to bootstrap this algorithm to run in $2^{O(\sqrt{\log\log n})}$ rounds, on bounded degree graphs.

Inspired by the idea of Chang and Pettie~\cite{chang2017time}, we will lie to $\mathcal{A}$ and say that the LLL graph has $n^* \ll n$ nodes, for a value of $n^*$ to be fixed later. Then, $\mathcal{A}_{n^*}$ runs in  $T_{n^*, d} = O(d^2+ \log^{1/4} n^*)$ rounds. In this algorithm, the probability of any local failure (i.e., a bad event of LLL happening) is at most $1/n^*$. We can view this as a new system of bad events which satisfies a much stronger LLL criterion. In particular, we consider each of the previous bad LLL events as a bad event of the new LLL system, on the space of the random values used by $\mathcal{A}_{n^*}$, but now we connect two bad events if their distance is at most $2T_{n^*, d}+1$. Notice that if two events are not connected in this new LLL, then in algorithm $A_{n^*, d}$, they depend on disjoint sets of random variables and thus they are independent. 

The degree of the new LLL system is $d' = d^{2T_{n^*, d}+1} = d^{O(d^2+ \log^{1/4} n^*)}$. On the other hand, the probability of the bad events of the new system is at most $p'=1/n^*$. Hence, the polynomial LLL criterion is satisfied with exponent $\lambda' = \frac{\log_{d} n^*}{O(d^2 + \log^{1/4} n^*)}$. We choose $n^* = \log n$, which, for $d=O((\log\log n)^{1/5})$, means $\lambda' = \Omega(\sqrt{\log\log n})$. Hence, this new LLL system can be solved using the LLL algorithm of \Cref{thm:baseLLL} in time 
\begin{align*}
& (d')^2 + \lambda'\cdot  \log^{1/\lambda'} n \cdot 2^{O(\sqrt{\log\log n})} = \\ 
& d^{O(d^2 + (\log\log n)^{1/4})} + \sqrt{\log\log n} \cdot (\log n)^{1/\Omega(\sqrt{\log\log n})} \cdot 2^{O(\sqrt{\log\log n})} = 2^{O(\sqrt{\log\log n})}.
\end{align*}
We should note that these are rounds on the new LLL system, but each of them can be performed in $2T_{n^*, d}+1 = O(d^2+ \log^{1/4} n^*) = O(\sqrt{\log\log n})$ rounds on the original graph. Hence, the overall complexity is still $2^{O(\sqrt{\log\log n})}$.
\end{proof}

We next state another result obtained via this speedup method, targeting higher degree graphs, which we will use in our coloring algorithms. The proof is deferred to \Cref{app:GeneralLLLAlgo}.  
\begin{lemma}\label[lemma]{lemma:SpecialSpeedup}
Let $\mathcal{A}$ be a randomized $\mathsf{LOCAL}$ algorithm that solves some LCL problem $\mathcal{P}$ on $n$-node graphs with maximum degree $d \leq 2^{O(\log^{1/4}\log n)}$ in $O(\log^{1/4} n)$ rounds. Then, it is possible to transform $\mathcal{A}$ into a new randomized $\mathsf{LOCAL}$ algorithm $\mathcal{A}'$ that solves $\mathcal{P}$, w.h.p., in $2^{O(\sqrt{\log\log n})}$ rounds.
\end{lemma}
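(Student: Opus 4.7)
The plan is to mimic the bootstrapping argument in the proof of \Cref{RANDLLL}, but starting from the given LCL algorithm $\mathcal{A}$ instead of the base LLL algorithm, and accommodating the mildly super-constant maximum degree $d \leq 2^{O(\log^{1/4}\log n)}$. As in Chang--Pettie~\cite{chang2017time}, I will lie to $\mathcal{A}$ about the network size, telling it the graph has $n^{*} \ll n$ nodes, and then view the resulting residual problem as an LLL that is sparse enough to be solved by \Cref{thm:baseLLL}.

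First, I would run $\mathcal{A}$ on the input graph while feeding it the input parameter $n^{*} := \log n$; this takes $T := O(\log^{1/4} n^{*}) = O((\log\log n)^{1/4})$ rounds, and by the standard argument, the probability that $\mathcal{A}$'s output violates the LCL checker inside any fixed radius-$O(1)$ neighborhood is at most $1/n^{*} = 1/\log n$. Next, I would build an LLL on the space of random bits used by $\mathcal{A}$: for every possible local LCL violation introduce a bad event, and connect two bad events in the dependency graph iff their centers are within distance $2T + O(1)$ in the original graph (so that disconnected events depend on disjoint random bits, hence are independent). This produces an LLL with $p' \le 1/\log n$ bad-event probability and dependency degree
\[
d' \;\le\; d^{\,O(T)} \;=\; \exp\!\Bigl(O\bigl((\log\log n)^{1/4}\bigr)\cdot O\bigl((\log\log n)^{1/4}\bigr)\Bigr) \;=\; 2^{O(\sqrt{\log\log n})}.
\]

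Now I would verify that this new system satisfies the polynomial LLL criterion $p'(ed')^{4\lambda'}<1$ of \Cref{thm:baseLLL} for an exponent $\lambda'$ that is $\Omega(\sqrt{\log\log n})$: indeed, $\log(1/p') = \log\log n$ while $\log(ed') = O(\sqrt{\log\log n})$, so one may take $\lambda' = \Theta(\sqrt{\log\log n}) \geq 8$. I would then invoke \Cref{thm:baseLLL} on this new $n$-event LLL. Its running time on the new system is
\[
O((d')^2) + \lambda' \cdot (\log n)^{1/\lambda'} \cdot 2^{O(\sqrt{\log\log n})} \;=\; 2^{O(\sqrt{\log\log n})},
\]
since $(d')^2 = 2^{O(\sqrt{\log\log n})}$ and $(\log n)^{1/\Omega(\sqrt{\log\log n})} = 2^{O(\sqrt{\log\log n})}$. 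Each round of the new LLL system can be simulated in $2T + O(1) = O((\log\log n)^{1/4})$ rounds of the original graph, so the overall complexity is still $2^{O(\sqrt{\log\log n})}$, and the computed assignment of random bits makes $\mathcal{A}_{n^{*}}$ succeed everywhere, yielding a valid solution to $\mathcal{P}$.

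The main subtle point, and the part that requires a careful calculation rather than being automatic, is balancing the choice of $n^{*}$ against the growth of $d$: one needs both $\log d' = O(T \log d)$ to stay at most $\sqrt{\log\log n}$ (which is why the hypotheses cap $d$ at $2^{O(\log^{1/4}\log n)}$ and $T$ at $O(\log^{1/4} n)$), and simultaneously $\log(1/p') = \log n^{*}$ large enough so that $\lambda' = \Omega(\sqrt{\log\log n})$. The choice $n^{*} = \log n$ is exactly the sweet spot where both constraints hold, and this is the only place where the precise exponents $1/4$ in the hypothesis are used.
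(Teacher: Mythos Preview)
Your proposal is correct and follows essentially the same approach as the paper: run $\mathcal{A}$ with the lied size $n^{*}=\log n$, observe this yields an LLL with $p'\le 1/\log n$ and $d'=d^{O(T)}=2^{O(\sqrt{\log\log n})}$, hence polynomial criterion exponent $\lambda'=\Theta(\sqrt{\log\log n})$, and then invoke \Cref{thm:baseLLL} and account for the $O(T)$-round simulation overhead. Your added discussion of why the $1/4$ exponents and the choice $n^{*}=\log n$ are exactly what balance $\log d'$ against $\log(1/p')$ is a nice clarification beyond what the paper spells out.
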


%
%




\section{Defective Coloring}
\label{sec:DefectiveColoring}

An $f$-defective coloring is a (not necessarily proper) coloring of nodes, where each node has at most $f$ neighbors with the same color. In other words, in an $f$-defective coloring, each color class induces a subgraph with maximum degree $f$. Chung, Pettie, and Su \cite{chung2014LLL} gave an $O(\log n)$-round distributed algorithm for computing an $f$-defective coloring with $O(\Delta/f)$ colors. We here improve this complexity to $2^{O\paren{\sqrt{\log \log n}}}$ rounds.

\begin{theorem}\label[theorem]{defective-col}
There is a $2^{O\paren{\sqrt{\log \log n}}}$-round randomized distributed algorithm that computes an $f$-defective $O(\Delta/f)$-coloring in an $n$-node graph with maximum degree $\Delta$, w.h.p., for any integer $f \geq 0$.
\end{theorem}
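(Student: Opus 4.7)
We formulate $f$-defective $O(\Delta/f)$-coloring as an LLL: using $k=\lceil C\Delta/f\rceil$ colors for a sufficiently large constant $C$, each node picks a uniform color from $[k]$, and the bad event $A_v$ at $v$ is that $v$ has more than $f$ same-colored neighbors. A Chernoff bound gives $\Pr[A_v]\le e^{-\Omega(f)}$, and the dependency degree is $d=O(\Delta^2)$ since $A_u$ and $A_v$ share variables iff $u$ and $v$ are at distance at most $2$ in $G$. The plan has two parts: handle the ``easy'' parameter regime via the base LLL algorithm plus \Cref{lemma:SpecialSpeedup}, and reduce the general case to the easy regime by an iterated partial coloring.

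\textbf{Easy regime: $f\ge c\log\Delta$.} Here $p\le \Delta^{-K}$ for any fixed constant $K$, so the polynomial criterion $p(ed)^{32}<1$ holds with room to spare. If additionally $\Delta\le 2^{O(\log^{1/4}\log n)}$, \Cref{thm:baseLLL} with $\lambda=8$ solves the instance in $O(\log^{1/4}n)$ rounds, and \Cref{lemma:SpecialSpeedup} then boosts this to $2^{O(\sqrt{\log\log n})}$ rounds. For larger $\Delta$, the $O(d^2)$ additive term in \Cref{thm:baseLLL} is prohibitive, so following the strategy for overcoming limitation (L2) announced in the introduction, we replace the first (degree-dependent) step of the base LLL by a single round of independent uniform color sampling. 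Since $f=\Omega(\log\Delta)$, the set $B$ of nodes $v$ such that $A_w$ holds for some $w\in N_G[v]$ satisfies $\Pr[v\in B]\le (\Delta+1)e^{-\Omega(f)}\le \Delta^{-K'}$ for any constant $K'$, and this event depends only on randomness within distance $2$. Invoking \Cref{Shattering} yields small components with good network decompositions, on which the deterministic LLL algorithm of \Cref{DETLLL} then completes the coloring in $2^{O(\sqrt{\log\log n})}$ rounds overall.

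\textbf{Reducing limitation (L1): small $f$.} When $f < c\log\Delta$, the direct LLL formulation fails the polynomial criterion. We bypass this using iterated partial colorings: first compute an $f_0$-defective $O(\Delta/f_0)$-coloring with $f_0=c\log\Delta$ (easy regime), then within each color class---a subgraph of maximum degree $f_0$---recursively compute an $f$-defective $O(f_0/f)$-coloring. Concatenating the labels from the two levels gives an $f$-defective $O(\Delta/f)$-coloring overall. Since each recursive level drops the maximum degree from $\Delta'$ to $O(\log\Delta')$, only $O(\log^*\Delta)=O(\log^* n)$ levels are needed, each of cost $2^{O(\sqrt{\log\log n})}$, so the total is still $2^{O(\sqrt{\log\log n})}$ rounds.

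\textbf{Main obstacle.} The principal challenge is that neither \Cref{RANDLLL} nor \Cref{lemma:SpecialSpeedup} applies out of the box: the graph may have unbounded degree (L2), and the polynomial LLL criterion $p(ed)^{32}<1$ may fail (L1). Limitation (L2) is circumvented by a problem-specific partial sampling step---a single round of independent uniform color choices---which exploits the sharp Chernoff concentration of same-colored-neighbor counts to replace the generic $O(d^2)$ preprocessing of the base LLL algorithm. Limitation (L1) is handled by the iterated partial coloring above, which reduces the problem to the easy regime in $O(\log^* n)$ stages.
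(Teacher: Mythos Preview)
Your high-level plan matches the paper's: iterate a ``bucketing'' step that reduces the maximum degree, and within each step handle large~$\Delta$ by one round of sampling plus shattering plus \Cref{DETLLL}, and handle moderate~$\Delta$ via \Cref{lemma:SpecialSpeedup}. However, two of your parameter choices break the argument.

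First, the color count. Each of your recursive levels uses $O(\Delta_i/\Delta_{i+1})$ colors with a fixed constant hidden in the $O(\cdot)$, so after $t=\Theta(\log^*\Delta)$ levels the product telescopes to $(\Delta/f)\cdot C^{t}$ for some constant $C>1$, which is $(\Delta/f)\cdot 2^{\Theta(\log^*\Delta)}$ rather than $O(\Delta/f)$. The paper avoids this by taking only $(1+\eps_i)$ slack per level, with $\eps_i=\log^{-1/2}\Delta_{i-1}$ and $\Delta_i=\log^5\Delta_{i-1}$, so that $\prod_i(1+\eps_i)=O(1)$; only the (at most three) levels that invoke \Cref{Bucketing} incur a factor~$2$, contributing a bounded constant overall.

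Second, in your ``easy regime, large $\Delta$'' case, the bound $p\le e^{-\Omega(f)}=\Delta^{-K'}$ with $K'$ an arbitrary \emph{constant} is too weak. To finish in $2^{O(\sqrt{\log\log n})}$ rounds you must run \Cref{DETLLL} on a $(\lambda,\gamma)$ decomposition with $\lambda\cdot\gamma=2^{O(\sqrt{\log\log n})}$, which by \Cref{Shattering}~(P3) forces $\lambda=\Omega(\sqrt{\log\log n})$; but then the criterion $p(ed)^\lambda<1$ with $d=\Theta(\Delta^2)$ demands $K'=\Omega(\sqrt{\log\log n})$, not a constant. The paper gets $p=e^{-\Omega(\log^4\Delta)}=\Delta^{-\Omega(\log^3\Delta)}$ precisely from the choice $\Delta'=\log^5\Delta$ with $(1+\eps)$ slack, which makes the criterion hold for $\lambda=\sqrt{\log\log n}$ once $\Delta\ge 2^{\Omega((\log\log n)^{1/4})}$. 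If instead you raised your easy-regime threshold to $f_0=\Theta(\sqrt{\log\log n}\cdot\log\Delta)$ to fix this, your recursion $\Delta\mapsto f_0$ would stall near $\Theta(\sqrt{\log\log n}\cdot\log\log\log n)$ and never reach constant~$f$.
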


\paragraph{Direct LLL Formulation of Defective Coloring}
Chung, Pettie, and Su \cite{chung2014LLL} give a formulation of $f$-defective $\ceil{2\Delta/f}$-coloring as LLL as follows. Each node picks a color uniformly at random. For each node $v$, there is a bad event $D_v$ that $v$ has more than $f$ neighbors assigned the same color as $v$. The probability of a neighbor $u$ having the same color as $v$ is $f/(2\Delta)$. Hence, the expected number of neighbors of $v$ with the same color as $v$ is at most $f/2$. By a Chernoff bound, the probability of $v$ having more than $f$ neighbors with the same color is at most $e^{-f/6}$. Moreover, the dependency degree between the bad events $D_v$ is $d \leq \Delta^2$. Therefore, $p(ed)^{32} \leq e^{-f/6+32+64 \log \Delta}< 1$ for $f=\Omega( \log \Delta)$. 

We are unable to directly apply our LLL algorithm of \Cref{RANDLLL} to this formulation, because: (A) For $f = o(\log \Delta)$, this LLL formulation does not satisfy the polynomial criterion $p(ed)^{32}<1$, (B) even if this criterion is satisfied, the dependency degree $d$ may be larger than what \Cref{RANDLLL} can handle. 


%

\paragraph{Iterative LLL Formulation of Defective Coloring  via Bucketing}
Instead of directly finding an $f$-defective $O(\Delta/f)$-coloring with one LLL problem---i.e., a partition of $G$ into $O(\Delta/f)$ buckets with maximum degree $f$ each ---we gradually approach this goal by iteratively partitioning the graph into buckets, until they have maximum degree $f$. In other words, we slow down the process of partitioning. We gradually decrease the degree, moving from maximum degree $x$ to $\log^5 x$ in one iteration. We can see each of these bucketing steps --- that is, the partitioning into subgraphs --- as a partial coloring, which fixes some bits of the final color. 
Each of these slower partitioning steps can be formulated as an LLL. The function $x \mapsto \log^5x$ is chosen large enough for the corresponding LLL to satisfy the polynomial criterion, and small enough so that decreasing the degree from $\Delta$ to $f$ does not take too many iterations, namely $O(\log^*\Delta)$ iterations only.



We now explain how a defective coloring problem can be solved using iterated bucketing. We first formulate the bucketing as an LLL problem satisfying the polynomial LLL criterion, and present ways for solving this LLL for different ranges of $\Delta$. Then, we explain how iterated application of solving these bucketing LLLs leads to a partition of the graph into $O(\Delta/f)$ many degree-$f$ buckets. 

\paragraph{One Iteration of Bucketing}
In one bucketing step, we would like to partition our graph with degree $\Delta$ into roughly $\Delta/\Delta'$ buckets, each with maximum degree $\Delta'$, for a $\Delta'=\Omega(\log^5 \Delta)$. Notice that we can achieve the defective coloring of \Cref{defective-col}, by repeating this bucketing procedure, iteratively. See the proof of \Cref{defective-col}, which appears in \Cref{app:DefectiveColoring}, for details of iterative bucketing. Each iteration of bucketing can be formulated as an LLL as follows.

%
%
 %

\paragraph{LLL Formulation of Bucketing} Let $k=(1+\eps)\Delta/\Delta'$ for $\eps=\log^{2} \Delta/\sqrt{\Delta'}$. We consider the random variables assigning each node a bucket number in $[k]$. Then, we introduce a bad event $D_v$ for node $v$ if more than $\Delta'$ neighbors of $v$ are assigned the same number as $v$. In expectation, the number of neighbors of a node in the same bucket is at most $\Delta'/(1+\eps)$. By a Chernoff bound, the probability of having more than $\Delta'$ neighbors in the same bucket is at most $p= e^{-\Omega(\eps^2 \Delta')}=e^{-\Omega(\log^4 \Delta)}$. Moreover, the dependency degree between these bad events is $d \leq \Delta^2$. Hence, this LLL satisfies the polynomial criterion.

If $\Delta \leq O(\log^{1/10} \log n)$, then $d=O(\log^{1/5}\log n)$, and thus we can directly apply the LLL algorithm of \Cref{RANDLLL} to compute such a bucketing in $2^{O(\sqrt{\log\log n})}$ rounds. For larger values of $\Delta$, however, we cannot apply \Cref{RANDLLL} directly. The following lemma, the proof of which is deferred to \Cref{app:DefectiveColoring}, discusses how we handle this range by sacrificing a $2$-factor in the number of buckets. In a nutshell, the idea is to just perform one sampling step of bucketing, and then to deal with nodes with too large degree separately, by setting up another bucketing LLL. While the first LLL on the whole graph could not be solved directly, the second LLL is formulated only for a ``small'' subset of nodes, which allows an efficient solution. Because of the two trials of solving an LLL, we lose a $2$-factor in the total number of buckets.

\begin{lemma}\label[lemma]{Bucketing}
For $\Delta \geq \Omega({\log^{1/10}\log n})$, there is a $2^{O(\sqrt{\log \log n})}$-round randomized distributed algorithm that computes a bucketing into $2k$ buckets with maximum degree $\Delta'$ each, for $\Delta'=\Omega(\log^5 \Delta)$,  $\eps=\log \Delta/\sqrt{\Delta'}$, and $k=(1+\eps)\Delta/\Delta'$, with high probability.
\end{lemma}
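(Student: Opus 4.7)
The plan is a two-stage LLL scheme that yields the stated factor-$2$ loss in bucket count: a random first pass places most nodes into buckets $[k]$, and a second LLL reassigns the remaining ``bad'' nodes into a fresh range $\{k{+}1,\dots,2k\}$.

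\emph{Stage 1.} Every node $v$ independently picks $c(v)\in[k]$ uniformly at random. The Chernoff computation in the LLL formulation preceding the lemma gives that $v$ is \emph{bad} --- has more than $\Delta'$ same-bucket neighbors --- with probability at most $p=e^{-\Omega(\log^4\Delta)}$, and this event depends only on the randomness of $v$ and its $G$-neighbors. Let $B$ denote the bad set; each node can test membership in $B$ in one round.

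\emph{Stage 2.} Because $p$ is super-polynomially small in $\Delta$, we apply the shattering lemma (\Cref{Shattering}) to $B$ in the graph $G^2$ with locality $c_2=1$; this yields, w.h.p., a $(\lambda, O(\log^{1/\lambda}n\cdot\log^2\log n))$ network decomposition of each connected component of $G^2[B]$, computable deterministically in $\lambda\cdot\log^{1/\lambda}n\cdot 2^{O(\sqrt{\log\log n})}$ rounds. On top of this we set up the second LLL, restricted to $B$: each $v\in B$ independently samples a fresh bucket $c'(v)\in\{k{+}1,\dots,2k\}$ uniformly at random, with bad event ``more than $\Delta'$ neighbors of $v$ in $B$ share $c'(v)$''. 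Since the two bucket ranges $[k]$ and $\{k{+}1,\dots,2k\}$ are disjoint, good nodes cannot interfere with any second-LLL conflict; the same Chernoff bound gives $\Pr[D'_v]\leq p$, the dependency degree is $d\leq\Delta^2$, and the polynomial criterion $p(ed)^\lambda<1$ holds for $\lambda$ up to $\Theta(\log^3\Delta)$. We then invoke \Cref{DETLLL} with the previously computed network decomposition to find an assignment $c'$ in $O(\lambda\cdot\log^{1/\lambda}n\cdot\log^2\log n)$ rounds. The final output is $c$ on the good nodes and $c'$ on the bad nodes; by construction we use exactly $2k$ buckets, each with max degree at most $\Delta'$.

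\emph{Main obstacle.} The crux is the balancing act in $\lambda$: the LLL criterion forces $\lambda\leq O(\log^3\Delta)$, while the time budget $2^{O(\sqrt{\log\log n})}$ forces $\lambda=\Omega(\sqrt{\log\log n})$, and these are jointly feasible only once $\Delta\geq 2^{\Omega((\log\log n)^{1/6})}$. In the leftover low-degree sub-range $\Delta\in[\log^{1/10}\log n,\;2^{O((\log\log n)^{1/6})}]$, the shattering plus deterministic route does not on its own meet the budget; there we instead solve the second LLL using the base algorithm of \Cref{thm:baseLLL} (taking $\lambda=8$), which runs in $O(\log^{1/4}n)$ rounds because $\Delta^2\le 2^{O((\log\log n)^{1/4})}$ keeps the $O(d^2)$ term under control and the $\log^{1/8}n\cdot 2^{O(\sqrt{\log\log n})}$ tail is dominated by $\log^{1/4}n$. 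We then invoke the speedup of \Cref{lemma:SpecialSpeedup} --- whose degree window $2^{O(\log^{1/4}\log n)}$ comfortably contains our $\Delta^2$ in this sub-range --- to reduce the total running time to $2^{O(\sqrt{\log\log n})}$ rounds, thereby covering the full range of $\Delta$ claimed by the lemma.
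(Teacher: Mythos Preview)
Your overall two-stage scheme (random bucketing into $[k]$, then a fresh LLL on the bad set $B$ into $\{k{+}1,\dots,2k\}$) and the two-case split by the size of $\Delta$ match the paper's proof, and your high-$\Delta$ case is correct.

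There is, however, a genuine gap in your low-$\Delta$ case. The problem is the $O(d^2)=O(\Delta^4)$ additive term in the complexity of \Cref{thm:baseLLL}: it depends on $\Delta$, not on the (lied-about) network size. The speedup in \Cref{lemma:SpecialSpeedup} works by running the base algorithm with $n^*=\log n$; for the new LLL to have small enough dependency degree, the base algorithm must actually run in $O(\log^{1/4} n^*)=O((\log\log n)^{1/4})$ rounds on the true graph. But your base algorithm runs in $O(\Delta^4)+O((\log\log n)^{1/4})$ rounds, and near the top of your low-degree window, say $\Delta=2^{\Theta((\log\log n)^{1/6})}$, this is $2^{\Theta((\log\log n)^{1/6})}\gg (\log\log n)^{1/4}$. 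The resulting dependency degree $d'=\Delta^{2T+1}$ then explodes and the bootstrap fails. Your two sub-ranges therefore do not meet: the high-$\Delta$ argument needs $\Delta\ge 2^{\Omega((\log\log n)^{1/6})}$, while your low-$\Delta$ argument (once you trace through the speedup) only works for $\Delta$ roughly below $(\log\log n)^{1/8}$.

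The paper avoids this by \emph{not} invoking \Cref{thm:baseLLL} in the low-$\Delta$ case. Stage~1's random bucketing already shatters the graph, so one applies \Cref{Shattering}~(P3) directly to $B$ to get an $(8,O(\log^{1/4} n^*))$ decomposition and then runs \Cref{DETLLL} on top of it; both steps have complexity depending only on $n^*$, with no $\Delta$-dependent additive term. That yields a clean $O(\log^{1/4} n^*)$-round base algorithm to which \Cref{lemma:SpecialSpeedup} applies across the entire range $\Delta\le 2^{O(\log^{1/4}\log n)}$. Replacing your use of \Cref{thm:baseLLL} with this direct Shattering$+$\Cref{DETLLL} combination closes the gap.
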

\section{Frugal coloring}
\label{sec:FrugalColoring}

A \emph{$\beta$-frugal coloring} is a proper coloring in which no color appears more than $\beta$ times in the neighborhood of any node. 
%
%
We improve the complexity of $\beta$-frugal $O(\Delta^{1+1/\beta})$-coloring from $O(\log n)$ by Chung, Pettie, and Su \cite{chung2014LLL} to $2^{O(\sqrt{\log \log n})}$.

\begin{theorem}\label[theorem]{frugal}
There is a $2^{O(\sqrt{\log \log n})}$-round randomized distributed algorithm for a $\beta$-frugal $(120 \cdot \Delta^{1 + 1/\beta})$-coloring\footnote{We remark that we have not tried to optimize this constant $120$.} in a graph with $n$ nodes and maximum degree $\Delta$, w.h.p., for any integer $\beta \geq 1$. \end{theorem}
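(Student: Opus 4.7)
The plan parallels the defective coloring proof in \Cref{sec:DefectiveColoring}.

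\textbf{Direct LLL formulation.} Let $C = 120\Delta^{1+1/\beta}$. Each node picks a color in $[C]$ independently and uniformly at random. For each node $v$ I introduce the bad event $F_v$ that some color appears more than $\beta$ times in $N(v)\cup\{v\}$; since $\beta \geq 1$, this single family of events simultaneously enforces both properness and $\beta$-frugality. A union bound over the $C$ colors and $(\beta+1)$-subsets of $N(v)\cup\{v\}$ gives
\[
\Pr[F_v] \;\leq\; C \cdot \binom{\Delta+1}{\beta+1} \cdot C^{-(\beta+1)} \;\leq\; \frac{(\Delta+1)^{\beta+1}}{(\beta+1)!\,(120\Delta^{1+1/\beta})^\beta} \;\leq\; \frac{e}{(\beta+1)!\,120^{\beta}},
\]
and the dependency degree of $F_v$ is $d \leq \Delta^2$. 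Thus the polynomial criterion $p(ed)^{32}<1$ is satisfied immediately once $\beta = \Omega(\log \Delta)$; and if in addition $\Delta = O(\log^{1/10}\log n)$, then \Cref{RANDLLL} solves the formulation directly in $2^{O(\sqrt{\log\log n})}$ rounds.

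\textbf{The two limitations.} As in the defective case, two obstacles remain: (A) when $\beta$ is small the polynomial criterion need not hold, and (B) when $\Delta$ is large the dependency degree $\Delta^2$ is outside the reach of \Cref{RANDLLL}. I would address (A) by iterated bucketing and (B) by invoking \Cref{lemma:SpecialSpeedup} (or \Cref{Bucketing}, which already leans on it) in the moderate-degree regime.

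\textbf{Iterated bucketing.} I apply \Cref{Bucketing} iteratively, partitioning the graph into vertex-buckets of geometrically shrinking maximum degree, $\Delta \mapsto \log^5 \Delta$, while consuming only a $(1+\eps)\Delta/\log^5\Delta$-size sub-palette at each level. After $O(\log^*\Delta)$ iterations the per-bucket maximum degree drops below $O(\log^{1/10}\log n)$, at which point the direct LLL above applies inside each leaf bucket in parallel, solving the residual $\beta$-frugal $120(\Delta')^{1+1/\beta}$-coloring for $\Delta'$ the leaf-bucket degree. Sub-palettes of disjoint buckets at the same level can be reused, so the final palette is the product of the top-level partition palette with the per-leaf palette.

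\textbf{Main obstacle.} The crux of the argument is the palette-size accounting: I must verify that the $O(\log^*\Delta)$ bucketing levels, together with the per-bucket frugal palette $120(\Delta')^{1+1/\beta}$ at the leaves, telescope to at most $120\Delta^{1+1/\beta}$ colors overall. This is the analogue of the palette accounting done in the defective coloring proof, but with the $\beta$-dependent exponent $1+1/\beta$ in place of the $1$ that appeared there; in particular, the bucket-refinement ratio must be tuned as a function of $\beta$, and the slack across levels must be distributed carefully so that the constant $120$ simultaneously absorbs (i) the constant-factor losses from \Cref{Bucketing} (including its two-trial structure), (ii) the leaf-bucket frugal palette, and (iii) the $(1+\eps)$ factors accumulated along the bucketing chain. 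I expect this bookkeeping to go through by the same geometric-series telescoping that powers the defective case.
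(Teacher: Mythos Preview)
Your bucketing plan has a genuine gap: frugality is a \emph{cross-bucket} constraint, and defective-coloring buckets do not control it. Concretely, an $f$-defective partition only guarantees that each node has at most $f$ neighbors \emph{in its own bucket}; a node $u\in B_i$ can still have up to $\Delta$ neighbors in some other bucket $B_j$. Now suppose you $\beta$-frugally color $B_j$ with $120(\Delta')^{1+1/\beta}$ colors. The frugality constraint you must enforce is that \emph{every} node in $V$ sees each color at most $\beta$ times among its neighbors; in particular this applies to $u\notin B_j$ and the colors in $B_j$'s palette. But your leaf-bucket LLL, with bad events only for vertices of $B_j$, says nothing about $u$. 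If you add $F_u$ for outside vertices, the relevant degree is $d(u,B_j)\le\Delta$, not $\Delta'$, and the bound
\[
\Pr[F_u]\;\le\;\frac{\Delta^{\beta+1}}{(\beta+1)!\,\bigl(120(\Delta')^{1+1/\beta}\bigr)^{\beta}}
\]
is useless once $\Delta'\ll\Delta$. (Your palette-reuse claim fails for the same reason: two sibling buckets sharing a color $c$ can each contribute $\beta$ copies of $c$ to $N(u)$.) The ``strong'' partition you would need---every vertex has at most $\Delta'$ neighbors in \emph{every} bucket---is itself a frugal coloring, so the reduction is circular.

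This is exactly why the paper does \emph{not} bucket. Instead it works with \emph{partial} $\beta$-frugal colorings and tracks the \emph{base-graph degree} $\Delta'$ of the uncolored set $V'$: the maximum over \emph{all} $v\in V$ of $|N(v)\cap V'|$. A sampling step (\Cref{frugal-iteration}) colors some of $V'$ using $O(x\,\Delta'\Delta^{1/\beta})$ fresh colors while driving the base-graph degree down by a factor $5^{-x}$; iterating (\Cref{lemma:frugal-iteration-sampling}, \Cref{lemma:whole-iterations}) with a growing schedule of $x$'s telescopes the color count to $80\,\Delta^{1+1/\beta}$ and brings $\Delta'$ to $O(\sqrt{\Delta})$, after which a final completion step (\Cref{lemma:frugal-color-good}) spends $40\,\Delta^{1+1/\beta}$ more colors. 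The key point is that the palette size at each step scales with $\Delta'\cdot\Delta^{1/\beta}$, not $(\Delta')^{1+1/\beta}$: the $\Delta^{1/\beta}$ factor is precisely what pays for the frugality of \emph{outside} vertices, which always retain up to $\Delta$ neighbors in total.
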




\paragraph{Direct LLL Formulation of Frugal Coloring}
Molloy and Reed \cite[Theorem 19.3]{michael2002graph} formulated frugal coloring as an LLL problem in the following way: Each node picks a color uniformly at random. There are two types of bad events: On the one hand, we have the \emph{properness} condition, i.e., a bad event $M_{u,v}$, for each $\{u,v\}\in E$, which happens if $u$ and $v$ have the same color. On the other hand, the \emph{frugality} condition --- requiring that no node has more than $\beta$ neighbors of the same color. That is, we have one bad event $F_{u_1, \dotsc, u_{\beta +1}}$ for each set $u_1, \dotsc, u_{\beta + 1} \in N(v)$ of nodes in the neighborhood of some node $v$, which happens if all these nodes $u_1, \dotsc, u_{\beta + 1}$ are assigned the same color. For palettes of size $C$, the probability of a bad event is at most $1/C$ and $1/C^{\beta}$ for type 1 and type 2, respectively. Each event depends on at most $(\beta+1)\Delta$ type 1 and at most $(\beta +1) \Delta {{\Delta} \choose \beta}$ type 2 events.

\paragraph{Iterated LLL Formulation of Frugal Coloring via Partial Frugal Coloring}
While the above formulation is enough to satisfy the asymmetric tight LLL criterion for $C=O(\Delta^{1+1/\beta})$, it does not satisfy the (symmetric) polynomial LLL.
Therefore, the algorithm of \Cref{RANDLLL} is not directly applicable. 
We show how to break down the frugal coloring problem into a sequence of few partial coloring problems, coloring only some of the nodes that have remained uncolored, each of them satisfying the polynomial LLL criterion. 



\paragraph{Roadmap} In \Cref{sec:sampling-frugal}, we formalize our notion of partial frugal colorings and present a method for sampling them. Then, in \Cref{LLLPartialFrugal}, we show how to use this sampling to formulate the problem of finding a partial frugal coloring guaranteeing progress (to be made precise) as a polynomial LLL and how to solve it. In \Cref{sec:frugal-completing}, we explain how --- after several iterations of setting up and solving these ``progress-guaranteeing'' LLLs, gradually extending the partial frugal coloring --- we can set up and solve one final polynomial LLL for completing the partial coloring, also based on the sampling method presented in \Cref{sec:sampling-frugal}. 

\subsection{Sampling a Partial Frugal Coloring}\label{sec:sampling-frugal}
\begin{definition}[Partial Frugal Coloring]
A partial $\beta$-frugal coloring of $G=(V,E)$ is an assignment of colors to a subset $V^* \subseteq V$ such that it is proper in $G[V^*]$ and no node in $V$ has more than $\beta$ neighbors with the same color. In other words, it is a $\beta$-frugal coloring of $G[V^*]$ with the additional condition that no uncolored node in $V' :=V\setminus V^*$ has more than $\beta$ neighbors in $V^*$ with the same color. 
\end{definition}
A partial coloring naturally splits the base graph $G$ into two parts: $G[V^*]$ induced by colored nodes and $G[V']$ induced by uncolored nodes. However, the problem of extending or completing a partial frugal coloring does not only depend on $G[V']$, but also on the base graph $G$. That is why we introduce the notion of base-graph degree, a property of the uncolored set $V'$ with respect to the base graph $G$. 

%
%

\begin{definition}[Base-Graph Degree of a Partial (Frugal) Coloring]
Given a partial coloring, we call the number $d(v,V')$ of neighboring uncolored nodes of a node $v \in V$ its \emph{base-graph degree} into the uncolored set $V'$. Moreover, we call the maximum base-graph degree $\Delta'$ of a node $v \in V$ into $V'$ the base-graph degree of $V'$. 
\end{definition}
%
In the following, we show how one can sample a partial frugal coloring, thus randomly assign some of the nodes in a set $V'$ of uncolored nodes a color. The main idea of our sampling process is to pick a color uniformly at random, and then discard it if this choice would lead to a violation (in terms of properness and frugality). In order to increase the chances of a node being colored, instead of just sampling one color, each node $v$ samples $x$ different colors from $x$ different palettes at the same time, for some parameter $x\geq 1$, and then picks the first color that does not lead to a violation. If $v$ has no such violation-free among its $x$ choices, then $v$ remains uncolored. 

The next lemma, the proof of which is deferred to \Cref{app:FrugalColoring}, analyzes the probability of two kinds of events: Event (E1) that a node is uncolored. This event is important if we aim to color all the nodes in $V'$. Event (E2) that the base-graph degree of a node into the set of uncolored nodes in $V'$ is too large. This event is important if we do not aim at a full coloring of all the nodes in $V'$, but we want to ensure that we make enough progress in decreasing the base-graph degree of the uncolored set.  
\begin{lemma}\label[lemma]{frugal-iteration}
Let $G=(V,E)$ be a graph with maximum degree $\Delta$, $V' \subseteq V$ an uncolored set with base-graph degree $\Delta'$,  
$\beta \in [\Delta]$, and $x \geq 1$. Then there is an $O(1)$-round randomized distributed algorithm that computes a partial $\beta$-frugal $(20 \cdot x  \cdot \Delta' \cdot \Delta^{1/\beta})$-coloring of some of the nodes in $V'$ such that 
\begin{enumerate}[(i)]
\item the probability that a node in $V'$ is uncolored is at most $10^{-x}$,  
\item the probability that the base-graph degree of a node $v \in V$ into the uncolored subset of $V'$ is larger than $5^{-x}\cdot\Delta'$ is at most $e^{-\Omega(5^{-x} \cdot \Delta')}$. 
\end{enumerate} 
\end{lemma}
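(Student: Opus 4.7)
My plan is to implement and analyze the ``sample $x$ candidates in parallel'' procedure sketched just before the lemma. Split the total of $20 x \Delta' \Delta^{1/\beta}$ colors into $x$ disjoint sub-palettes $P_1, \ldots, P_x$, each of size $C := 20 \Delta' \Delta^{1/\beta}$. In one round, every $v \in V'$ independently draws $c_i(v) \in P_i$ uniformly at random, for each $i \in [x]$. In a second round, using $2$-hop information, $v$ commits to the smallest-index $c_i(v)$ that is \emph{valid}: (a) no $u \in N(v) \cap V'$ has $c_i(u) = c_i(v)$, and (b) for every $w \in N(v)$, at most $\beta - 1$ other nodes of $N(w) \cap V'$ satisfy $c_i(\cdot) = c_i(v)$. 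If no candidate is valid, $v$ stays uncolored. The disjointness of the sub-palettes ensures that validity for trial $i$ uses only palette $i$'s samples, so the $x$ trials are statistically independent, and the resulting partial coloring is $\beta$-frugal by (a) and (b).

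\textbf{Property (i).} It suffices to bound the per-trial failure probability $q$ by $1/10$; then by independence across trials, $\Pr[v \text{ stays uncolored}] \leq q^x \leq 10^{-x}$. The properness contribution to $q$ is at most $\Delta'/C = 1/(20\Delta^{1/\beta})$, by a union bound over $v$'s at most $\Delta'$ uncolored neighbors. For the frugality contribution at a fixed $w \in N(v)$, a $\beta$-tuple count gives $\Pr[\geq \beta \text{ of } N(w) \cap V' \setminus \{v\} \text{ attempt } c_i(v) \text{ in } P_i] \leq \binom{\Delta'}{\beta}/C^\beta \leq 1/(\beta!\, 20^\beta\, \Delta)$, and a union bound over the $\leq \Delta$ choices of $w \in N(v)$ yields $\leq 1/(\beta!\, 20^\beta) \leq 1/20$. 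Together these give $q \leq 1/10$.

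\textbf{Property (ii).} Set $X_u := \mathbf{1}[u \text{ is uncolored}]$ for $u \in N(v) \cap V'$. By (i), $\mu := \E[\sum_u X_u] \leq 10^{-x}\Delta'$, so the target event $\sum_u X_u > 5^{-x}\Delta' = 2^x \mu$ is a multiplicative deviation of factor $2^x$ above the mean, and a Chernoff bound for independent Bernoullis delivers exactly $\exp(-\Omega(2^x \mu)) = \exp(-\Omega(5^{-x}\Delta'))$. Since each $X_u$ depends on the samples in $u$'s $2$-hop $G$-neighborhood, the $X_u$'s are not independent, so I will partition $N(v) \cap V'$ into $\Delta^{O(1)}$ distance-$5$-independent classes (via a proper coloring of $G^5$), apply Chernoff on each class, and union-bound; the $\Delta^{O(1)}$ loss is absorbed into $\Omega(\cdot)$. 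A read-$k$ Chernoff bound would also work directly.

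\textbf{Anticipated obstacle.} The delicate step is the frugality count in (i): the $\Delta^{1/\beta}$ factor in $C$ is exactly what cancels the $\Delta$ union bound over $w$, which fixes the constant $20$ in the palette size and (downstream) the constant $120$ in \Cref{frugal}. Once that balance is verified, (ii) follows from a routine Chernoff argument with standard dependency-coloring.
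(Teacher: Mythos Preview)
Your treatment of (i) is fine and matches the paper's. The gap is in (ii): the claim that ``the $\Delta^{O(1)}$ loss is absorbed into $\Omega(\cdot)$'' is false, and not only because of the union-bound prefactor---the per-class Chernoff bound itself breaks. Take $x=1$ (the first iteration in the paper's actual use of the lemma), so the target $5^{-x}\Delta'$ is only twice the mean $10^{-x}\Delta'$. After partitioning $N(v)\cap V'$ into $K=\Delta^{\Theta(1)}$ distance-$5$-independent classes, you must control each class sum $S_j$. If you ask $S_j\le 2\mu_j$ for every class (so that $\sum_j S_j\le 5^{-1}\Delta'$), Chernoff gives $\Pr[S_j>2\mu_j]\le e^{-\Omega(\mu_j)}$; but a class with $O(1)$ nodes has $\mu_j=O(1)$, so this is a constant and the union bound over $K$ classes is vacuous. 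If instead you ask $S_j\le 5^{-1}\Delta'/K$, a single class can contain $n_j=\Delta'$ nodes and have mean $\mu_j=\Delta'/10\gg \Delta'/(5K)$, so you are asking for a \emph{downward} deviation. The read-$k$ alternative has the same disease: here $k=\Theta(\Delta^2)$ appears in the denominator of the exponent, giving $e^{-O(5^{-2x}\Delta'/\Delta^2)}$, which for $x=1$, $\Delta'=\Delta$ is $e^{-O(1/\Delta)}$.

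The paper's argument is genuinely different and sidesteps this. It analyzes the $x$ trials one at a time; within a single trial $j$ it first conditions on the tentative colors and shows that $N(v,U_{j-1})$ contains at least $(1-1/20)d$ \emph{distinct} colors with probability $1-e^{-\Omega(d)}$, where $d=d(v,U_{j-1})$. It then observes that for representatives $u_1,\dots,u_t$ of distinct colors the events $\{u_i\in U_j\}$ are \emph{negatively correlated} (conditioned on the colors). This negative correlation is what makes Chernoff applicable with no $\Delta^{O(1)}$ penalty: one obtains $\Pr[d(v,U_j)>d(v,U_{j-1})/5]\le e^{-\Omega(d)}$ per step, and a union bound over the $x$ steps, with geometrically shrinking $d$, yields (ii). The distinct-color count plus negative-correlation step is the missing idea in your proposal.
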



%
%
\subsection{Iterated Partial Frugal Coloring}\label{LLLPartialFrugal}

In the following, we first show how a ``progress-guaranteeing'' partial coloring --- that is, a coloring that decreases the base-graph degree of every node quickly enough --- can be found based on the sampling process presented in \Cref{sec:sampling-frugal}. 
Then, we prove that by iterating this algorithm for $O(\log^*\Delta)$ repetitions, using different palettes in each iteration, the base-graph degree reduces to $O(\sqrt{\Delta})$. 

In one iteration, given a set $V'$ of uncolored nodes, we want to color a subset $V^* \subseteq V'$ such that the uncolored nodes $V'':=V'\setminus V^*$ have a base-graph degree $\Delta''$ that is sufficiently smaller than the base-graph degree $\Delta'$ of $V'$. Note that the sampling of \Cref{sec:sampling-frugal} only provides us with a partial coloring where every node is likely to have a decrease in the base-graph degree. Here, however, we want to enforce that for every node in $V$ there is such a decrease. To this end, we set up an LLL as follows.

\paragraph{LLL Formulation for ``Progress-Guaranteeing'' Coloring}
 Performing the sampling of \Cref{frugal-iteration}, we have a bad event $D_v$ for every node $v \in V$ that its base-graph degree into $V''$ is larger than $\Delta''=5^{-x}\cdot \Delta'$. By \Cref{frugal-iteration} (ii), we know that the probability of $D_v$ is at most $e^{-\Omega(5^{-x} \cdot \Delta')}$. Moreover, the dependency degree is at most $d\leq \Delta^2$. This LLL thus satisfies the polynomial criterion.

However, as $d$ might be large, we cannot directly apply the LLL algorithm of \Cref{RANDLLL}. In the following, we present an alternative way of finding a partial coloring ensuring a drop in the base-graph degree of every node. In a nutshell, the idea is to just perform one sampling step of a partial frugal coloring, as described in \Cref{sec:sampling-frugal}, and then deal with nodes associated with bad events (to be made precise) separately, by setting up another ``progress-guaranteeing'' LLL. While the first LLL on the whole graph could not be solved directly, the second LLL is formulated only for a ``small'' subset of nodes, which allows an efficient solution. Because of the two trials of solving an LLL, we lose a $2$-factor in the total number of colors.  The proof of the next lemma appears in \Cref{app:FrugalColoring}.

\begin{lemma}\label[lemma]{lemma:frugal-iteration-sampling}
Given a partial $\beta$-frugal coloring with uncolored set $V'$ with base-graph degree $\Delta'$ and a parameter $x\geq 1$ such that $5^{-x} \cdot\Delta' = \Omega(\sqrt{ \Delta})$, there is a $2^{O(\sqrt{\log \log n})}$-round randomized distributed algorithm that computes a partial $\beta$-frugal $(40\cdot x \cdot \Delta' \cdot \Delta^{1/\beta})$-coloring such that the uncolored set has base-graph degree at most $\Delta'' = 5^{-x} \cdot \Delta'$. 
\end{lemma}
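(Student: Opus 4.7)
My plan is to perform a single random sampling step using \Cref{frugal-iteration}, identify the nodes where the base-graph-degree condition fails, and then fix those ``bad'' nodes via a second LLL that is now solvable because the bad set is sparse.

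In the first step, I apply \Cref{frugal-iteration} to $V'$ with parameter $x$ and a fresh palette $\mathcal{P}_1$ of $20\cdot x\cdot\Delta'\cdot\Delta^{1/\beta}$ colors, disjoint from every previously used color. This runs in $O(1)$ rounds and extends the input partial frugal coloring. Let $V_1''\subseteq V'$ denote the nodes still uncolored, and let $B:=\{\,v\in V:|N(v)\cap V_1''|>\Delta''\,\}$, where $\Delta''=5^{-x}\Delta'$. By \Cref{frugal-iteration}(ii) and the precondition $\Delta''=\Omega(\sqrt{\Delta})$, we have $\Pr[v\in B]\leq e^{-\Omega(\Delta'')}\leq e^{-\Omega(\sqrt{\Delta})}$, and whether $v\in B$ depends only on the random choices within $2$ hops of $v$.

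Next, invoking \Cref{Shattering} on an $O(1)$-th power of $G$, and provided $\Delta$ exceeds a sufficiently large constant (so that $e^{-\Omega(\sqrt{\Delta})}\leq\Delta^{-\omega(1)}$), each connected component of the ``bad graph'' spanned by $B$ admits a $(\lambda,O(\log^{1/\lambda}n\cdot\log^2\log n))$ network decomposition, computable deterministically in $\lambda\cdot\log^{1/\lambda}n\cdot 2^{O(\sqrt{\log\log n})}$ rounds. Setting $\lambda=\lceil\sqrt{\log\log n}\rceil$ gives total cost $2^{O(\sqrt{\log\log n})}$. For the remaining small-$\Delta$ regime where the probability bound $e^{-\Omega(\sqrt{\Delta})}$ is not polynomially small in $\Delta$, the progress-guaranteeing LLL itself has dependency degree $d\leq\Delta^2$ small enough that a direct application of \Cref{RANDLLL} (or \Cref{lemma:SpecialSpeedup}) already solves it in $2^{O(\sqrt{\log\log n})}$ rounds.

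I then set up a second ``progress-guaranteeing'' LLL on the bad region: its variables are the choices produced by a fresh invocation of the \Cref{frugal-iteration} sampling on $V_1''\cap N_{\leq 1}(B)$ with parameter $x$ and a disjoint palette $\mathcal{P}_2$ of another $20\cdot x\cdot\Delta'\cdot\Delta^{1/\beta}$ colors, and its bad events are $D_v$ for $v\in B$ asserting that $v$ still has base-graph degree greater than $\Delta''$ into the remaining uncolored set. Using $|N(v)\cap V_1''|>\Delta''$ for $v\in B$, \Cref{frugal-iteration}(ii) yields $\Pr[D_v]\leq e^{-\Omega(5^{-x}\Delta'')}$; combined with dependency degree $d\leq\Delta^2$ and the precondition, this satisfies the polynomial LLL criterion $p(ed)^{\lambda}<1$. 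I solve this LLL by calling \Cref{DETLLL} on each shattered component with the pre-computed network decomposition, in another $2^{O(\sqrt{\log\log n})}$ rounds. The union of the Phase~$1$ and Phase~$2$ colorings uses at most $40\cdot x\cdot\Delta'\cdot\Delta^{1/\beta}$ colors and remains $\beta$-frugal because $\mathcal{P}_1\cap\mathcal{P}_2=\emptyset$; for $v\notin B$ the final base-graph-degree bound is inherited from Phase~$1$, and for $v\in B$ it is enforced by the second LLL. The main obstacle will be carefully verifying that the probability bound $e^{-\Omega(5^{-x}\Delta'')}$ in the second stage is small enough compared to $(ed)^{\lambda}$; this is exactly what the precondition $5^{-x}\Delta'=\Omega(\sqrt{\Delta})$ is tailored to guarantee.
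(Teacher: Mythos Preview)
Your overall two-phase plan (sample once, shatter the bad set, then run a second LLL deterministically on the shattered pieces) is exactly the paper's approach, but two points need fixing.

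First, the probability bound you quote for the second-phase bad events is wrong. When you re-apply \Cref{frugal-iteration} to $V_1''\cap N_{\le 1}(B)$ with parameter~$x$ and base-graph degree bounded by~$\Delta'$, item~(ii) gives $\Pr[D_v]\le e^{-\Omega(5^{-x}\Delta')}=e^{-\Omega(\Delta'')}$, not $e^{-\Omega(5^{-x}\Delta'')}=e^{-\Omega(5^{-2x}\Delta')}$ as you write. Your weaker bound does \emph{not} in general satisfy $p(ed)^{\lambda}<1$ for $\lambda=\sqrt{\log\log n}$, since $5^{-2x}\Delta'$ can be arbitrarily small; the correct bound $e^{-\Omega(\sqrt\Delta)}$ is what the precondition $5^{-x}\Delta'=\Omega(\sqrt\Delta)$ is designed to deliver.

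Second, your case split leaves a genuine gap. You handle ``large~$\Delta$'' by shattering plus \Cref{DETLLL} with $\lambda=\sqrt{\log\log n}$, and you dispatch ``small~$\Delta$'' by a direct call to \Cref{RANDLLL} or \Cref{lemma:SpecialSpeedup}. But \Cref{DETLLL} with $\lambda=\sqrt{\log\log n}$ needs $e^{-\Omega(\sqrt\Delta)}(e\Delta^2)^{\sqrt{\log\log n}}<1$, which fails once $\Delta=o(\log^{2}\log n)$; \Cref{RANDLLL} only applies when the dependency degree $d\le\Delta^2$ is $O(\log^{1/5}\log n)$; and \Cref{lemma:SpecialSpeedup} is not an LLL solver one can call ``directly'' --- it requires you to first exhibit an $O(\log^{1/4} n^*)$-round algorithm for the problem. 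The paper closes exactly this intermediate range $\Delta\in[\omega(\log^{1/5}\log n),\,O(\log^{2}\log n)]$ by running your same two-phase scheme with a \emph{constant} $\lambda$ (giving an $O(\log^{1/4} n^*)$-round algorithm), and only then invoking \Cref{lemma:SpecialSpeedup} to bootstrap it down to $2^{O(\sqrt{\log\log n})}$. You need this third case explicitly; as written, that range of~$\Delta$ is not covered.
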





The next lemma describes how through iterated application of finding partial colorings, as supplied by \Cref{lemma:frugal-iteration-sampling}, the base-graph degree of the uncolored set decreases to $O(\sqrt{\Delta})$ after $O(\log^*\Delta)$ rounds and using at most $O(\Delta^{1 + 1/\beta})$ colors. The related proof appears in \Cref{app:FrugalColoring}.

\begin{lemma}\label[lemma]{lemma:whole-iterations}
There is a $2^{O(\sqrt{\log \log n})}$-round randomized algorithm that computes a partial $\beta$-frugal $(80\cdot\Delta^{1+1/\beta})$-coloring such that
the uncolored set $V'$ has base-graph degree $O(\sqrt{\Delta})$. 
\end{lemma}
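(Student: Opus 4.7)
The plan is to iteratively apply \Cref{lemma:frugal-iteration-sampling}, using a fresh palette of colors in each iteration, until the base-graph degree of the remaining uncolored set has been reduced to $O(\sqrt{\Delta})$. Let $V_0 := V$ with base-graph degree $\Delta_0 = \Delta$, and let $V_i$ and $\Delta_i$ denote the uncolored set and its base-graph degree after $i$ iterations. The sequence of parameters $x_0, x_1, \dots$ must grow fast enough to guarantee a rapid decrease of $\Delta_i$, yet slowly enough to keep the total color budget $O(\Delta^{1+1/\beta})$.

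Concretely, I would set $x_i := 2^i$ and stop at the first index $I$ for which $\Delta_I \leq c\sqrt{\Delta}$, for a suitable absolute constant $c$. By \Cref{lemma:frugal-iteration-sampling}, iteration $i$ uses a fresh palette of at most $40\, x_i\, \Delta_i\, \Delta^{1/\beta}$ colors and leaves an uncolored set $V_{i+1} \subseteq V_i$ whose base-graph degree satisfies $\Delta_{i+1} \leq 5^{-x_i}\Delta_i$. Unrolling the recurrence gives $\Delta_i \leq \Delta \cdot 5^{-(2^i - 1)}$, so $\Delta_i \leq c\sqrt{\Delta}$ after only $I = O(\log\log \Delta)$ iterations; throughout, the precondition $5^{-x}\Delta' = \Omega(\sqrt{\Delta})$ of \Cref{lemma:frugal-iteration-sampling} is maintained by the stopping rule, with $x_I$ possibly shrunk so that the final step lands at $\Delta_{I+1} = \Theta(\sqrt{\Delta})$.

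For the color budget, summing over all iterations yields
\[
\sum_{i=0}^{I} 40\, x_i\, \Delta_i\, \Delta^{1/\beta} \;\leq\; 40\, \Delta^{1+1/\beta}\, \sum_{i=0}^{\infty} \frac{2^i}{5^{2^i - 1}},
\]
and the latter series converges to a small absolute constant thanks to its doubly exponential decay, giving at most $80\, \Delta^{1+1/\beta}$ colors in total after a careful accounting of constants. For the round complexity, each iteration takes $2^{O(\sqrt{\log\log n})}$ rounds by \Cref{lemma:frugal-iteration-sampling}, so with $I = O(\log\log \Delta) \leq O(\log\log n)$ iterations the overall round count is $O(\log\log n) \cdot 2^{O(\sqrt{\log\log n})} = 2^{O(\sqrt{\log\log n})}$.

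The main subtleties I expect are: (a) checking that the global partial coloring assembled from all iterations is a valid partial $\beta$-frugal coloring --- since the palettes are pairwise disjoint across iterations, any properness or frugality violation would have to lie within a single iteration's palette, where \Cref{lemma:frugal-iteration-sampling} already guarantees correctness; (b) tracking constants precisely enough to hit the claimed bound of $80\,\Delta^{1+1/\beta}$ colors; and (c) ensuring that the precondition $5^{-x}\Delta' = \Omega(\sqrt{\Delta})$ of \Cref{lemma:frugal-iteration-sampling} continues to hold at every invocation, which is enforced by the stopping rule together with a possibly reduced $x_I$ in the final iteration. Item (b) is the one I expect to require the most care, since the doubly exponential summand forces nontrivial bookkeeping to keep the multiplicative constant inside the budget dictated by the lemma statement.
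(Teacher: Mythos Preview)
Your proposal is correct and follows essentially the same iterative approach as the paper: repeatedly invoke \Cref{lemma:frugal-iteration-sampling} with fresh palettes and a growing parameter $x_i$, then show that the products $x_i\Delta_i$ decay fast enough to make the total color count a rapidly converging series bounded by $80\,\Delta^{1+1/\beta}$. The only difference is the schedule --- the paper sets $x_0=1$, $x_{i+1}=(5/4)^{x_i}$ to obtain $t=O(\log^{*}\Delta)$ iterations, whereas your choice $x_i=2^i$ yields $I=O(\log\log\Delta)$ iterations; both are absorbed into the $2^{O(\sqrt{\log\log n})}$ bound, and your simpler schedule actually makes the color-count computation a bit more transparent.
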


\subsection{Completing a Partial Frugal Coloring}\label{sec:frugal-completing}
In this section, we describe how, once the base-graph degree is $O(\sqrt{\Delta})$, all the remaining uncolored nodes can be colored, hence completing the partial frugal coloring. We first give a general formulation for the completion of partial frugal colorings. 

\paragraph{LLL Formulation for Completion of Partial Frugal Coloring}
Performing the sampling of \Cref{frugal-iteration}, we have a bad event $U_v$ for every node $v \in V$ that it is uncolored. By \Cref{frugal-iteration} (i), the probability of $U_v$ is at most $10^{-x}$. Moreover, the dependency degree $d$ is at most $\Delta^2$. This LLL satisfies the polynomial criterion if $x=\Omega(\log \Delta)$. 

In the following lemma, the proof of which appears in \Cref{app:FrugalColoring}, we show to solve this LLL. The idea is to first perform one sampling step (of \Cref{frugal-iteration}), which shatters the graph into ``small'' components of uncolored nodes, then to set up an LLL for completing the partial coloring, and finally to solve it by employing our deterministic LLL algorithm, on each of the components. 

\begin{lemma}\label[lemma]{lemma:frugal-color-good}
Given a partial $\beta$-frugal coloring and a set $V'$ of uncolored nodes with base degree $\Delta'=O(\sqrt{\Delta})$, there is a $2^{O(\sqrt{\log \log n})}$-round randomized algorithm that completes this $\beta$-frugal coloring, by assigning colors to all nodes in $V'$, using $40\cdot \Delta^{1+ 1/\beta}$ additional colors.   
\end{lemma}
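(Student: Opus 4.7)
The plan is to reformulate completion as a single LLL instance that satisfies the polynomial criterion $p(ed)^{32}<1$, and then solve it via the shattering template: one randomized sampling step to break the uncolored set into small components, followed by the deterministic LLL algorithm (\Cref{DETLLL}) inside each component.

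Concretely, I would reuse the sampling of \Cref{frugal-iteration} with $x=\Theta(\log \Delta)$ chosen so that $10^{-x}\cdot(e\Delta^2)^{32}<1$. This sampling uses $x$ fresh palettes of size $20\cdot \Delta'\cdot \Delta^{1/\beta}$ each, so a total of $20\cdot x\cdot \Delta'\cdot \Delta^{1/\beta}$ new colors; with $\Delta'=O(\sqrt{\Delta})$ and $x=O(\log \Delta)$, this is at most $40\cdot\Delta^{1+1/\beta}$ once $\Delta$ is large enough (smaller $\Delta$ is trivial in $O(1)$ rounds). The LLL I set up has, for each $v\in V'$, a random variable encoding $v$'s $x$ color samples and a bad event $U_v$ saying that none of $v$'s samples survives the properness/frugality tests. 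By \Cref{frugal-iteration}(i), $\Pr[U_v]\le 10^{-x}$, and $U_v$ depends only on the random variables of $v$ and its $\le \Delta$ neighbors, so the dependency degree is $d\le \Delta^2$; the choice of $x$ gives the polynomial criterion with slack.

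Next, I would run the sampling once. The set of still-uncolored nodes $B\subseteq V'$ satisfies $\Pr[v\in B]\le 10^{-x}\le \Delta^{-c_1}$ for a large constant $c_1$, and membership in $B$ depends only on randomness within $1$ hop, so \Cref{Shattering} applies (with $c_2=1$). Choosing $\lambda:=\lceil \sqrt{\log \log n}\rceil$, each connected component of $B$ then admits a $(\lambda, 2^{O(\sqrt{\log\log n})})$ network decomposition that can be computed deterministically in $2^{O(\sqrt{\log \log n})}$ rounds. On each component I would re-sample with fresh randomness and invoke \Cref{DETLLL} on the LLL restricted to that component, treating the colors of nodes outside the component as a frozen environment when evaluating properness and frugality. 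The polynomial criterion still holds, so by LLL an assignment avoiding all $U_v$ exists, and \Cref{DETLLL} finds one in $O(\lambda(\gamma+1))=2^{O(\sqrt{\log\log n})}$ rounds. Hence every $v\in V'$ receives a color and the $\beta$-frugal coloring is completed, in total $2^{O(\sqrt{\log\log n})}$ rounds.

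The main subtlety I expect is bookkeeping around the ``environment'' of already-colored nodes and around cross-component interactions at shared neighbors: the bad events $U_v$ in a component depend on the frozen colors of neighbors outside $V'$ (via properness) and on frugality constraints at nodes bordering several components. Since the components are non-adjacent in $G_{\mathcal{X}}^2$, their decision variables are disjoint and the events factor cleanly, so independent local decisions never conflict and the criterion is preserved; combined with the fact that the $40\cdot \Delta^{1+1/\beta}$ palette is fresh and thus cannot clash with colors from prior iterations, the analysis then reduces to a standard application of the shattering$+$deterministic-LLL template already used in \Cref{thm:baseLLL}.
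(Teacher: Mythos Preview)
Your overall template (one sampling round, then shattering, then \Cref{DETLLL} on the small components) is the right shape, but the parameter choice $x=\Theta(\log\Delta)$ creates a real gap in the deterministic step. To obtain a $2^{O(\sqrt{\log\log n})}$ running time from \Cref{Shattering}(P3) and \Cref{DETLLL}, you must take $\lambda=\Theta(\sqrt{\log\log n})$; but then \Cref{DETLLL} needs the criterion $p(ed)^{\lambda}<1$, i.e.\ $10^{-x}(e\Delta^2)^{\Theta(\sqrt{\log\log n})}<1$, which forces $x=\Omega(\sqrt{\log\log n}\cdot\log\Delta)$, not merely $x=\Theta(\log\Delta)$. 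Your $x$ was tuned only for the exponent $32$, so the sentence ``the polynomial criterion still holds'' at the \Cref{DETLLL} invocation is false in general. Conversely, if you keep $x=\Theta(\log\Delta)$ and drop to a constant $\lambda$ so that the criterion does hold, the network decomposition (and hence \Cref{DETLLL}) costs $\log^{\Omega(1)} n$ rounds, not $2^{O(\sqrt{\log\log n})}$.

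The paper resolves this tension by taking $x=\Delta/\Delta'=\Omega(\sqrt{\Delta})$, which exactly spends the color budget ($20\cdot x\cdot\Delta'\cdot\Delta^{1/\beta}=20\cdot\Delta^{1+1/\beta}$) and gives $p=e^{-\Omega(\sqrt{\Delta})}$. This is strong enough for $\lambda=\sqrt{\log\log n}$ only when $\Delta=\omega(\log^2\log n)$; the paper therefore splits into four cases by the magnitude of $\Delta$. For large $\Delta$ a union bound suffices; for the middle range it uses exactly your shattering-plus-\Cref{DETLLL} plan with $x=\Omega(\sqrt{\Delta})$; for $\Delta\in[\omega(\log^{1/5}\log n),O(\log^2\log n)]$ it first builds an $O(\log^{1/4} n^*)$-round algorithm (using constant $\lambda$) and then bootstraps it via \Cref{lemma:SpecialSpeedup}; and for $\Delta=O(\log^{1/5}\log n)$ it applies \Cref{RANDLLL} directly. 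Your single-case argument misses the small-$\Delta$ regimes where neither $x=\Theta(\log\Delta)$ nor $x=\Omega(\sqrt{\Delta})$ yields the needed $p(ed)^{\sqrt{\log\log n}}<1$. (A minor additional point: because of the frugality test $F_j$, the event $U_v$ depends on randomness up to distance~$2$, not~$1$, so your $c_2$ and the claimed $d\le\Delta^2$ are slightly off, though this does not affect the asymptotics.)
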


A wrap-up of these results about iterated partial colorings and completing a partial coloring immediately leads to a proof of \Cref{frugal}. 


\begin{proof}[Proof of \Cref{frugal}]
We first apply the iterated coloring algorithm of \Cref{lemma:whole-iterations} with $80\cdot \Delta^{1+1/\beta}$ colors, in $2^{O(\sqrt{\log \log n})}$ rounds. Then, we run the algorithms of \Cref{lemma:frugal-color-good} to complete this partial coloring with $40 \cdot \Delta^{1+1/\beta}$ additional colors, in $2^{O(\sqrt{\log \log n})}$ rounds. 
This yields a $\beta$-frugal $(120 \cdot \Delta^{1+1/\beta})$-coloring, in $2^{O(\sqrt{\log \log n})}$ rounds.   
\end{proof}


\section{List Vertex-Coloring}
\label{app:listColoring}
Given an $n$-node graph $G=(V, E)$, a list vertex-coloring assigns each node $v$ a color from its \emph{color list} $L_v$ such that no two neighboring nodes choose the same color. 
The color lists satisfy the following properties: (1) $|L_v|\geq |L|$ for all $v\in V$. We emphasize that the list size $L$ may be much smaller than the degree $\Delta$. (2) For each $v\in V$ and each color $q\in L_v$, the set $N_{q}(v) = \{u \,|\, u\in N(v) \,\&\, q\in L_u\}$ of neighbors $u$ of $v$ that also have color $q$ in their color list $L_u$ has size $|N_q(v)| \leq L/C$, for a given large constant $C > 2e$. 

Chung, Pettie, and Su \cite{chung2014LLL} gave an $O(\log n)$-round randomized distributed algorithm for list vertex-coloring with $C\geq 2e+\eps$. We here improve this complexity to $2^{O(\sqrt{\log \log n})}$ rounds, for a sufficiently large constant $C$. We remark that we have not tried optimizing this constant $C$. 

\begin{theorem}
There is a $2^{O(\sqrt{\log \log n})}$-round randomized distributed algorithm that w.h.p. computes a list vertex-coloring in an $n$-node graph where each list $L_v$ has size $L$ and for each color $q\in L_v$, we have we have $|N_q(v)|\leq L/C$, for some sufficiently large constant $C > 2e$.
\end{theorem}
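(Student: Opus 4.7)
The plan is to follow the iterated-partial-coloring template developed for frugal coloring in \Cref{sec:FrugalColoring}. The direct LLL formulation for list coloring --- each node draws a uniformly random color from its list and there is one bad event per edge saying its endpoints picked the same color --- yields $p = O(1/L)$ and dependency degree $d = \Theta(\Delta^2)$, so for $C>2e$ it satisfies only the classical $epd\leq 1$ criterion and not the polynomial criterion $p(ed)^{32}<1$ required by \Cref{RANDLLL}. I would therefore decompose the process into $O(\log^{*} L)$ progress-guaranteeing partial list colorings, each of which can be cast as a polynomial LLL, followed by a single completion LLL on a shattered residue.

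First I would set up a partial-coloring sampler in the spirit of \Cref{frugal-iteration}: every uncolored $v\in V'$ draws $x$ trial colors from $x$ disjoint sub-palettes of its current list $L_v$ and keeps the first trial color not simultaneously picked by any neighbor. Because for every $q\in L_v$ the expected number of neighbors of $v$ picking $q$ is at most $|N_q(v)|/L \leq 1/C$, a standard Molloy--Reed-style calculation yields: (E1) $v$ remains uncolored with probability at most $\rho^{x}$, where $\rho=\rho(C)<1$ is independent of $\Delta$; and (E2) the number of still-uncolored neighbors of any $v\in V$ is sharply concentrated, so the probability that this count exceeds $\rho^{x}\cdot\Delta'+s$ decays like $e^{-\Omega(s)}$. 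Deleting from $L_v$ the colors committed by colored neighbors and restricting $N_q(v)$ to the still-uncolored side preserves the invariant $|N_q(v)|\leq |L_v|/C$ up to a controlled constant-factor loss, provided we slightly inflate the initial $C$ so that the effective list-to-conflict ratio never drops below $2e$ across iterations.

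Next I would formulate the progress LLL exactly as in \Cref{lemma:frugal-iteration-sampling}: the bad event $D_v$ for a node $v\in V$ is that its base-graph degree into the next-round uncolored set exceeds $\Delta''=\rho^{x}\cdot\Delta'$. By (E2), $\Pr[D_v]\leq e^{-\Omega(\Delta'')}$ while $d\leq \Delta^{2}$, so choosing $x=\Theta(\log\Delta)$ secures $p(ed)^{32}<1$. For $d\leq O(\log^{1/5}\log n)$ we can feed this LLL directly to \Cref{RANDLLL}; in the high-degree regime we use the same two-stage device as in \Cref{Bucketing} and \Cref{lemma:frugal-iteration-sampling} --- one raw sampling pass on the whole graph, followed by a secondary LLL only on the sparse set of nodes whose base-graph degree failed to drop, solved in $2^{O(\sqrt{\log\log n})}$ rounds via \Cref{lemma:SpecialSpeedup}. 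Iterating this step $O(\log^{*} L)$ times, as in \Cref{lemma:whole-iterations}, drives the base-graph degree of the uncolored set below the shattering threshold $O(\sqrt{L})$ while blowing up the palette by only a constant factor. Once the base-graph degree is $O(\sqrt{L})$, one more sampling pass shatters the residual uncolored graph (via \Cref{Shattering}) into components of polylogarithmic size, on each of which the completion LLL --- whose bad events ``$v$ is uncolored after $\Theta(\log L)$ trials'' have probability $1/\poly(L)$ --- is solved by the deterministic algorithm of \Cref{DETLLL} running on the network decomposition granted by (P3), mirroring \Cref{lemma:frugal-color-good}.

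The main obstacle, as in the frugal case, will be maintaining the invariant $|N_q(v)|\leq |L_v|/C$ throughout the iterations: both $|L_v|$ and $|N_q(v)|$ shrink after each round, and one must argue that the surviving ratio stays strictly below $1/(2e)$ even after $O(\log^{*} L)$ rounds of erosion and the overhead of the two-stage LLL machinery. I expect this to be tractable by piggybacking a Chernoff concentration statement onto the bad events (E1)--(E2), folding the ``too many same-color neighbors'' event and the ``too few list colors survive'' event into the LLL, so that the constant $C$ in the invariant can be chosen large enough at the outset to absorb all cumulative losses; this is the step where the sufficiently-large choice of $C$ in the theorem statement will be pinned down.
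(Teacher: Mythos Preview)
Your proposal takes a genuinely different route from the paper, and it has a real gap.

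\medskip
\textbf{What the paper does.} The paper does \emph{not} iterate partial colorings of nodes. Instead it iteratively \emph{prunes the color lists}: in each of $O(\log L)=O(\log\log n)$ rounds, every node keeps each color in $L_v$ independently with probability $1/2$, and the bad events are (a) $|L'_v|$ fell too far below $|L_v|/2$ and (b) for some $q$, too many neighbors of $v$ kept $q$. Crucially, the whole analysis lives on the \emph{color--choice graph} $H$ (one vertex per pair $(v,q)$, $q\in L_v$), which has maximum degree $O(L)$ regardless of the graph degree $\Delta$. The polynomial LLL for one pruning step has $p=\exp(-\tilde\Theta(\sqrt{L}))$ and $d=O(L^{2})$; the shattering and the case splits (small/medium/large $L$) all operate on $H$, never on $G$.

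\medskip
\textbf{Where your argument breaks.} Your progress LLL has bad events $D_v$ with dependency degree $d\le\Delta^{2}$, where $\Delta$ is the degree of $G$. But in the list-coloring setup the hypothesis bounds only the \emph{color degree} $|N_q(v)|\le L/C$; the graph degree $\Delta$ is \emph{completely unbounded} in terms of $L$ (and the paper explicitly reduces to $L=O(\log^{2} n)$ while saying nothing about $\Delta$). So your claim that ``choosing $x=\Theta(\log\Delta)$ secures $p(ed)^{32}<1$'' is not executable: each node has only $|L_v|=L$ colors, and if you split $L_v$ into $x=\Theta(\log\Delta)$ disjoint sub-palettes, each has size $L/x$ while $|N_q(v)|$ is still as large as $L/C$, so the conflict ratio per trial becomes $\Theta(x/C)\gg 1$ rather than $\le 1/C$. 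Equivalently, the frugal template you are porting buys its $x$ independent trials by spending $x$ \emph{fresh} palettes of size $\Theta(\Delta'\Delta^{1/\beta})$ each; list coloring gives you a fixed budget of $L$ colors per node, so you cannot pay for $\Theta(\log\Delta)$ trials when $\Delta$ is not controlled by $L$.

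The fix is exactly the paper's move: work on $H$ rather than $G$, and shrink \emph{lists and color-neighborhoods together} (so that the invariant $|N_q(v)|\le |L_v|/C$ is preserved by construction), rather than trying to shrink the base-graph degree of an uncolored set.
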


\paragraph{Remark} Essentially without loss of generality, we can focus on the regime where $L=O(\log^2 n)$. This is because if $L=\Omega(\log^2 n)$, we can make each node choose $\log^2 n$ colors in its list at random to retain, forming a new color list $L'_v$ with $|L'_v|=\log^2 n$. Then, with high probability, we have the following property: for each node $v$ and each color $q\in L'_v$, the number of neighbors $u$ of $v$ that have $q\in L'_u$ is at most $(1+o(1))\log^2 n/C$.

\paragraph{Direct LLL Formulation of List Vertex-Coloring} Suppose each node picks a color uniformly at random. Define a bad event $E_{u, v, q}$ for each edge $\{u,v\}$ and color $q$ if its endpoints choose the same color $q$. The probability of each such event is at most $p=(\frac{1}{L})^2$. The dependency degree between these events is at most $d=2L\cdot L/C$. This is because, the event $E_{u, v, q}$ has dependency with the events of at most $L$ colors from each endpoint $u$ or $v$, and at most $L/C$ edges incident on that endpoint for each of these $L$ colors. Hence, if $C> 2e$, the LLL criterion $epd\leq 1$ is satisfied.  

\paragraph{Shortcomings of the Direct LLL Formulation} As before, we face two issues in applying the LLL algorithm \Cref{RANDLLL}: (1) the above formulation does not satisfy the polynomial LLL criterion, (2) the dependency degree $d$ may be above what \Cref{RANDLLL} can handle. 

\paragraph{Iterated LLL Formulation of List Vertex-Coloring via Pruning}
In the following, we explain how through a sequence of gradual pruning of color lists, we can get to our target coloring. A pruning step can be formulated as an LLL which satisfies the polynomial LLL criterion. We also explain how to perform each of these pruning steps, albeit the fact that the related (strengthened) LLL does not have bounded degrees. 

\subsection*{An LLL Formulation for Gradual Pruning of the Color Lists}

\paragraph{A $2$-Factor Pruning of Color Lists} We would like to narrow down the color lists and their conflict sizes by roughly a $2$ factor. More concretely, we would like that each node $v$ keeps a subset $L'_v\subseteq L_v$ such that $|L'_v|\geq \frac{|L_v|}{2} \cdot (1-\frac{1}{\log^2 L})$, and moreover, for each color $q \in L'_v$, the number of neighbors $u$ of $v$ that have $q\in L'_u$ is at most $(1+\frac{1}{\log^2 L})\frac{L}{2C}$. 

By repeating this $2$-factor pruning for (roughly) $\log_2 L/C=O(\log\log n)$ iterations, we get to a setting where each node has a color list of size at least $C/2$, none of which are kept by any neighbor. Then, each node can pick any of these colors as its final color. 

\paragraph{LLL for $2$-Factor Pruning of Color Lists} Let each node $v$ keep each of its colors in $L_v$ with probability $1/2$, forming its new list $L'_v$. We have two types of bad events, first that $|L'_v| \leq \frac{|L_v|}{2} \cdot (1-\frac{1}{\log^2 L})$, and second that for a color $q \in L'_v$, the number of neighbors $u$ of $v$ that have $q\in L'_u$ is more than $(1+\frac{1}{\log^2 L})\frac{L}{2C}$.
By Hoeffding bound, the probability of each of these bad events is at most $p=exp(-\Theta(\sqrt{L}/\log^2 L))$. Each event depends on at most $d=O(L^2/C)$ many others. Thus, this LLL satisfies the polynomial LLL criterion.

\medskip
For $L \leq O((\log\log n)^{1/10})$, we can directly apply the LLL algorithm of \Cref{RANDLLL} to solve the above $2$-factor pruning in $2^{O(\sqrt{\log\log n})}$ rounds. In \Cref{app:savingLLLpruning}, we explain how we solve the other cases of this pruning LLL in $2^{O(\sqrt{\log\log n})}$ rounds.

\subsection*{Acknowledgment} We are grateful to Seth Pettie for sharing an earlier manuscript of \cite{chang2017time} with us.
\bibliographystyle{plainurl}
\bibliography{../ref}

\appendix
\section{Missing Details of \Cref{sec:prelim}: Preliminaries}

\subsection{Deterministic Network Decomposition}\label{App:NetDecomp}
We here explain that by putting together some ideas and algorithm of Awerbuch and Peleg~\cite{awerbuch1990sparse}, Panconesi and Srinivasan\cite{panconesi1992improved}, and Awerbuch et al.\cite{awerbuch1996fast}, we can obtain the deterministic network decomposition algorithm claimed in \Cref{NWD}.

\begin{proof}[Proof of \Cref{NWD}]
We first describe a sequential algorithm for computing a $(\lambda, n^{1/\lambda}\cdot \log n)$ network decomposition, colloquially referred to as \emph{ball carving}. This technique was first presented by Awerbuch and Peleg~\cite{awerbuch1990sparse}. Then, we explain how to transform this sequential ball carving process into an efficient deterministic distributed algorithm, with round complexity $n^{1/\lambda}\cdot 2^{O(\sqrt{\log n})}$, using another network decomposition algorithm of Panconesi and Srinivasan\cite{panconesi1992improved}, and an idea of Awerbuch et al.\cite{awerbuch1996fast}.

\paragraph{Sequential Network Decomposition via Ball Carving} We will decompose the graph into vertex-disjoint blocks $A_1$ to $A_{\lambda}$, such that in the graph $G[A_i]$, each connected component has diameter at most $n^{1/\lambda}\cdot \log n$. We first describe the process of generating the first block $A_1$. The generation of the next blocks is similar.

We choose an arbitrary node $v$ to be the center of a new ball, and we use $B_r(v)$ denote the set of nodes with distance at most $r$ from $v$. Let $r^*$ be the smallest $r \geq 0$ for which $\frac{\left|B_{r}(v)\right|}{\left|B_{r-1}(v)\right|}< 1 + \frac{1}{n^{1/\lambda}}$. Since $\left| B_r(v)\right| \geq \left(1+ \frac{1}{n^{1/\lambda}}\right)^r$ for $r \leq r^*$, it must hold that $r^* \leq n^{1/\lambda}\log n$. We put all the nodes in $B_{r^*-1}(v)$ into $A_1$, and then delete $B_{r^*}(v)$ from $G$. That is, we carve and remove a ball of radius $r^*$ around the node $v$, but then only taking the inner part of it --- nodes, that are not on the boundary --- to $A_1$. Then, we pick another node in this remainder graph, and perform another ball carving; we repeat this process until no node is left. 

Once we are done with defining $A_1$, we remove all nodes of block $A_1$ from the graph $G$, and then work on the remaining graph $G_2=G\setminus A_1$. Then, we create the new block $A_2$, by a similar iterative ball-carving process on $G_2$. More generally, after computing blocks $A_1$ to $A_{i-1}$, we move to the graph $G_i=G\setminus \cup_{j=1}^{i-1} A_j$ and compute the block $A_i$, by a sequential ball-carving process similar to above. 

We now argue that $\lambda$ blocks exhaust the graph. In each iteration $i$ of computing another block, the size of the remaining graph shrinks such that $|G_i| < \frac{1}{n^{1/\lambda}} \cdot |G_{i-1}|$. This is because for each carved ball where we include $B_{r^*-1}(v)$ in $A_i$ and then discard the boundary $B_{r^*}(v)\setminus B_{r^*-1}(v)$, leaving them for the next blocks, we have $|B_{r^*}(v)\setminus B_{r^*-1}(v)| \leq \frac{1}{n^{\eps}} |B_{r^*-1}(v)|$. Since $|G_i| < \frac{1}{n^{1/\lambda}} \cdot |G_{i-1}|$, after $\lambda$ blocks, the remaining graph $G_{\lambda+1}$ is empty, and the process terminates.

\paragraph{Distributed Network Decomposition via Ball Carving} To compute the desired $(\lambda, n^{1/\lambda}\log n)$ network decomposition, we will simulate the ball carving idea explained above. However, we need to speed up the process, and make it run in $\lambda \cdot n^{1/\lambda}\cdot 2^{O(\sqrt{\log n})}$ rounds. For that, we use another network decomposition as a helper tool. In particular, we first compute a $\big(2^{O(\sqrt{\log n})},2^{O(\sqrt{\log n})}\big)$ network decomposition of $G^{d}$ for $d=2n^{1/\lambda}\cdot \log n+1$, by running the algorithm of Panconesi and Srinivasan \cite{panconesi1992improved} on $G^{d}$. This takes $d \cdot 2^{O(\sqrt{\log n})}$ 
rounds. It partitions the graph $G$ into $\ell=2^{O(\sqrt{\log n})}$ vertex-disjoint blocks $G_1, \dotsc, G_{\ell}$ such that for each block $G_i$, each connected component of $G_i$ has diameter at most $d \cdot  2^{O(\sqrt{\log n})}$ in the graph $G$, while any two components of $G_i$ are non-adjacent in $G^{d}$ and thus have distance at least $d+1$ in $G$. 

We now use this network decomposition to compute the desired output $(\lambda, n^{1/\lambda}\cdot \log n)$ network decomposition which partitions $V$ into vertex-disjoint sets $A_1$, $A_2$, \dots, $A_{\lambda}$ such that in each subgraph $G[A_i]$ for $i\in \{1, \dots, \lambda\}$, each connected component has diameter at most $n^{1/\lambda}\cdot \log n$. 
The construction is made of $\lambda$ epochs, each of which computes one of the blocks $A_i$, in $n^{1/\lambda}\cdot 2^{O(\sqrt{\log n})}$ rounds. We next discuss the first epoch, which computes the block $A_1$. The next epochs are similar, and compute the other blocks $A_2$ to $A_{\lambda}$, each repeating the procedure on the remaining graph.

\paragraph{Each epoch in the Construction of $\mathbf{A_1}$} The epoch is broken into $\ell=2^{O(\sqrt{\log n})}$ phases, each of which takes $n^{1/\lambda} \cdot2^{O(\sqrt{\log n})}$ rounds, hence making for an overall round complexity of $n^{1/\lambda}\cdot 2^{O(\sqrt{\log n})}$ for the epoch. We will simulate the ball-carving process of computing $A_1$, throughout these phases. During this process, each node is in one of the following three statuses: some nodes are put in $A_1$ (these are the inner parts of the carved balls), some nodes are processed and discarded (these are the boundaries of the carved balls), and some nodes are unprocessed. 

In the $j^{th}$ phase, we do as follows: Consider the set of nodes of $G_{j}$. Notice that each component of $G_{j}$ has diameter at most $d \cdot  2^{O(\sqrt{\log n})}$, and moreover, each two components have distance at least $d+1$. We first make the minimum-ID node of each of these components learn the $(n^{1/\lambda}\cdot \log n)$-neighborhood of its component, as well as the status of the nodes in this neighborhood. Notice that this information is within distance at most $d \cdot  2^{O(\sqrt{\log n})} + n^{1/\lambda}\cdot \log n =n^{1/\lambda}\cdot 2^{O(\sqrt{\log n})}$ 
from that minimum-ID node, and hence this can be done in $n^{1/\lambda} \cdot 2^{O(\sqrt{\log n})}$ rounds. Then, this minimum-ID node locally simulates the ball-carving process, as follows: each time, it picks another unprocessed node in its component, and then carves the ball around it similar to the sequential ball-carving process explained above. Notice that this ball can potentially go out of $G_{j}$. However, the ball will grow at most $n^{1/\lambda}\cdot \log n$ hops away from its center. Hence, the ball carving processes of two different components of $G_j$ never reach each other, as the components are more than $d > 2n^{1/\lambda}\cdot \log n$ hops apart. Also, notice that the minimum-ID node is doing this process locally, once it has gathered the relevant information, and thus this computation does not consume any further communication. Once the minimum-ID node has computed the newly carved $A_i$-balls centered at the nodes of its component, it informs the related nodes of their status: whether they are in $A_1$, discarded from the $A_1$ due to falling on the boundary, or remaining unprocessed. This finishes the description of the $j^{th}$ phase. We then move to the next phase.  
%
%
%
 %
%
%
%
\end{proof}

\subsection{The Shattering Lemma}\label{app:shattering}

\begin{proof}[Proof of \Cref{Shattering}]
The proof is similar to \cite{barenboim2012locality}; thus we provide only a brief sketch. The existence of such a connected set $U$ in (P1) would imply that $H[B]$ contains a tree on $\log_{\Delta} n$ nodes. There are at most $4^{\log_{\Delta} n}$ different such tree topologies and each can be embedded into $H$ in less than $n\cdot \Delta^{(4c_2+2)\log_{\Delta} n}$ ways. Moreover, the probability that a particular tree occurs in $H[B]$ is at most $\Delta^{-c_1 \log_{\Delta} n}$, since all the nodes have distance at least $2c_2+1$, which means that all the events depend on disjoint sets of coin tosses. 
%
A union bound over all trees thus lets us conclude that such a set $U$ exists with probability at most $4^{\log_{\Delta} n}\cdot n \cdot \Delta^{(4c_2+2)\log_{\Delta} n}\cdot  \Delta^{-c_1 \log_{\Delta}n}\leq n^{-c_3}$. This proves (P1).  

To prove (P2), we observe that a connected component $S$ in $G[B]$ of size $ \log_{\Delta} n \cdot \Delta^{2c_2}$ implies the existence of a connected set $U$ in $H[B]$ of size $\log_{\Delta} n$. To that end, we greedily add nodes from $S$ to $U$ one by one, each time discarding all (at most $\Delta^{2c_2}$ many) nodes within $\paren{2c_2}$-hops of the added node. It follows that every connected component of $G[B]$ has size at most $O\paren{\log_{\Delta} n \cdot \Delta^{2c_2} }$, with probability at least $1- n^{-c_3}$.
%

To prove (P3), we first compute a $(4c_2 +3 , \Theta(\log\log n))$-ruling set of each connected component $C_B$ of $B$ (with regard to the distances in $G$), in $O(\log\log n)$ rounds, using the algorithm of Schneider, Elkin and Wattenhofer\cite{schneider2013symmetry}. See also\cite[Table 4]{barenboim2012locality}. Recall that an $(\alpha, \beta)$-ruling set for $C_B$ means a set $S_B$ of vertices where each two have distance at least $\alpha$, while for each node in $C_B$, there is at least one node in $S_B$ within $\beta$ hops. Then, each node joins the cluster of its nearest ruling set node. We contract the clusters into new nodes, and connect two new nodes if their clusters contain adjacent nodes. Hence, we obtain a new graph where each component has at most $O(\log n)$ nodes, by property (P1). We then compute a $(\lambda, \log^{1/\lambda} n \cdot \log\log n)$ network decomposition of each of these $O(\log n)$-size graphs, independently and all in parallel, using the algorithm described in \Cref{NWD}. Notice that when invoking \Cref{NWD}, we are now working on graphs of size $O(\log n)$, which means the network decomposition that we obtain has parameters $(\lambda, O(\log^{1/\lambda} n \cdot\log\log n))$. At the end, we extend this decomposition to the original graph on nodes of (this connected component of) $B$, where each vertex of $B$ belongs to the block of the network decomposition where its contracted cluster is. This increases the radius of each of the blocks by the radius of the contracted cluster, which is at most an $O(\log\log n)$-factor, hence leading to a $\big(\lambda, O(\log^{1/\lambda} n\cdot  \log^2\log n)\big)$ network decomposition.
\end{proof}

\section{Missing Details of \Cref{sec:GeneralLLLAlgo}: Bootstrapping}
\label{app:GeneralLLLAlgo}
\begin{proof}[Proof of \Cref{lemma:SpecialSpeedup}]
Consider the randomized algorithm $\mathcal{A}_{n^*}$ that solves some LCL problem $\mathcal{P}$ on $n^*$-node graphs with complexity $O(\log^{1/4} n^*)$. We now bootstrap $\mathcal{A}_{n^*}$ using an approach similar to the proof of \Cref{RANDLLL}: In particular, we shall run $\mathcal{A}_{n^{*}}$ on our full graph of $n$ nodes, where we set $n^*=\log n$, while $\mathcal{A}_{n^*}$ is still told that the network size is $n^{*}$. This runs in $O(\log^{1/4} \log n)$ rounds. The probability of each local bad event---i.e., a violation of one of the conditions of $\mathcal{P}$---is at most $p'=1/n^*=1/\log n$. On the other hand, each two of these local bad events that are further than  $O(\log^{1/4} \log n)$ hops apart rely on disjoint random variables in the execution of $\mathcal{A}_{n^*}$. Hence, this new LLL system has dependency degree at most $d'=d^{O(\log^{1/4} \log n)} = 2^{O(\sqrt{\log\log n})}$. Thus, this system satisfies the polynomial LLL criterion with a value of $\lambda=\Theta(\sqrt{\log\log n})$, because $p'(ed')^{\lambda}<1$. Hence, we can solve it using the algorithm of \Cref{thm:baseLLL} in $O((d')^2) + \lambda \cdot (\log n)^{1/\lambda} \cdot 2^{O(\sqrt{\log \log n})}=2^{O(\sqrt{\log\log n})}$ rounds of the new LLL system. Each of these rounds can be performed in $O(\log^{1/4} n^*)=O(\log^{1/4}\log n)$ rounds of the base graph, and thus the overall round complexity of the new algorithm $\mathcal{A}'_{n}$ is $2^{O(\sqrt{\log\log n})}$.
\end{proof}


\begin{proof}[Proof of \Cref{crl:loglogTologStar}]
Consider the randomized algorithm $\mathcal{A}_{n^*}$ that solves the LCL problem $\mathcal{P}$ on $n^*$-node graphs in $o(\log\log n^*)$ rounds. We bootstrap $\mathcal{A}_{n^*}$ using an approach similar to the proof of \Cref{RANDLLL}. 
In particular, we shall run $\mathcal{A}_{n^{*}}$ on our full graph of $n$ nodes, while $\mathcal{A}_{n^*}$ is still told that the network size is $n^{*}$, for a sufficiently large constant value of $n^*$. This runs in $T=o(\log\log n^*)$ rounds. The probability of each local bad event---i.e., a violation of one of the conditions of $\mathcal{P}$---is at most $p'=1/n^*$. On the other hand, each two of these local bad events that are further than  $2T+1=o(\log\log n^*)$ hops apart rely on disjoint random variables in the execution of $\mathcal{A}_{n^*}$. Hence, this new LLL system has dependency degree at most $d'=d^{2T+1}$. Thus, this system satisfies the polynomial LLL criterion with $\lambda\geq (d')^2+1$. That is because $$p(ed')^{d'+1} \leq \frac{1}{n^*} \big(e d^{o(\log\log n^*)}\big)^{d^{o(\log\log n^*)}+1} < \frac{1}{n^*} \big(\sqrt{\log n^*} \big)^{\sqrt{\log n^*}} <1,$$ where the penultimate inequality uses that $d=O(1)$.

On the other hand, we can easily compute a $((d')^2+1, 0)$ network decomposition of the square graph $G^2_{\mathcal{X}}$ of this new LLL's dependency graph, simply by taking a $((d')^2+1)$-coloring of it. Notice that this coloring can be computed in $O(\log^* n)$ time, using the deterministic distributed coloring algorithm \cite{fraigniaud2016local}. Then, we apply the deterministic LLL algorithm of \Cref{DETLLL} on top of this network decomposition, with $\lambda=(d')^2+1$ and $\gamma=0$. The algorithm of \Cref{DETLLL} then runs in $O((d')^2)$ rounds, and solves this LLL, hence providing a solution for the LCL problem $\mathcal{P}$.  Overall, we get a deterministic algorithm with complexity $O(\log^* n)$ for solving the LCL problem $\mathcal{P}$ on bounded degree graphs.
\end{proof}

\section{Missing Details of \Cref{sec:DefectiveColoring}: Defective Coloring}
\label{app:DefectiveColoring}

\begin{proof}[Proof of \Cref{defective-col}]
If $f \geq \Delta$, any assignment of colors to nodes is an $f$-defective coloring. If $f=O(1)$, a proper $O(\Delta)$-coloring is a $O(\Delta/f)$-coloring with defect $0 \leq f$. In this case, we can find such a coloring by running the algorithm of Barenboim et. al. \cite{barenboim2012locality} in $2^{O(\sqrt{\log \log n})}$ rounds. We thus assume in the following that $f < \Delta$ and $f=\omega(1)$.

\paragraph{Parameters}
Let $\Delta_0=\Delta$, and for $i \geq 1$, set $\Delta_i=\log^5 \Delta_{i-1}$, $\eps_i = \log^2 \Delta_{i-1} / \sqrt{\Delta_i} = \log^2 \Delta_{i-1}/\log^{2.5} \Delta_{i-1} = \log^{-0.5} \Delta_{i-1}$, and $k_i=(1+\eps_i)\Delta_{i-1}/\Delta_i$. Moreover, let $t$ be such that $\Delta_t = \omega(f)$ and $\log^5 \Delta_{t}=O(f)$, and set $\Delta_{t+1}=f$, $\eps_{t+1}=\log^2 \Delta_t / \sqrt{f}$, as well as $k_{t+1}=(1+\eps_{t+1})\Delta_t/f$. 

\paragraph{Iterated Bucketing}
We run the bucketing algorithm of \Cref{Bucketing} with $\Delta' \gets \Delta_1$, $\eps \gets \eps_1$, $k \gets k_1$ for at most $3$ iterations (in each iteration, the algorithm is applied to each of the buckets from the previous iteration), until we have reached $\Delta_i = O(\log^{1/10} \log n)$. Then we switch to the direct LLL algorithm of \Cref{RANDLLL}, run for bucketing, and perform it until $i=t$. Then, for $i=t+1$, we apply the LLL algorithm of \Cref{RANDLLL} one last time with $\Delta'\gets \Delta_{t+1}=f$.    

\paragraph{Analysis}
We first observe that $t=O(\log^*\Delta)$. The overall running time thus is $O(t)\cdot 2^{O(\sqrt{\log \log n})}=2^{O(\sqrt{\log \log n})}$, since each of the $t+1$ bucketing iterations takes at most $2^{O(\sqrt{\log \log n})}$ rounds by \Cref{Bucketing,RANDLLL}. Notice that all the buckets on the same level (that is, in the same iteration) are treated independently, in parallel.  

In the end, after these $t+1$ iterations, the resulting $f$-defective coloring has at most $(2k_1) \cdot (2k_2) \cdot (2k_3) \prod_{i=4}^{t+1}k_t \leq 4 \prod_{i=1}^{t+1} (1 +\eps_i) \frac{\Delta_{i-1}}{\Delta_i}\leq 4 (1+ O(\eps_{t+1}))\frac{\Delta}{f}= 8(1+O(f^{-1/10}))\frac{\Delta}{f}$ colors, using that $\eps_i$ is exponentially increasing in $i$ and that $\log^2 \Delta_t = O(f^{2/5})$.
\end{proof}

\begin{proof}[Proof of \Cref{Bucketing}]
We break the $\Delta \geq \Omega(\log^{1/10} \log n)$ range into two sub-ranges, based on whether $\Delta \geq 2^{\Omega({\log^{1/4}\log n})}$ or not. We present the proof for each of these cases separately.

\paragraph{Case 1, $\Delta \geq 2^{\Omega({\log^{1/4}\log n})}$} We assign each node to one of the first $k$ buckets uniformly at random. Then, for each node that has more than $\Delta'$ neighbors in its bucket, we remove it from its bucket and put it into $B$. Note that even though we might have removed a node from its bucket (and added it to $B$) in this way, it still counts as neighbor for other nodes in its bucket. By construction, each of the $k$ buckets has degree at most $\Delta'$. However, we have a set $B$ of nodes not assigned to any bucket. We now show how to perform another bucketing step of $B$ into $k$ additional buckets, by setting up and solving another bucketing LLL. 

By the above observations, a node is put in $B$ with probability at most $e^{-\Omega(\log^2 \Delta)}$. Moreover, the events $1(v \in B)$ depend only on the $1$-hop neighborhood.
By \Cref{Shattering} (P3), we can compute a $(\sqrt{\log \log n}, 2^{O(\sqrt{\log\log n})}\cdot \log^2 \log n)$ network decomposition of (each connected component of) $G[B]$ in $2^{O(\sqrt{\log \log n})}$ rounds, setting $\lambda = \sqrt{\log \log n}$. We now set up a bucketing LLL (in fact, one for each of the connected components of $G[B]$), and invoke the deterministic algorithm of \Cref{DETLLL} on top of this network decomposition. Notice that the criterion $p(ed)^{\lambda}<1$ is satisfied for $\lambda= \sqrt{\log \log n}$, because $p \leq \Delta^{-\Omega(\log^3 \Delta)}$ and $\Delta=2^{\Omega(\log^{1/4} \log n)}$. Hence, the algorithm of \Cref{DETLLL} runs in 
$2^{O(\sqrt{\log\log n})}$ rounds, and computes a bucketing for all the nodes in $G[B]$ into $k$ additional buckets.
%
%
\medskip

\paragraph{Case 2, $\Delta\in [\Omega(\log^{1/10}\log n), 2^{O(\log^{1/4} \log n)}]$}
We first devise a randomized algorithm $\mathcal{A}_{n^*}$ with complexity $O(\log^{1/4} n^*)$ that computes a bucketing of any $n^*$-node graph into $2k$ buckets, each with maximum degree $\Delta'$. Then, we bootstrap this algorithm, using \Cref{lemma:SpecialSpeedup}, to turn it into another bucketing algorithm $\mathcal{A}'_{n}$ that runs in $2^{O(\sqrt{\log \log n})}$ on any $n$-node graph with $\Delta \leq 2^{O(\log^{1/4} \log n)}$.

Algorithm  $\mathcal{A}_{n^{*}}$ performs a simple bucketing by putting each node in one of the first $k$ buckets, chosen uniformly at random. For each node that has more than $\Delta'$ neighbors in its bucket, we remove it from its bucket and put it into a set $B$ of bad nodes. By \Cref{Shattering} (P3), we can compute a $(8, O(\log^{1/4} n^*))$ network decomposition of each connected component of the subgraph induced by $B$, in $O(\log^{1/4} n^*)$ rounds. We can then invoke the deterministic LLL algorithm of \Cref{DETLLL} to compute a bucketing of these bad nodes of $B$ into the second $k$ buckets, in no more than $O(\log^{1/4} n^*)$ rounds, by setting $\lambda=8$ in \Cref{DETLLL}. This completes the description of the bucketing algorithm $\mathcal{A}_{n^*}$ with complexity $O(\log^{1/4} n^*)$.

We now bootstrap $\mathcal{A}_{n^*}$ using \Cref{lemma:SpecialSpeedup} to obtain a bucketing algorithm $\mathcal{A}'_{n}$ that runs in $2^{O(\sqrt{\log \log n})}$ rounds, on any $n$-node graph with maximum degree at most $2^{\Theta(\log^{1/4} \log n)}$.
\end{proof}

\section{Missing Details of \Cref{sec:FrugalColoring}: Frugal Coloring}
\label{app:FrugalColoring}

\begin{proof}[Proof of \Cref{frugal-iteration}]
We let $C:=20 \cdot  \Delta' \cdot \Delta^{1/\beta}$. Every node $v\in V'$ picks $x$ tentative colors $c_j(v)$ for $j \in [x]$ uniformly at random from $x$ disjoint palettes, each of size $C$. 
Then node $v$ gets permanently colored with the first color that does not lead to a conflict, if there is at least one such color. 

For the sake of analysis, we think of these $x$ samplings happening sequentially in steps $1 \leq j \leq x$. 
%
We introduce two type of events. An event $$M_j(v):=\left\{\exists u \in N(v,V') \colon c_j(u)=c_j(v)\right\}$$ if $v$ has a monochromatic incident edge in step $j$, in other words, if there is a node in $V'$ adjacent to $v$ that picks the same color as $v$ in step $j$. An event $$F_j(v):=\left\{ \exists u \in N(v), u_1, \dotsc, u_{\beta}\in N\paren{u, V' \setminus \{v\}} \colon c_j\paren{u_i}=c_j(v) \text{ for all } i \in [\beta] \right\}$$ if $v$ is involved in a non-frugal neighborhood of some node $u\in V$ with respect to its own $j^{th}$ tentative color choice. With $U_0=V'$, we let $M_j$ and $F_j$ denote the set of nodes $v \in U_{j-1}$ for which $M_j(v)$ and $F_j(v)$, respectively, holds. We discard the tentative colors of all nodes in $U_j:=M_j \cup F_j$ and make the tentative color of nodes in $K_j:=U_{j-1}\setminus U_j$ permanent. 

We note that even though a node $v\in V'$ might have a permanent color, and thus be part of $K_j$ for some $j$, it still participates in the color sampling for $j'>j$, and can lead to conflicts with other nodes in $u\in U_{j'}$, forcing $u$ to uncolor itself, while $v$ stays colored permanently.

\paragraph{Validity of the Coloring}
Let $K:=\bigcup_{j=1}^x K_j$ be the set of nodes that succeed in finding a permanent color, and let $U:=U_x=V' \setminus K$ denote the set of all nodes that remain uncolored. 
It is easy to see that the coloring of the nodes in $K$ uses at most $x\cdot C$ colors and is a partial $\beta$-frugal coloring. 

\paragraph{Properties (i) and (ii)}
We bound the probability of a single node in $V'$ being uncolored in a single step, show that every node has many different colors in its neighborhood in each step, and then conclude that the number of uncolored neighbors of a node must decrease in every step by arguing about each color separately. Finally, we combine these results about a single step to derive a bound on the number of uncolored neighbors of a node after all $x$ trials. 
%
%

\paragraph{Probability of Being Uncolored}
For $j \in [x]$ and $v \in U_{j-1}$, we have $$Pr[v \in U_j]\leq Pr[M_j(v)]+Pr[F_j(v)]\leq \frac{\Delta'}{C}+ \Delta \cdot {{\Delta'-1}\choose \beta} \cdot \frac{1}{C^{\beta}} \leq \frac{\Delta'}{C} + \frac{\Delta \cdot (\Delta')^{\beta}}{C^{\beta}} \leq 1/10.$$ Moreover, $Pr[v \in U] \leq \left(1/10\right)^x$, as different steps use different palettes. This proves (i). 

\paragraph{Number of Different Colors in Neighborhood in One Step}
We show that with large probability a node $v\in V$ has many different tentative colors in the neighborhood $N\paren{v,U_{j-1}}$ of size $d$.
For an arbitrary ordering $u_1, \dotsc, u_{d}$ of the neighbors of $v$, we introduce random variables $X_i$ which indicate whether node $u_i$ has a color different from all the colors of nodes $u_1, \dotsc, u_{i-1}$ in step $j$. Then $X:=\sum_{i=1}^{d} X_i$ is the total number of different colors in the neighborhood $N\paren{v,U_{j-1}}$ of $v$. We have $\mathbb{E}[X]\geq d- \frac{1}{C}\sum_{i=1}^{d} (i-1) \geq 
\left(1-\frac{1}{2\cdot 20}\right)d$. 
A Chernoff bound, applicable due to negative correlation of $X_i$, yields
$Pr\left[X < (1-1/20) d\right] \leq Pr\left[X \leq (1-1/40)\mathbb{E}[X]\right]\leq e^{-\mathbb{E}[X]/4800}\leq e^{-d/5000}.$
 
\paragraph{Degree in Uncolored Graph in One Step}
We first observe that the events $u\in U_j$ and $w\in U_j$ for nodes with $c_j(u)\neq c_j(w)$ are negatively correlated (conditioned on their colors). Intuitively speaking, when there is a conflict with one color, it is unlikelier or as unlikely that there is a conflict with another color. For nodes of the same color, however, these events might be positively correlated. 

For the moment, we suppose that $v\in V$ has $t$ many different tentative colors in its neighborhood in $U_{j-1}$ of size $d(v, U_{j-1})=d$ and let $u_1, \dotsc, u_t \in U_{j-1}$  be nodes having the respective colors. By the above observations, each of the events $u_i \in U_j$ has probability at most $1/10$ and these events are negatively correlated. It follows by a Chernoff bound that the probability that more than $(3t)/20$ of these nodes are uncolored is at most $e^{-t/120}$. 

Having more than $d/5$ uncolored nodes in the neighborhood $N\paren{v,U_{j-1}}$ means having less than $\left(1-1/5\right)d$ colored nodes in the neighborhood $N\paren{v,U_{j-1}}$. That in particular means having less than $\left(1-1/5\right)d$ colored nodes among $u_1, \dotsc, u_t$, which is in other words having more than $t-(1-1/5)d$ uncolored  among $u_1, \dotsc, u_t$. Thus, since $t-(1-1/5)d\geq (3d)/20$ for $t \geq \left(1-1/20\right)d$, we have 
\begin{equation*}\begin{aligned}Pr\left[d\paren{v,U_{j-1}} >d/5\right] &= \sum_{t=1}^{d}Pr\left[d\paren{v,U_{j-1}} >d/5 \mid X=t \right]\cdot Pr[X=t]
 \\ &\leq e^{-d/5000}+ \sum_{t=\paren{1-1/20}d}^{d} Pr\left[d\paren{v,U_{j-1}} > d/5 \mid X=t \right]
\\ &\leq e^{-d/5000} + \sum_{t=\paren{1-1/20}d}^{d}  e^{-t/120}
\leq e^{-d/5000} + (d/20)\cdot e^{-\frac{\paren{1-1/20}d}{120}}
\\ &\leq e^{-d/10000}.
\end{aligned}\end{equation*} 

\paragraph{Degree in Uncolored Graph After $x$ Trials}
It follows by a union bound over all $x$ steps that 
$Pr\left[d\paren{v,U_x} > 5^{-x} \cdot\Delta'\right] \leq \sum_{i=1}^x e^{-(1/10000)\cdot 5^{-i} \cdot \Delta'}=e^{-\Omega(5^{-x} \cdot \Delta')}$,
where the last step follows from the fact that $e^{-5^{-i}}$ is (doubly-)exponentially increasing in $i$. 
%
\end{proof}

\begin{proof}[Proof of \Cref{lemma:frugal-iteration-sampling}]
We handle the problem in four cases, depending on the range of $\Delta$.

\medskip
\paragraph{Case 1, $\Delta=\omega(\log^2 n)$} We perform a sampling as described in \Cref{frugal-iteration}. Then, by \Cref{frugal-iteration} (ii), the probability of a node having more than $\Delta''=5^{-x} \Delta'$ uncolored neighbors is at most $p=e^{-\omega(\log n)}$. A simple union bound over all nodes shows that, with high probability, there is no such node whose base degree remains high.

\medskip
\paragraph{Case 2, $\Delta \in [\omega(\log^2 \log n), O(\log^2 n)]$} We perform a sampling as described in \Cref{frugal-iteration}. This gives us a partial $\beta$-frugal $(20\cdot x \cdot \Delta' \cdot \Delta^{1/\beta})$-coloring, assigning colors to a subset $W \subseteq V'$. Let $D \subseteq V$ denote the set of nodes in $V$ that have degree larger than $\Delta''$ into the uncolored set $U=V' \setminus W$, and let $B=N(D) \cap U$ be all the neighbors in $U$ of such nodes in $D$. We will show how to partially color (some) nodes in $B$ with additional $20 \cdot x \cdot \Delta' \cdot \Delta^{1/\beta}$ colors such that the base-graph degree into the uncolored set $B'$ in $B$ drops to $\Delta''$. This is a partial $\beta$-frugal $(40\cdot x \cdot \Delta' \cdot \Delta^{1/\beta})$-coloring such that no node in $V$ has more than $\Delta''$ neighbors in the uncolored set $(U\setminus B )\cup B'$, as desired.




Consider the square graph $G^2[B]$, i.e., the graph on vertices of $B$ where each two $B$-nodes whose $G$-distance is at most $2$ are connected. 
By \Cref{frugal-iteration} (ii), the probability of a node $u$ being in $D$ is $e^{-\Omega(5^{-x} \cdot \Delta')}$. Hence, the probability of $v$ being in $B$ is at most $\Delta \cdot e^{-\Omega(5^{-x} \cdot \Delta')}= e^{-\Omega(\sqrt{\Delta})}$, by a union bound over all neighbors $u \in N(v)$ of $v$, due to the assumption that $5^{-x} \cdot \Delta'=\Omega(\sqrt{ \Delta})$. Moreover, only the events $u \in B$ for nodes with distance at most $3$ in $G$, and hence distance at most $6$ in $G^2$, are dependent. \Cref{Shattering} (P3) thus shows that the connected components of $G^2[B]$  w.h.p. admit a $(\lambda, O(\log^{1/\lambda}\cdot \log^2\log n))$ network decomposition, which we can compute in $\lambda \cdot \log^{1/\lambda} n \cdot 2^{O(\sqrt{\log \log n})}$ rounds, deterministically.  

We now set up a polynomial LLL for a ``progress-guaranteeing'' partial frugal coloring on the (uncolored) set $B$ (with all the base graph nodes $V$), as discussed in \Cref{LLLPartialFrugal}. In fact, we set up one independent such LLL for each component of $G^{2}[B]$. Notice that the colorings of different components do not interfere with each other, as each $V$-node has neighbors in at most one of these components. This LLL satisfies the stronger condition $p(ed)^{4\lambda}$ of \Cref{DETLLL} for $\lambda=\sqrt{\log \log n}$. We can thus apply the deterministic algorithm of \Cref{DETLLL} on top of this network decomposition, which runs in $O(\lambda \cdot \log^{1/\lambda}\cdot \log^2 \log n)$ rounds. Overall, this takes $\lambda \cdot \log^{1/\lambda} n \cdot 2^{O(\sqrt{\log \log n})}+ O(\lambda \cdot \log^{1/\lambda}\cdot \log^2 \log n)=2^{O(\sqrt{\log \log n})}$ rounds, and gives us a partial coloring of nodes in $B$ such that the base-graph degree to the nodes that remain uncolored has dropped to $\Delta''$.

\medskip
\paragraph{Case 3, $\Delta \in [\omega(\log^{1/5} \log n), O(\log^2 \log n)]$ } We first devise an algorithm $\mathcal{A}_{n^{*}}$ that performs the desired partial frugal coloring in $O(\log^{1/4} n^*)$ rounds, in $n^{*}$-node graphs. Then we use \Cref{lemma:SpecialSpeedup} to speed up this partial frugal coloring algorithm to run in $2^{O(\sqrt{\log\log n})}$ rounds. 

The algorithm $\mathcal{A}_{n^{*}}$ is similar to the process we described in case $2$: it first performs a sampling according to \Cref{frugal-iteration}, then defines bad nodes $B$ for this sampling similar to before. The only difference is that, when solving the LLL of each of the components of $G^2[B]$, we may not have the desired polynomial LLL criterion satisfied for $\lambda=\Omega(\sqrt{\log\log n^*})$. Still, the condition is satisfied for any desirable large constant $\lambda=O(1)$. Hence, the deterministic algorithm of \Cref{DETLLL} solves these remaining components in no more than $O(\log^{1/4} n^*)$, with probability $1-1/\poly(n^{*})$. This is the desired algorithm $\mathcal{A}_{n^{*}}$ for partial frugal coloring.

Now, we invoke \Cref{lemma:SpecialSpeedup} to speed up this partial frugal coloring algorithm $\mathcal{A}_{n^{*}}$ to run in $2^{O(\sqrt{\log\log n})}$ rounds, with high probability, on any $n$-node graph with maximum degree at most $\Delta=O(\log^2 \log n)$. 
Notice that we are able to do this because the maximum degree $\Delta=O(\log^2 \log n)$ is (even far) below the requirement $\Delta\leq 2^{\Theta(\log^{1/4}\log n)}$ of \Cref{lemma:SpecialSpeedup}. Thus, we get an algorithm $A_{n}$ for partial frugal coloring, that w.h.p. in $2^{O(\sqrt{\log\log n})}$ rounds colors a subset of $V'$ such that the base-degree to the nodes that remain uncolored is at most $\Delta''$.
\newpage
\paragraph{Case 4, $\Delta \in O(\log^{1/5} \log n)$} Here, we can directly apply the LLL algorithm of \Cref{RANDLLL} to the ``progress-guaranteeing'' LLL. This yields, w.h.p., a drop in the degree to $\Delta''$, in $2^{O(\sqrt{\log\log n})}$ rounds.
\end{proof}

\begin{proof}[Proof of \Cref{lemma:whole-iterations}]
We first set the parameters, then formalize the exact meaning of iterating the sampling process of \Cref{lemma:frugal-iteration-sampling}, and finally analyze the number of colors as well as rounds used.

\paragraph{Parameters}
We let $x_0=1$, $x_{i+1}:= \left(5/4\right)^{x_i}$, $\Delta_0=\Delta'$, and $\Delta_{i+1} = 5^{-x_i}\cdot \Delta_i$ for $0\leq i \leq t$ for a $t=O(\log^* \Delta)$ such that $\Delta_{t+1}=O(\sqrt{\Delta})$ for the first time (that is, $\Delta_t = \omega(\sqrt{\Delta})$). 

\paragraph{Iterated Sampling}
Iteratively, for $0 \leq i\leq t$, we apply the sampling algorithm of \Cref{lemma:frugal-iteration-sampling} with $V' \mapsto V_i$, $\Delta' \mapsto \Delta_i$, and $x \mapsto x_i$ to obtain a partial $\beta$-frugal coloring with $C_i:=40 \cdot x_i \cdot \Delta_i \cdot \Delta^{\frac{1}{\beta}}$ many new colors, leaving a set $V_{i+1} \subseteq V_i$ with base-graph degree at most $\Delta_{i+1}$ uncolored.



\paragraph{Analysis}
Intuitively, as we know that the number of colors needed to find a $\beta$-frugal coloring decreases with $\Delta_i$, we can afford to use more and more disjoint palettes when $\Delta_i$ is small. For $x_i \cdot \Delta_i = \Delta$, this would mean that we use the same number of colors in each iteration $i$. However, this would lead to a total number of $t \cdot 40 \cdot \Delta^{1 + 1/\beta}$ colors over all the $t$ iterations. Instead, we ensure that $x_{i+1} \cdot \Delta_{i+1}$ is at least a constant factor smaller than $x_{i} \cdot \Delta_i$, which guarantees that the total number of colors used behaves like a geometric series. Indeed,
$
\sum_{i=0}^t 40 x_i \cdot \Delta_i \cdot \Delta^{1/\beta}= 40\cdot\Delta^{1+\frac{1}{\beta}}\cdot\sum_{i=0}^t 2^{-\sum_{j=0}^i x_j} \leq 40 \cdot \Delta^{1+\frac{1}{\beta}}\sum_{i=0}^t 2^{-i}\leq 80 \cdot \Delta^{1+\frac{1}{\beta}}.$

Finally, observe that each of the $O(\log^* \Delta)$ iterations takes $2^{O(\sqrt{\log \log n})}$ rounds by \Cref{lemma:frugal-iteration-sampling}, which proves that the overall round complexity is $2^{O(\sqrt{\log \log n})}$.
\end{proof}

\begin{proof}[Proof of \Cref{lemma:frugal-color-good}]
We handle the problem in four cases, depending on the range of the values of $\Delta$, similar to the proof of \Cref{lemma:frugal-iteration-sampling}.

\medskip

\paragraph{Case 1, $\Delta=\omega(\log^2 n)$} We perform one sampling step as described in \Cref{frugal-iteration} with $x=\Delta/\Delta'=\Omega(\sqrt{\Delta})$. Then, by \Cref{frugal-iteration} (ii), the probability of a node remaining uncolored is at most $p=e^{-\omega(\log n)}$. A simple union bound over all nodes shows that, with high probability, no node remains uncolored.

\medskip
\paragraph{Case 2, $\Delta \in [\omega(\log^2 \log n), O(\log^2 n)]$} We perform one sampling step as described in \Cref{frugal-iteration} with $x=\Delta/\Delta'=\Omega(\sqrt{\Delta})$. This gives a partial $\beta$-frugal coloring with $x\cdot 20 \cdot \Delta' \cdot \Delta^{1/\beta}=20 \cdot \Delta^{1+1/\beta}$ colors. Let $V'' \subseteq V'$ be the set of nodes that remain uncolored. The probability of a node being in $V''$ is at most $10^{-x}=e^{-\Omega(\sqrt{\Delta})}$, by \Cref{frugal-iteration} (i). Moreover, uncoloring for nodes with distance at least 3 in $G$, and hence at least $5$ in $G^2$, are independent. Thus, \Cref{Shattering} (P3) implies that the connected components of $G^2[V'']$ w.h.p. admit a $(\lambda, O(\log^{1/\lambda}n \cdot \log^2 \log n))$ network decomposition, which can be computed in $\lambda \cdot \log^{1/\lambda}n \cdot 2^{O(\sqrt{\log \log n})}$ rounds, deterministically. 

We set up LLLs with $x=\Delta/\Delta'$ for completing a partial frugal coloring, one for each connected component of $G^2[V'']$. Notice that the colorings of different components do not interfere with each other, as each $V$-node has neighbors in at most one of these components.  Setting $\lambda=\sqrt{\log \log n}$, we have $p (ed)^{\lambda} = (10)^{-\sqrt{\Delta}}(O(\Delta))^{2\sqrt{\log \log n}}< 1$. 
Thus, the even stronger condition of \Cref{DETLLL} for $\lambda = \sqrt{\log \log n}$ is satisfied, which lets us find a solution for the LLL, and hence a completion of the $\beta$-frugal coloring, in additional $\lambda \cdot \log^{1/\lambda} n \cdot \log^2 \log n$ rounds, on each of the connected components of $G^2[V'']$ in parallel.
Overall, this takes $\lambda \cdot \log^{1/\lambda}n \cdot 2^{O(\sqrt{\log\log n})} +\lambda \cdot \log^{1/\lambda} n \cdot \log^2 \log n = 2^{O(\sqrt{\log \log n})}$ rounds.  
\paragraph{Case 3, $\Delta \in[\omega(\log^{1/5} \log n), O(\log^2 \log n)]$} We first devise an algorithm $\mathcal{A}_{n^{*}}$ that completes the given partial frugal coloring in $\Theta(\log^{1/4} n^*)$ rounds, in $n^{*}$-node graphs. Then, we use \Cref{lemma:SpecialSpeedup} to speed up this coloring completion algorithm to run in $2^{O(\sqrt{\log\log n})}$ rounds. 

The algorithm $\mathcal{A}_{n^{*}}$ is similar to the process we described in case $2$: it first performs a sampling according to \Cref{frugal-iteration} with $x=\Delta/\Delta'=\Omega(\sqrt{\Delta})$, then defines bad nodes $B$ for nodes that remain uncolored, and handles each of connected components of $G^2[B]$ with a new LLL. The only difference is that, when solving the LLL of each of the components of $G^2[B]$, we may not have the desired polynomial LLL criterion satisfied for $\lambda=\Omega(\sqrt{\log\log n^*})$. Still, the condition is satisfied for any desirably large constant $\lambda$. Hence, the deterministic algorithm of \Cref{DETLLL} can solve these remaining components in at most $\Theta(\log^{1/4} n^*)$ rounds, with local correctness probability at least $1-1/\poly(n^{*})$. This is the desired algorithm $\mathcal{A}_{n^{*}}$ for the completion of the coloring.

Now, we invoke \Cref{lemma:SpecialSpeedup} to speed up this frugal coloring completion algorithm $\mathcal{A}_{n^{*}}$ to run in $2^{O(\sqrt{\log\log n})}$ rounds, with high probability, on any $n$-node graph with maximum degree at most $\Delta=O(\log^2 \log n)$. Notice that we are able to do this because the maximum degree $\Delta=O(\log^2 \log n)$ is (even far) below the requirement $\Delta\leq 2^{\Theta(\log^{1/4}\log n)}$ of \Cref{lemma:SpecialSpeedup}. Thus, we get an algorithm $A_{n}$ for completing the partial frugal coloring that, in $2^{O(\sqrt{\log\log n})}$ rounds, colors all the remaining uncolored nodes, with high probability.

\vspace{-2pt}
\paragraph{Case 4, $\Delta \in O(\log^{1/5} \log n)$} Here, we can directly apply the LLL algorithm of \Cref{RANDLLL} to the LLL for completing a partial frugal coloring. This gives us an algorithm that w.h.p. completes the given partial frugal coloring in $2^{O(\sqrt{\log\log n})}$ rounds.
\end{proof}
\vspace{-20pt}
\section{Missing Details of \Cref{app:listColoring}: List Vertex-Coloring}
\label{app:savingLLLpruning}
\subsection*{Solving the LLL for $2$-factor Pruning of Color Lists}
\vspace{-5pt}
For $L \leq O((\log\log n)^{1/10})$, we can directly apply the LLL algorithm of \Cref{RANDLLL} to solve the above $2$-factor pruning in $2^{O(\sqrt{\log\log n})}$ rounds. In the following, we discuss how we handle the remaining case $L\in [\Omega(\log^{1/10}\log n), O(\log^2 n)]$. We break this range into two cases, depending on whether $L\geq \Omega(\log^4\log n)$ or not. A key part in both will be a somewhat gradual sampling of which colors to retain in the list. To perform that sampling with an appropriate speed (to be made precise), we will use defective colorings, as we discuss next. 

\vspace{-2pt}
\paragraph{Color-Choice Graph} Consider a graph $H$ where we include one vertex $(v, q)$ for each color $q\in L_v$ of each node $v$. Two vertices $(v, q)$ and $(u, q')$ are connected if and only if either (1) $v=u$, or (2) $v$ and $u$ are adjacent and $q=q'$. 

\vspace{-2pt}
\paragraph{Defective Coloring of the Color-Choice Graph} Notice that the color-choice graph $H$ has maximum degree at most $L+L/C\leq 2L$. We compute a defective coloring $\chi$ of $H$ with defect $f=L/(2\log^2 L)$ and $O((\frac{2L}{f})^2) = O(\log^4 L)$ colors, in $O(\log^* n)$ rounds, using the deterministic algorithm of Kuhn\cite{kuhn2009weak}. We use this defective coloring mainly to schedule which colors $(v, q)$ are sampled to be kept in the list. 

\vspace{-2pt}
\paragraph{Sampling the Colors, in $O(\log^4 L)$ Phases} We have $K_0\log^4 L$ phases, for some constant $K_0$, one per color class of the schedule-color $\chi$. During each phase $i$, we sample each of the colors $(v, q) \in H$ who has $\chi$-color $i$, with probability $1/2$, for inclusion in $L'_v$. At the end of the phase, we check two properties, and potentially freeze some of the unset color-choices in $H$, meaning that we will not sample these, and we defer the decision on them to some later process. This freezing is done as follows: If for a node $v$, we had $z_v \geq L/(16K_0 \log^6 L)$ many of its colors $(v, q)$ that were sampled in this phase, but less than $z_v/2 - L/(16K_0\log^6 L)$ of them turned out to be included in $L'_v$, then we freeze node $v$ and all of its unsampled colors $(v, q')$. Moreover, if for a node $v$ and a color $q\in L_v$, in this phase we sampled at least $z_{v, q} \geq L/(16K_0 C \log^6 L)$ of colors $(u, q)$ in neighboring nodes of $v$, but more than $z_{v,q}/2 + L/(16K_0C \log^6 L)$ of them turned out to be included in their respective lists $L'_u$, then we freeze all unsampled colors $(u, q')$, for any $q'$, in neighbors $u$ of $v$. At the end of all the phases, if a node $v$ has less than $L/(2\log^2 L)$ frozen colors $(v, q)$, we discard all of these colors and none of them will be included in $L'_v$.

\begin{lemma}\label[lemma]{lem:shatteringOfListColoring} Each connected component of the graph $H^2$ induced by frozen colors admits, for any integer $\lambda\geq 1$, a $(\lambda, O(\log^{1/\lambda}n \cdot \log^2 \log n))$ network decomposition which can be computed in $\lambda \cdot \log^{1/\lambda} n \cdot 2^{O(\sqrt{\log \log n})}$ rounds, deterministically. 
\end{lemma}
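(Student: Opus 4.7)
The proof plan is a direct application of the Shattering Lemma from the preliminaries, with base graph $H^2$ and random subset $B \subseteq V(H)$ equal to the set of color-choices that are frozen at the end of the phase schedule. Since $H$ has maximum degree at most $2L$, $H^2$ has maximum degree $\Delta^{*} \leq 4L^2$; to feed the hypotheses of the Shattering Lemma, it therefore suffices to show (a) $\Prob{(v,q)\in B} \leq (\Delta^{*})^{-c_1}$ for a sufficiently large constant $c_1$, and (b) the event $(v,q)\in B$ depends only on the coin flips within a constant number of $H^2$-hops from $(v,q)$. Once these hypotheses are verified, property (P3) of the Shattering Lemma immediately yields a $(\lambda, O(\log^{1/\lambda} n \cdot \log^2\log n))$ network decomposition of every connected component of $H^2[B]$, constructible deterministically in $\lambda \cdot \log^{1/\lambda} n \cdot 2^{O(\sqrt{\log\log n})}$ rounds, which is exactly the claim of the lemma.

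Locality (b) is straightforward. A freeze at $(v,q)$ can only be triggered either by a Type A event at $v$, which inspects only coin flips at $(v,q')$ for $q' \in L_v$, or by a Type B event at some pair $(w, q')$ with $w \in N_G(v)$, which inspects coin flips at $(u, q')$ for $u \in N_G(w)$. In either case the relevant $H$-vertices have underlying $G$-vertex within $G$-distance $2$ of $v$, and thus lie within a constant number of hops in $H^2$.

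For the probability bound (a), I would argue as follows. The thresholds in both freezing rules are set to a $\Theta(1/\log^2 L)$-fraction of the corresponding per-phase expected sample counts, and those expected counts are themselves $\Omega(L/\log^4 L)$. By Hoeffding's inequality, any single-phase, single-witness bad event has probability at most $\exp(-\Omega(L/\log^{8} L))$. Union-bounding over the $K_0 \log^4 L$ phases and over the local witnesses — $v$ itself for Type A, and, using the color-compatibility bound $|N_{q'}(v)| \leq L/C$ to restrict attention to witnesses $(w,q')$ whose color $q'$ actually appears in a node of $v$'s $2$-hop $G$-neighborhood, a count of at most $\poly(L)$ for Type B — gives $\Prob{(v,q)\in B} \leq \exp(-\Omega(L/\log^{8} L))$. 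In the regime $L = \Omega(\log^{1/10}\log n)$ in which the lemma is invoked, and with the constants in the freezing thresholds chosen sufficiently large, this bound is comfortably below $(4L^2)^{-c_1}$ for any prescribed constant $c_1$.

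The main obstacle will be the Type B union bound: a priori, the set of triggering pairs $(w, q')$ with $w \in N_G(v)$ has size $\deg_G(v) \cdot L$, and $\deg_G(v)$ is not bounded by a polynomial in $L$. My plan to overcome this is to exploit the color-sharing constraint $|N_{q'}(v)| \leq L/C$ to restrict the effective witness count to those pairs $(w, q')$ whose color $q'$ actually has support in the local color-choice neighborhood of $(v, q)$, and — at a second level — to absorb any residual polynomial overhead into the Hoeffding exponent by tuning the constant $K_0$ appearing in the freezing thresholds, so that the exponent $L/\log^{8} L$ dominates $\log L$ by any prescribed constant factor in the target regime.
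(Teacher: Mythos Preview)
Your overall plan---apply property (P3) of the Shattering Lemma to $H^2$ with $B$ the set of frozen color-choices---is exactly what the paper does, and your Hoeffding estimate for a single triggering event is fine. The gap is in your verification of locality.

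You write that the relevant $H$-vertices ``have underlying $G$-vertex within $G$-distance $2$ of $v$, and thus lie within a constant number of hops in $H^2$.'' This implication is false: $H$-distance is \emph{not} controlled by $G$-distance. Two color-choices $(v,q)$ and $(w,q')$ with $v$ and $w$ adjacent in $G$ need not be close in $H$ at all; if $q' \notin L_v$ and $q \notin L_w$ there is no two-step $H$-path between them, and they may even lie in different $H$-components. Under your broad reading of the Type~B rule (that it freezes the colors of \emph{every} $G$-neighbor of $v$), the event ``$(w,q')$ is frozen'' is therefore not determined by coin flips within any bounded $H$-radius, so the Shattering Lemma hypothesis fails outright. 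This is also exactly why your union bound acquires the uncontrollable factor $\deg_G(v)$; your proposed fix via the color-sharing constraint does not remove it, because under the broad reading the triggering pairs $(w,q')$ range over all of $N_G(v)$ regardless of which colors they carry.

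The resolution is to read the Type~B rule narrowly, which is what the paper's $5$-hop claim implicitly relies on: the event at $(v,q)$ freezes only the colors $(u,q'')$ for $u \in N_q(v)$, i.e., for those $G$-neighbors of $v$ that actually have $q$ in their list. With this reading, if $(w,q')$ is frozen by Type~B at $(v,q)$ then $q \in L_w$ and $v \in N_q(w)$, so $(v,q)$ is at $H$-distance at most $2$ from $(w,q')$ via $(w,q)$, and the coin flips determining the trigger sit at $H$-distance at most $3$. Moreover, the number of possible Type~B triggers for $(w,q')$ is now at most $|L_w| \cdot \max_q |N_q(w)| \le L \cdot L/C = \poly(L)$, and after multiplying by the $K_0\log^4 L$ phases this is absorbed by the $\exp(-\tilde{\Omega}(\sqrt{L}))$ per-event bound. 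Both of your obstacles disappear simultaneously, and the Shattering Lemma applies with constant $c_2$ as required.
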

\begin{proof}Follows from \Cref{Shattering} (P3), and the observation that the probability of each color getting frozen is at most $exp(-\tilde{\Omega}(\sqrt{L}))$, and the freezing of colors $(v, q)\in H$ that are more than $5$ hops apart in $H$ depend on disjoint random bits.
\end{proof}

\paragraph{A new LLL for completion of the pruning} Consider the set of frozen colors, and the following new LLL for determining the inclusion of each of these frozen colors in their respective pruned lists, each included with probability $1/2$. We have two bad events: (I) $\mathcal{E}_{v}$ if a node $v$, which has $f_{v} \geq L/(2\log^2 L)$ frozen colors $(v, q)$, less than $f_v/2 - L/(2\log^2 L)$ of these colors get chosen for inclusion in $L'_v$, (II) $\mathcal{E}_{v, q}$ for a node $v$ and a color $q\in L_v$, which has $f_{v, q}$ frozen colors $(u, q)$ in neighboring nodes $u$ of $v$, if more than $f_{v, q}/2 + L/(2C\log^2 L)$ of these frozen colors get chosen for inclusion in their respective pruned lists $L'_u$.

\begin{observation} If we find a fixing for the frozen colors without allowing any of the bad events in the completion LLL to happen, then the overall lists $L'_v$ satisfy the requirements of $2$-factor pruning. Moreover, in this completion LLL, each bad event has probability at most $p=exp(-\tilde{\Omega}(\sqrt{L}))$ and they have dependency $d=O(L^2)$. 
\end{observation}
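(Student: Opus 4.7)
The plan is to verify the three claims of the observation in turn: (i) any completion-LLL-satisfying assignment yields a valid 2-factor pruning, (ii) each bad event has probability $\exp(-\tilde\Omega(\sqrt L))$, and (iii) the dependency degree is $O(L^2)$.

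For (i), fix a node $v$ and decompose its list as $L_v = S_v \sqcup N_v \sqcup F_v$, where $S_v$ (resp.\ $N_v$) collects the colors that received a ``heads'' (resp.\ ``tails'') coin flip during the $K_0 \log^4 L$ sampling phases before $v$ was frozen, and $F_v$ collects the $f_v$ frozen colors. To lower bound $|L'_v| = |S_v| + |\text{frozen colors chosen by the fixing}|$, I use that no freezing check involving $v$ was triggered in any pre-freeze phase: in each such phase either the phase is ``small'' ($z_v^{(i)} < L/(16K_0\log^6 L)$, so its entire contribution of at most $z_v^{(i)}/2$ may be lost) or the inclusion count is at least $z_v^{(i)}/2 - L/(16K_0\log^6 L)$. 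Summing the two deficit sources over $K_0 \log^4 L$ phases, and adding the $L/(2\log^2 L)$ slack from $\mathcal{E}_v$ not occurring (if $f_v \geq L/(2\log^2 L)$) or from the at most $L/(2\log^2 L)$ discarded frozen colors (otherwise), yields $|L'_v| \geq L/2 - O(L/\log^2 L)$, which is $\geq (L/2)(1 - 1/\log^2 L)$ once the hidden constants are tuned so that the cumulative slack fits inside the target $L/(2\log^2 L)$. The mirror-image argument on the pair $(v, q)$, using the per-$(v,q)$ upper-tail check and $\mathcal{E}_{v,q}$ not occurring, establishes the symmetric conflict bound $|\{u \in N_q(v) : q \in L'_u\}| \leq (1 + 1/\log^2 L) \cdot L/(2C)$ for every $v$ and every $q \in L'_v$.

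For (ii), each $\mathcal{E}_v$ is the lower-tail event of a $\mathrm{Binomial}(f_v, 1/2)$ variable deviating from its mean by $L/(2\log^2 L)$, so Hoeffding gives $\Pr[\mathcal{E}_v] \leq \exp\!\bigl(-L^2/(2 f_v \log^4 L)\bigr) \leq \exp(-L/(2\log^4 L))$, using $f_v \leq L$. Analogously, $\mathcal{E}_{v,q}$ is an upper-tail deviation of $\mathrm{Binomial}(f_{v,q}, 1/2)$ by $L/(2C\log^2 L)$ with $f_{v,q} \leq L/C$, giving $\Pr[\mathcal{E}_{v,q}] \leq \exp(-L/(2C \log^4 L))$. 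Both of these are comfortably smaller than $\exp(-\tilde\Omega(\sqrt L))$.

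For (iii), the random variables of the completion LLL are the $1/2$-coins, one per frozen color $(v,q)$. An event $\mathcal{E}_v$ inspects the at most $L$ coins $\{(v,q) : (v,q)\text{ frozen}\}$, and an event $\mathcal{E}_{v,q}$ inspects the at most $L/C$ coins $\{(u,q) : u \in N_q(v),\ (u,q)\text{ frozen}\}$. Conversely, a single coin $(w, q_0)$ is inspected by at most one type-I event ($\mathcal{E}_w$) and by at most $1 + |N_{q_0}(w)| \leq 1 + L/C$ type-II events (namely $\mathcal{E}_{v, q_0}$ for $v \in \{w\} \cup N_{q_0}(w)$, using the symmetry $w \in N_{q_0}(v) \iff v \in N_{q_0}(w)$). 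Multiplying support size by per-coin co-occurrence bounds the dependency of any event by $\max\bigl(L\cdot(L/C+2),\ (L/C)\cdot(L/C+2)\bigr) = O(L^2)$, as claimed. The main obstacle I foresee is the constant-tracking in step (i): the algorithm's inner thresholds $1/(16 K_0 \log^6 L)$ and $1/(2\log^2 L)$ must be chosen so that the accumulated per-phase plus completion-LLL slack stays within the outer target $L/(2\log^2L)$; steps (ii) and (iii) are, by contrast, one-shot Hoeffding and neighborhood-counting calculations.
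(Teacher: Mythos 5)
Since the paper states this observation without an explicit proof, I can only judge your argument on its own merits. Your parts~(ii) and~(iii) are correct and are essentially the only sensible route: Hoeffding on the binomial tails implicit in $\mathcal{E}_v$ and $\mathcal{E}_{v,q}$ in fact gives $\exp(-\Omega(L/\log^4 L))$, which is even smaller than the stated $\exp(-\tilde{\Omega}(\sqrt{L}))$, and the coin/event double-counting cleanly yields $d=O(L^2)$.

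Part~(i) has a concrete gap. Your deficit accounting for a node $v$ covers only the phases in which ``no freezing check involving $v$ was triggered,'' but it omits the phase in which $v$ actually gets frozen. In that phase the colors of $v$ that were sampled retain their (bad) outcomes: the type-I check fires exactly because the included count fell below $z_v/2 - L/(16K_0\log^6 L)$, and $z_v$ can be as large as the defect bound $f+1 = L/(2\log^2 L)+1$, so that single phase can contribute a deficit close to $L/(4\log^2 L)$. Together with the $L/(2\log^2 L)$ slack granted by $\mathcal{E}_v$ not occurring --- which by itself already saturates the entire target budget $(L/2)\cdot(1/\log^2 L)$ --- and your $\approx 3L/(32\log^2 L)$ from the pre-freeze phases, the total overshoots by a constant factor. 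The ``constant tuning'' you invoke therefore cannot be confined to the inner $L/(16K_0\log^6 L)$ thresholds; it must also shrink the $L/(2\log^2 L)$ deviation thresholds and defect $f$, and the freezing-phase term should be tracked explicitly rather than absorbed into an unexamined $O(\cdot)$. A second, subtler issue afflicts your mirror-image argument: the bound $z_{v,q}\leq f+1$ follows from the defective coloring only when $(v,q)$ itself lies in the schedule-color class currently being sampled; when it does not, $z_{v,q}$ may be as large as $L/C$, and the excess incurred in the phase where the type-II check for $(v,q)$ fires is then uncontrolled. This second issue looks like a gap in the algorithm's bookkeeping rather than in your proof logic, and since the paper offers no proof either, it deserves to be flagged rather than smoothed over.
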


\begin{lemma} If $L \geq (\log\log n)^4$, then we can solve the completion LLL on each of the connected components of $H^2$ on frozen colors in $2^{O(\sqrt{\log\log n})}$ rounds.
\end{lemma}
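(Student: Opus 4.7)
The plan is to reduce the problem to a direct application of the deterministic LLL algorithm of \Cref{DETLLL} on top of the network decomposition guaranteed by \Cref{lem:shatteringOfListColoring}, after verifying that the polynomial LLL criterion is satisfied with a sufficiently large exponent $\lambda$.

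First, I would fix the target exponent $\lambda = \lceil \sqrt{\log \log n} \rceil$ and check the criterion $p(ed)^{\lambda} < 1$. Recall from the observation preceding the lemma that $p \leq \exp(-\tilde{\Omega}(\sqrt{L}))$ and $d = O(L^2)$. Using the standing assumption $L = O(\log^2 n)$ (from the remark reducing to this regime) together with the lemma's hypothesis $L \geq (\log\log n)^4$, we have $\log d = O(\log L) = O(\log\log n)$ and $\sqrt{L} \geq (\log\log n)^2$. Hence
\[
p(ed)^{\lambda} \leq \exp\bigl(-\tilde{\Omega}((\log\log n)^2)\bigr)\cdot 2^{O(\sqrt{\log\log n}\cdot \log\log n)} < 1,
\]
so the polynomial LLL criterion of \Cref{DETLLL} is comfortably satisfied with this $\lambda$. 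The lower bound on $L$ is exactly what is needed here: a smaller $L$ would make $p$ too large compared to $d^\lambda$.

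Next, I would apply \Cref{lem:shatteringOfListColoring} with the same $\lambda$ to obtain, for each connected component of $H^2$ restricted to the frozen colors, a $(\lambda, \gamma)$ network decomposition with $\gamma = O(\log^{1/\lambda} n \cdot \log^2\log n)$, computed deterministically in $\lambda \cdot \log^{1/\lambda} n \cdot 2^{O(\sqrt{\log\log n})}$ rounds. Since $\log^{1/\lambda} n = 2^{\log\log n / \sqrt{\log\log n}} = 2^{\sqrt{\log\log n}}$, both $\gamma$ and this preprocessing time are $2^{O(\sqrt{\log\log n})}$.

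Finally, I would invoke the second part of \Cref{DETLLL}, which says that given a $(\lambda, \gamma)$ network decomposition of $G_{\mathcal{X}}^2$ the deterministic LLL algorithm runs in $O(\lambda(\gamma+1))$ rounds, in parallel on every component. Plugging in our parameters yields a running time of $O(\sqrt{\log\log n}\cdot 2^{\sqrt{\log\log n}}\cdot \log^2\log n) = 2^{O(\sqrt{\log\log n})}$ rounds on each component. Adding the decomposition time, the overall complexity of solving the completion LLL is $2^{O(\sqrt{\log\log n})}$ rounds, as claimed. The only subtle step is the criterion check above; everything else is a clean composition of \Cref{lem:shatteringOfListColoring} and \Cref{DETLLL}.
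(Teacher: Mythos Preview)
Your proposal is correct and follows essentially the same approach as the paper: verify that the completion LLL satisfies $p(ed)^{\lambda}<1$ for $\lambda=\Theta(\sqrt{\log\log n})$ using $p=\exp(-\tilde{\Omega}(\sqrt{L}))$, $d=O(L^2)$, and the hypothesis $L\ge(\log\log n)^4$, then apply \Cref{lem:shatteringOfListColoring} to obtain the network decomposition and \Cref{DETLLL} on top of it. Your write-up is in fact more explicit than the paper's in working out the parameter arithmetic.
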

\begin{proof} For $L \geq (\log\log n)^4$, the new LLL that we set for completing the pruning satisfies $p(ed)^{\lambda}<1$ for $\lambda=\Omega(\sqrt{\log\log n})$, as it had per-event probability $p=exp(-\tilde{\Omega}(\sqrt{L}))$ and dependency degree $d=O(L^2)$. Hence, we can apply the deterministic algorithm of \Cref{DETLLL}, on top of the network decomposition supplied by \Cref{lem:shatteringOfListColoring} for each of the connected components of $H^2$ on the frozen colors, both with parameter $\lambda=\Omega(\sqrt{\log\log n})$. Thus, we get an algorithm for completing the sampling in $2^{O(\sqrt{\log\log n})}$ rounds.
\end{proof}

\begin{lemma} If $L \leq O(\log^4\log n)$, we can solve the $2$-factor list pruning LLL in $2^{O(\sqrt{\log\log n})}$ rounds.
\end{lemma}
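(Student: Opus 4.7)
The plan is to mirror the two-stage strategy used in the prior lemma (for $L\geq\Omega(\log^4\log n)$) and in Case~3 of the frugal-coloring completion: first build a base algorithm $\mathcal{A}_{n^*}$ whose complexity is $O(\log^{1/4} n^*)$ on $n^*$-node instances, and then invoke \Cref{lemma:SpecialSpeedup} to bootstrap its complexity to $2^{O(\sqrt{\log\log n})}$ on the original $n$-node instance. The key observation making this work is that the LLL dependency degree in our setting is $d=O(L^2)=O(\log^{8}\log n)$, which sits comfortably below the tolerance $d\leq 2^{O(\log^{1/4}\log n)}$ required by \Cref{lemma:SpecialSpeedup}.

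Concretely, $\mathcal{A}_{n^*}$ runs the same phase-based sampling described for the $L\geq\Omega(\log^4\log n)$ case: compute a defective coloring of the color-choice graph $H$ with $O(\log^4 L)$ schedule classes using \cite{kuhn2009weak} in $O(\log^* n^*)$ rounds, then process these classes one-by-one, freezing nodes/colors whose one-phase concentration bounds fail. This produces a set of frozen color-nodes, and by \Cref{lem:shatteringOfListColoring} (instantiated with an appropriately chosen constant $\lambda$) each connected component of $H^2$ restricted to the frozen set admits a $(\lambda,O(\log^{1/\lambda}n^*\cdot\log^2\log n^*))$ network decomposition computable in $\lambda\cdot\log^{1/\lambda}n^*\cdot 2^{O(\sqrt{\log\log n^*})}$ rounds deterministically; with a sufficiently large constant $\lambda$ this cost is $O(\log^{1/4}n^*)$. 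On each component we set up the completion LLL that decides the outcomes of the frozen colors, as in the preceding lemma. This LLL has per-event probability $p=\exp(-\tilde\Omega(\sqrt{L}))$ and dependency degree $d=O(L^2)$; because $L=\Omega(\log^{1/10}\log n)$, the polynomial criterion $p(ed)^{\lambda}<1$ holds for any desirably large constant $\lambda$ (though typically not for $\lambda=\Omega(\sqrt{\log\log n^*})$, which is precisely why the previous lemma's argument does not apply here). Invoking \Cref{DETLLL} on top of the supplied network decomposition with this large constant $\lambda$ then solves the completion LLL in $O(\log^{1/4}n^*)$ additional rounds, so the overall cost of $\mathcal{A}_{n^*}$ is $O(\log^{1/4}n^*)$.

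Having $\mathcal{A}_{n^*}$ with complexity $O(\log^{1/4}n^*)$, I then apply \Cref{lemma:SpecialSpeedup}: the relevant dependency degree $d=O(L^2)\leq 2^{O(\log^{1/4}\log n)}$ satisfies its hypothesis, so we obtain an algorithm $\mathcal{A}'_n$ that runs on the original $n$-node instance in $2^{O(\sqrt{\log\log n})}$ rounds, w.h.p.. By the same observation used in the preceding lemma, a successful joint resolution of the phase sampling and the completion LLL yields pruned lists $L'_v$ that meet both $2$-factor pruning requirements (list-size shrinkage and per-color conflict shrinkage), so we are done.

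The main obstacle is selecting the constants consistently in the intermediate regime $L\in[\Omega(\log^{1/10}\log n),O(\log^4\log n)]$: the same $\lambda$ must be simultaneously (a) a large enough constant for the completion LLL to satisfy $p(ed)^{\lambda}<1$, (b) small enough that the component-diameter $O(\log^{1/\lambda}n^*\cdot\log^2\log n^*)$ supplied by \Cref{lem:shatteringOfListColoring} and the corresponding deterministic LLL cost from \Cref{DETLLL} both fit into the $O(\log^{1/4}n^*)$ budget demanded by \Cref{lemma:SpecialSpeedup}, and (c) compatible with the freezing probability bound $\exp(-\tilde\Omega(\sqrt{L}))$ underlying \Cref{lem:shatteringOfListColoring} in this low-$L$ regime. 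All remaining steps are a direct reuse of the machinery already deployed earlier in the excerpt.
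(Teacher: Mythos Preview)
Your proposal is correct and follows essentially the same approach as the paper: build a base algorithm $\mathcal{A}_{n^*}$ of complexity $O(\log^{1/4} n^*)$ by running the phase-based sampling and then solving the completion LLL on the frozen components via \Cref{DETLLL} with a suitably chosen $\lambda$, and then bootstrap via \Cref{lemma:SpecialSpeedup} using that $d=O(L^2)\leq 2^{O(\log^{1/4}\log n)}$. The only minor point you omit is the paper's explicit remark that one may assume $L=O(\log^2 n^*)$ without loss of generality (so that the $O(\log^4 L)$ sampling phases fit within the $O(\log^{1/4} n^*)$ budget for arbitrary $n^*$), but in the regime $L=O(\log^4\log n)$ with $n^*=\log n$ this is automatically satisfied, so it does not affect the validity of your argument.
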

\begin{proof} 
We first devise an algorithm $\mathcal{A}_{n^*}$ which solves the $2$-factor list pruning problem in $O(\log^{1/4} n^*)$ rounds on any $n^*$-node graph, with probability $1-1/n^*$. Then, we use \Cref{lemma:SpecialSpeedup} to speed up this algorithm to solve $n$-node list prunings in $2^{O(\sqrt{\log\log n})}$.

\paragraph{Base algorithm $\mathcal{A}_{n^*}$} As mentioned above, when we target correctness probability $1-1/n^*$, we can without loss of generality assume that $L=O(\log^2 n^*)$. Then, we first perform the $O(\log^4 L)$ rounds of partial sampling of the pruning, as explained above (by going through a defective coloring, and then sampling each of its colors, one by one). We then are left with a number of connected components of the color-choice graph $H^2$ on frozen colors, and new completion LLL for each of them, as described above. Each of these new LLLs satisfies the polynomial LLL criterion $p(ed)^{\lambda}<1$ for $\lambda=\omega(1)$, because it has per-event probability $p=exp(-\tilde{\Omega}(\sqrt{L}))$ and dependency degree $d=O(L^2)$. Hence, we are able to deterministically solve these completion LLLs in $\lambda (\log n^*)^{1/\lambda} 2^{O(\sqrt{\log\log n^*})}$ rounds, using \Cref{DETLLL}. Therefore, the overall complexity is at most $O(\log^4 L) + O(\log^{1/4} n^*) \leq O(\log^{1/4} n^*)$.  

\paragraph{Speed up} Now, we can apply \Cref{lemma:SpecialSpeedup} to speed up this algorithm $\mathcal{A}_{n^*}$. In particular, the procedure of the proof of \Cref{lemma:SpecialSpeedup} will set $n^{*}=\log n$ and then, it will run $A_{n^*}$ on the $n$-node graph, hence forming a new LLL that satisfies a much better exponent of the polynomial LLL criterion, concretely $\lambda=\Omega(\log\log n)$. See \Cref{lemma:SpecialSpeedup} for details. As a result of solving that LLL, we get an algorithm $\mathcal{A}'_n$ that performs the $2$-factor list pruning in $2^{O(\sqrt{\log\log n})}$ rounds, on any $n$-node graph. We note that we have $\leq (\log\log n)^4$ and thus the dependency degree in the pruning LLL is $d=O(L^2) \ll 2^{O(\log^{1/4} \log n)}$, which satisfies the requirement of \Cref{lemma:SpecialSpeedup}.
\end{proof}

\end{document}